\newcommand{\diff}{\,\mathrm d}
\newtheorem{assumption}{Assumption}
\newtheorem{remark}{Remark}
\newtheorem{theorem}{Theorem}
\newtheorem{lemma}{Lemma}
\newtheorem{definition}{Definition}
\def\BibTeX{{\rm B\kern-.05em{\sc i\kern-.025em b}\kern-.08em
    T\kern-.1667em\lower.7ex\hbox{E}\kern-.125emX}}
\begin{document}
\title{Data-Driven Safe Output Regulation of Strict-Feedback Linear Systems with Input Delay}
\author{Zhenxu Zhao, Ji Wang, Weiyao Lan
\thanks{Z. Zhao, J. Wang, W. Lan are with Department of Automation, Xiamen University, Xiamen 361005, China. (E-mail addresses: zhaozhenxu@stu.xmu.edu.cn (Z. Zhao),
jiwang@xmu.edu.cn (J. Wang), wylan@xmu.edu.cn (W. Lan).)}
}

\maketitle

\begin{abstract}
This paper develops a data-driven safe control framework for linear systems possessing a known strict-feedback structure, but with most plant parameters, external disturbances, and input delay being unknown. By leveraging Koopman operator theory, we utilize Krylov dynamic mode decomposition (DMD) to extract the system dynamics from measured data, enabling the reconstruction of the system and disturbance matrices. Concurrently, the batch least-squares identification (BaLSI) method is employed to identify other unknown parameters in the input channel. Using control barrier functions (CBFs) and backstepping, we first develop a full-state safe controller. Based on this, we build an output-feedback controller by performing system identification using only the output data and actuation signals as well as constructing an observer to estimate the unmeasured plant states. The proposed approach achieves: 1)  finite-time identification of a substantial set of unknown system quantities, and 2) exponential convergence of the output state (the state furthest from the control input) to a reference trajectory while rigorously ensuring safety constraints. The effectiveness of the proposed method is demonstrated through a safe vehicle platooning application.
\end{abstract}

\begin{IEEEkeywords}
Data-driven control, safe control, output regulation, input delay, Koopman operator, dynamic mode decomposition
\end{IEEEkeywords}

\section{Introduction}
\subsection{Output Regulation}
Output regulation, which achieves asymptotic tracking of reference trajectories and disturbance rejection, has long been an important topic in control theory. As the fundamental work, Internal Model Principle (IMP) was first established in \cite{FRANCIS1976457} for linear systems, which incorporates an exosystem model into the feedback controller. The work in \cite{45168} then extended this to the nonlinear dynamics by deriving the regulator equations, and a general framework was developed in \cite{huang2004nonlinear}.  In addition to ODE systems, output regulation has also been studied for PDEs\cite{outpde8,outpde9}.
A trajectory tracking result, based on a backstepping design, was presented in \cite{outpde2}, which considered an ODE with unknown input delay--a problem closely related to hyperbolic PDEs. Article \cite{outpde4} addressed in-domain point outputs, later expanding to heterodirectional systems with constant coefficients \cite{outpde5},  and  \cite{outpde6, outpde7} further generalized this to spatially varying coefficients, establishing a link between the control design and regulator equations.
In parallel,  some results have been obtained for parabolic PDEs. The work in \cite{outpde10} combined backstepping and flatness approaches to design an output-tracking boundary controller. Similarly, \cite{outpde11} addressed output regulation for boundary-controlled PDEs subject to disturbances on both the domain and the uncontrolled boundary. The scope of this research has further expanded to include cooperative output-tracking problems involving time-varying structures \cite{outpde12,outpde13}.

\subsection{Delay compensation}
Time delays, which often exist in practice, are well-known to have a significant impact on the stability of control systems.
The backstepping predictor-based delay compensation introduced in \cite{bib4,bib5}, where the delay is treated as a transport PDE, has been demonstrated to be a powerful method for handling delays. This method has been successfully extended to nonlinear and other complex delay systems in \cite{NBbook,delay10,delay11,delay12,varydelay,statedelay,Diagne1,Diagne2,karafyllis2017predictor}, constituting significant advances in delay compensation, especially for systems with long delays. A more detailed literature review on this can be found in Sec 3.3.1 of \cite{VAZQUEZ2026112572}. For uncertain delays,
Lyapunov-based designs have been employed to develop delay-adaptive controllers in \cite{bib15,delayadaptive1,delayadaptive2,delayadaptive3}. Recently, a new delay-adaptive design \cite{delay9} was proposed for coupled PDEs, leveraging a batch least-squares identifier (BaLSI) that was originally introduced in \cite{bal1,bal2} for nonlinear ODEs, and extended to PDEs in \cite{bal3,bal4},  achieving finite-time exact identification of the uncertain delay and exponential regulation of the plant. 

\subsection{Safety-Critical Control in Complex Environments}

{Safety is paramount in modern autonomous systems, where one must rigorously guarantee that system trajectories remain within a prescribed safe set. Control Barrier Functions (CBFs), introduced in \cite{ames2017control,bib25}, have proven effective for guaranteeing safety. However, practical deployment is hindered by complex real-world conditions, including time delays \cite{9867262}, external disturbances \cite{XU201554}, and output-feedback constraints \cite{9867262}.  Recent work \cite{Zhao2025} integrates predictor feedback into CBF designs, using high-relative-degree CBFs \cite{bib26, nguyen2016exponential} rooted in non-overshooting techniques \cite{nonover1} to ensure safety under arbitrary-length delays.  Besides, the work in \cite{outsafe} addressed safe output regulation for hyperbolic PDE-ODE cascades. }

{However, the above approaches typically assume full knowledge of the system model, whereas real-world systems often contain substantial uncertainties.  Robust CBF methods \cite{jankovic2018robust,8405547} address such uncertainties by enforcing safety under worst-case conditions, but this inevitably leads to conservative designs.  Adaptive CBFs (aCBFs) \cite{bib29}, analogous to adaptive Control Lyapunov Functions, offer an alternative but ensure safety only by restricting the state to a conservative subset of the nominal safe set. Although the BaLSI-based aCBF scheme \cite{adasafe1} reduces this conservatism by ensuring the system recovers the original safe set after a finite transient, the number of unknown parameters is limited, which is unsuitable for systems with largely unknown parameters.}

\subsection{Data driven with DMD and Koopman operator}
The challenge of obtaining accurate mathematical models in complex environments has driven interest in data-driven control \cite{HOU20133,data1}, where system dynamics is captured directly from data. Koopman operator theory \cite{koop1,mauroy2020koopman} is an effective tool for data-driven analysis and control, mapping nonlinear dynamics into an infinite-dimensional observable space where the evolution becomes linear. Dynamic Mode Decomposition (DMD) \cite{kutz2016dynamic,data3},  provides a practical finite-dimensional approximation of the Koopman operator directly from data \cite{KORDA2018149}, whose standard form is noise-sensitive and requires full-state data. By exploiting the intrinsic link between DMD and Krylov subspace methods \cite{dmd2}, Krylov DMD utilizes Hankel matrices to construct observable bases, enabling more robust extraction of Koopman modes and eigenvalues. A pioneering study \cite{data2} leveraged this framework to solve regulator equations solely from output data.
While data-driven control provides a powerful paradigm for stabilizing systems with unknown dynamics, establishing rigorous safety guarantees, which are crucial in practical applications, remains largely unresolved, with limited progress to date \cite{safedata1, safedata2, safedata3, datasafe4}.   Most existing approaches are constrained by intensive offline verification, high online computational costs, and a focus on discrete-time settings that neglect the impact of input delays. These challenges limit their applicability in practice and motivate the work presented in this paper.

\subsection{Main contributions}
In this paper, we develop a data-driven CBF-based safe control scheme for linear strict-feedback systems with significant model uncertainties, unknown external disturbances, and unknown input delay.
The main contributions of this work are:
\begin{enumerate}
	\item Compared with data-driven output regulation results \cite{data2,10502160}, we address a more challenging case where the plant is subject to unknown input delay and safety constraints.
	\item 
	Unlike recent safe adaptive methods \cite{Zhao2025,adasafe1}, which address only a small set of unknown parameters, the present work addresses a setting in which nearly the entire plant, including disturbances and input delays, is unknown, and moreover achieves safe output regulation solely using measurable output signals.
	\item 
	{In contrast to recent data-driven safe control methods \cite{safedata2, safedata3}, we develop a CBF-based controller for continuous-time systems with unknown input delays, removing the assumptions for bounds of unknown external disturbances and for the rank of data matrices. }

    \item 
	{To the best of our knowledge, this is the first result of data-driven safe control based on CBF with the Koopman operator
and Krylov DMD, even if the input delay is excluded. }
	
\end{enumerate}

\subsection{Organization}
The problem is formulated in Sec. \ref{sec2}. In Sec. \ref{secfull}, we present the identification of all unknown parameters and the data-driven safe control design under full state access.  Sec. \ref{secout} addresses the output-feedback case, where identification and safe output regulation are achieved using only input-output data. The effectiveness of the proposed design is verified in Sec. \ref{secexample} with a vehicle collision avoidance application. Finally, conclusions and future work are presented in Sec. \ref{con}.
\subsection{Notation} 
\begin{itemize}
	\item Define ${\underline{x}_i(t)}:=(x_1(t),x_2(t),\cdots,x_i(t))^\top$.
	\item  	The notation $L^2(0,1)$ denotes the standard space of the
	equivalence class of square-integrable, measurable functions $u:\mathbb{R}^+ \times [0, 1] \rightarrow \mathbb{R}$, and  $\| u(t)\|:=(\int_{0}^{1}u(x,t)^2\diff x) ^\frac{1}{2}<+\infty $ for $u\in L^2(0,1)$. The norm $u[t]$ denote the profile of $u$ at certain $t \ge 0$, i.e., $u[t] =
	u(x,t)$ for all $x\in[0, 1]$.
	\item The norm $|\cdot|$ denotes the usual Euclidean norm of the $n$-vector, $\| \cdot \|$ denotes the induced 2-norm of the matrix.
	\item The symbol $\mathbb{R}$ denotes sets of real number and $\mathbb{C}$ denotes sets of complex number, while $\mathbb{N}$ denotes the set of all non-negative integers, $\mathbb{N}^+$ denotes all the natural
	numbers without 0, and $\mathbb{R}^+ := [0, \infty)$.
	\item The term $I_n$ denote the ${n}\times{n}$ identity matrix and $\operatorname{eig}(A)$ denotes all the eigenvalues of matrix $A$.
	\item The term $\operatorname{sgn}(u)$ represents the  signum function, i.e., $\operatorname{sgn}(u)=\left\{ \begin{array}{cc}1&u>0\\-1&u\le0\end{array}\right..$
\end{itemize}
\section{Problem Formulation}\label{sec2}

\subsection{Considered plant}
Consider the uncertain strict-feedback linear system subject to the input delay and disturbances,
\begin{align}
	\dot{X}(t)&=AX(t)+BU(t-D)+Gd(t),\label{org1}\\
	Y(t)&=CX(t),	\label{org2}
\end{align}
where the states  $X(t)=(x_1,\cdots,x_n)^\top\in\mathbb{R}^n$, the unknown delay parameter $D\in \mathbb{R}$, the  disturbance $d(t)\in \mathbb{R}^m$. To render the safe-regulation task more challenging, we consider the case with the highest relative degree. In this setting, the output matrix is $C=(1,0,\cdots,0)\in\mathbb{R}^n$, so that the measured output is $Y(t)=x_1(t)$, i.e., the state farthest from the control input. 
Considering the structure of the strict-feedback systems,  the form of the system matrix $A$  and the input matrix are
\begin{equation}
	A = \begin{pmatrix}
  a_{1,1} & 1 & & \\
  a_{2,1} & a_{2,2} & 1 & \\
  \vdots & \vdots & \ddots & \ddots \\
  a_{n,1} & a_{n,2} & \dots & a_{n,n}
\end{pmatrix}
_{n\times n},\label{A}
	B=\begin{pmatrix}
  0 \\ 0 \\ \vdots \\ b
\end{pmatrix}_{n\times 1}.
\end{equation}
All system parameters $b$, $a_{i,j}(1\le i,j\le n)$ in \eqref{org1} are unknown. The sign of the parameter $b$ is known (without any loss of generality, $b>0$ is given). In addition, we assume that  $(A,B)$ is controllable and $(C,A)$ is observable.

Like \cite{out1,out2}, the disturbances are supposed to be generated by the following finite-dimensional exogenous model 
\begin{align}
	\dot V_d(t)=S_dV_d(t),\quad
	d(t)=P_dV_d(t),\label{modd}
\end{align}
where $V_d(t)=(v_{d,1},\cdots,v_{d,n_d})^\top\in\mathbb{R}^{n_d}$,  and $(P_d,S_d)$ is observable. The matrices $P_d\in\mathbb{R}^{m\times n_d}$ and   $S_d\in\mathbb{R}^{n_d\times n_d}$ are unknown. 
The target trajectory $r(t)\in\mathbb{R}$  is considered to be generated by the exact known exogenous model
\begin{equation}
	\dot V_r(t)=S_rV_r(t),\quad
	r(t)=P_rV_r(t),\label{modr1}
\end{equation}
where $V_r(t)=(v_{r,1},\cdots,v_{r,n_r})^\top\in\mathbb{R}^{n_r}$, $(P_r,S_r)$ is observable.
Thus, the dynamics of disturbance $d$ and the reference  $r$ are unified by a finite-dimensional exogenous model
\begin{equation}
	\dot V(t)=SV(t),  \label{modv}
\end{equation}
where $S=\operatorname{diag}(S_d,S_r)$, $V(t)=\operatorname{col}(V_d,V_r)$  and $n_v=n_d+n_r$. The matrix $S$ is partially known, and all its eigenvalues are assumed to lie on the imaginary axis. It can generate constant or trigonometric signals that are common in practical applications. 
According to \eqref{modd}--\eqref{modv}, $d, r$ can be reformulated as
\begin{equation}
	d(t)=\bar{P}_{d}V(t),\quad r(t)=\bar{P}_{r}V(t)\label{vd},
\end{equation}
where $\bar{P}_{d}=(P_d, 0)\in\mathbb{R}^{m\times n_v}$, $\bar{P}_{r}=(0,P_r)\in\mathbb{R}^{1\times n_v}$.

Model the delay $D$ in \eqref{org1} by the transport PDE, whose solution is 
\begin{equation}
	u(x,t)=\begin{cases}
		U\left(t-D+Dx\right),&\mathrm{if}\quad t-D+Dx>0,\\
		0,&\mathrm{if}\quad t-D+Dx\le0
	\end{cases}\label{u0}
\end{equation}
for $x \in [0,1],t \in [0,\infty]$. Recalling \eqref{vd},  we have the equivalent representation of the original plant \eqref{org1} as
\begin{align}
	\dot{X}(t)&=AX(t)+Bu(0,t)+\bar{G}V(t),\label{org11}\\
	Du_t(x,t)&=u_x(x,t),\label{org10}\\
	u(1,t)&=U(t),\label{org12}\\
	u(x,0)&=0,\label{org13}
\end{align}
where $ \bar{G}=G\bar{P}_d $ and $ \bar{G}:=(g_1;\cdots;g_n)\in\mathbb{R}^{n\times n_v}$.

{The unknown model parameters are: } 
\begin{equation}
	\Theta=\{\underbrace{A,\bar{G},S_d}_{\Theta_1}; \underbrace{D,b}_{\Theta_2}\}. \label{theta}
\end{equation}
It means that all plant parameters in the considered system are unknown except for $S_r$ that is to generate the target trajectory and thus is known. 

The following two assumptions about the bounds of unknown parameters in $\Theta_2$, are required in the control design, which are not restrictive because the bounds are arbitrary.
\begin{assumption}
	The bound of the unknown delay parameter $D$ is known and arbitrary, i.e., $D\in D_0$, where $D_0:=\{\mathfrak{D}\in\mathbb{R}:0<\underline{D}\leq \mathfrak{D}\leq\overline{D}\}$. The constants $\underline{D}$, $\overline{D}$ are arbitrary and known. \label{assumboundd}
\end{assumption}
\begin{assumption}
	The bounds of the unknown parameter $b$ is known and arbitrary, i.e., $b\in b_0$, where $b_0:=\{\beta\in\mathbb{R}:0 <\underline{b}  \le \beta \le\bar{b}$. The constants $\underline{b}$, $\overline{b}$ are arbitrary and known.\label{assumboundb}
\end{assumption}

Considering the safety objectives, we further impose the following boundedness assumptions on all unknown parameters in $\Theta_1$:
\begin{assumption}\label{assum:theta_bounds}
	{The bounds of unknown parameters in the system matrices $A$, $S_d$, and $\bar{G}$ are known and arbitrary. }
\end{assumption}

	Assumption \ref{assum:theta_bounds} means that the unknown parameters in $\Theta_1$ belong to a known compact set $\bar{\Theta}_1$. It will be employed to derive a safe controller when the unknown parameters in $\Theta_1$ have yet to be identified. However, if a control input that maintains the states within the safe region is known for a finite duration that exceeds the lower bound $\underline{D}$ of delay, the assumption is no longer necessary.

\subsection{Control objective}
Define the output tracking error
\begin{equation}
	e(t)=Y(t)-r(t) \label{e}
\end{equation}  and the safe set
\begin{equation}
	\mathcal{C}=\{e(t)\in\mathbb{R}:h(e(t),t)\geq0\}, \label{saferegion}
\end{equation}
where  $h(e(t),t)$ satisfies the following Assumption \ref{assumh}.
\begin{assumption}\label{assumh}
	The time-varying function  $h$ is $n$ times differentiable with respect to each of its arguments, i.e., $e$ as well
	as $t$,  and satisfies that $\frac{\partial h(e,t)}{\partial e}\neq 0$, $\forall e\in\mathbb{R}$, $t\in [0,\infty]$. Besides, $|h(e(t),t)|<\infty \Rightarrow |e(t)|<\infty$ and $\lim_{t\rightarrow\infty}h(e(t),t)=0 \Rightarrow e(t)=0$.
\end{assumption}

The control objective is to design a controller to  ensure the exponential convergence to zero of the error state $e(t)$, i.e.,
\begin{equation}
	\lim_{t\rightarrow\infty}	e(t)=Y(t)-r(t)=0,
\end{equation}
and keeping the error states $e(t)$ in the safe set $\mathcal{C}$, in the situation that almost all plant parameters are unknown.

\section{Control design with full state access}\label{secfull}
In this section, we first design a nominal safe controller using a three-step backstepping transformation, assuming full model knowledge. Then, addressing the scenario where nearly all plant parameters \eqref{theta} are unknown,  a data-driven approach that leverages the Koopman operator and Krylov DMD, along with a BaLSI identifier, is utilized to extract the underlying dynamics of the plant \eqref{org1}. Finally, we construct a safe controller that guarantees safety enforcement and exponential regulation, even in the presence of significant model uncertainties.

\subsection{Nominal safe controller design}\label{secnom}
We proceed with the nominal control design following \cite{outsafe}.
\subsubsection{First transformation}\label{secfistran}
Firstly, we introduce the linear coordinate transformation to incorporate the origin system \eqref{org11} and the dynamics of the exogenous system \eqref{modv} about disturbance and reference to convert  into a vector chain of integrator:
\begin{equation}
	Z(t)=T_zX(t)+T_vV(t),\label{Z}
\end{equation}
where $Z(t)=(z_{1},\cdots,z_{n})^\top$, and the matrices $T_z\in\mathbb{R}^{n\times n}$ as well as $T_v\in\mathbb{R}^{n\times n_v}$ are 
\begin{equation}
		T_z = \begin{pmatrix}
  1 & & & \\
  \varrho_{2,1} & 1 & & \\
  \vdots & \ddots & \ddots & \\
  \varrho_{n,1} & \dots & \varrho_{n,n-1} & 1
\end{pmatrix}
\end{equation}
,
$
	T_v=-\begin{pmatrix}\bar{P}_r;\sigma_1+\bar{P}_rS^1;\cdots;\sigma_{n-1}+\bar{P}_rS^{n-1}\end{pmatrix}.
$
The constants $\varrho_{i,j}$ in $T_z$ are defined as 
$
\varrho_{1,1}=a_{1,1},\varrho_{2,1}=a_{2,1}+\varrho_{1,1}a_{1,1},\varrho_{2,2}=a_{2,2}+\varrho_{1,1},
$
for $i=3,\cdots,n $, as
$\varrho_{i,1}=a_{i,1}+\sum_{j=1}^{i-1}\varrho_{i-1,j}a_{j,1},\varrho_{i,\iota}=a_{i,\iota}+\varrho_{i-1,\iota-1}+\sum_{j=\iota}^{i-1}\varrho_{i-1,j}a_{j,\iota},\forall\iota=2,\cdots,i-1,\varrho_{i,i}=a_{i,i}+\varrho_{i-1,i-1}$,
and the vectors $\sigma_i$  are defined as 
$\sigma_{0}=0,\sigma_{i}=-\sum_{j=1}^{i-1}\varrho_{i-1,j}g_{j}-g_{i}+\sigma_{i-1}S,  i=1,\cdots,n.
$
With this transformation, we obtain
\begin{equation}
	\dot{Z}(t)=	A_zZ(t)+B\left(u(0,t)+KX(t)+G_0V(t)\right),
\end{equation}
where 
\begin{equation}
	A_z = \begin{pmatrix}
  0 & 1 & & \\
    & 0 & \ddots & \\
    & & \ddots & 1 \\
    & & & 0
\end{pmatrix},K^\top=\frac{1}{b}\times\begin{pmatrix} \varrho_{n,1}\\ \vdots\\ \varrho_{n,n}\end{pmatrix},\label{K}
\end{equation}
and $G_0=-\frac{1}{b}(\sigma_n+\bar{P}_rS^n)$. Details of the transformation can be found in [\citen{outsafe}, Appendix 0.1].
To compensate for the delay, it is necessary to construct the prediction of system states. The prediction of $X(t)$ can be expressed as
\begin{align}
	X(t+\tau)=X(t)e^{A\tau}+\int_{0}^{\frac{\tau}{D}}De^{DA(\frac{\tau}{D}-y)}Bu(y,t)\diff y\notag\\
	+\int_{0}^{\frac{\tau}{D}}De^{DA(\frac{\tau}{D}-y)}\bar{G}e^{DSy}V(t)\diff y, \tau\in[0,D].\label{predx}
\end{align}

\subsubsection{Second transformations}
Regarding the safe barrier function $h$, we introduce the second transformation
\begin{align}
	&h_1(z_1(t),t)=h(e(t),t)+\varsigma(t),\label{h1}\\
	&h_i(\underline{z}_i,t)=\sum_{j=1}^{i-1}\frac{\partial h_{i-1}}{\partial z_j}z_{j+1}+\frac{\partial h_{i-1}}{\partial t}+k_{i-1}h_{i-1},\label{hi}
\end{align} 		
for $i=2,\cdots,n$. Like \cite{outsafe}, $\varsigma(t)$ is introduced to handle the initially unsafe case. That is, if the error state $e(t)$ falls outside of the safe region at the initial regulation time $t=D$, i.e., $h(e(D),D) \le 0$, the function $\varsigma(t)$ plays a role in rescuing to the defined safe region  \eqref{saferegion}. The function $\varsigma(t)$ is given by
\begin{equation}
	\varsigma(t)=\begin{cases}
		\begin{cases}
			(-h(e(D),D)+\epsilon)e^{\frac{1}{\bar{t}^2}-\frac{1}{(t-D-\bar{t})^2}},0\le t <D+\bar{t}\\
			0, \phantom{=================}t\geq D+\bar{t}
		\end{cases}\\
		\phantom{===========,,}if\quad h(e(D),D)\leq0.\\
		0,\quad t\geq 0,\quad \phantom{=====}if\quad h(e(D),D)>0
	\end{cases}\label{sigma}
\end{equation}
where $\epsilon$ is an arbitrarily positive design parameter, and the positive design parameter  $\bar{t}$ is the prescribed time for recovery to the safe region, which can be arbitrarily assigned by users. The value of $h(e(D),D)$ can be calculated at the initial time $t=0$ by using $X(0)$,$V(0)$, recalling \eqref{predx}, \eqref{org1}, \eqref{modv}. It is compactly written as
\begin{equation}
	h(e(D),D) := \mathcal{P}_0(\chi(0),\Theta),\label{p0}
\end{equation}
where 
\begin{equation}
	\chi(t):=(V(t), X(t)). \label{chi}
\end{equation}

It is obtained from \eqref{sigma} that $\varsigma(t)$ is continuously differentiable of all orders. Therefore, together with Assumption \ref{assumh}, $h(e(t),t)$
is $n$ times differentiable with respect to all its arguments $e(t)$ and $t$.
Taking the partial derivative for $h(e(t),t)$ with  respect to the
variables $e(t)$, we have
\begin{equation}
	\frac{\partial h(e(t),t)}{\partial e(t)}:=\vartheta(e(t),t).\label{var}
\end{equation}
Given the linearity of the system, it follows from \eqref{h1} and $e(t)=z_1(t)$ that $\frac{\partial h_n}{\partial z_n}=\frac{\partial h_{n-1}}{\partial z_{n-1}}=\frac{\partial h_{n-2}}{\partial z_{n-2}}=\cdots=\frac{\partial h_1}{\partial z_1}=\frac{\partial h}{\partial e}=\vartheta$.	 
Thus, applying the above transformations \eqref{h1}--\eqref{sigma}, defining $H=(h_1,\cdots,h_n)^\top$, we have
\begin{align}
	\dot{H}(t)&=A_{\mathrm{h}}H(t)
	+B\Big( f(\underline{z}_n(t),t)\notag\\	&+\vartheta\big(u(0,t)+KX(t)+G_0V(t)\big)\Big),\label{doth}
\end{align}
where
\begin{equation}
		A_\mathrm{h} = \begin{pmatrix}
  -k_1 & 1 & & \\
   & -k_2 & \ddots & \\
   & & \ddots & 1 \\
   & & & -k_n
\end{pmatrix},
\label{ah}
\end{equation}
and
$
	f(\underline{z}_n(t),t)=\frac{1}{b}\left(\sum_{j=1}^{n-1}\frac{\partial h_n}{\partial z_j}z_{j+1}+\frac{\partial h_n}{\partial t}
	+k_nh_n\right).
$

\subsubsection{Third Transformation and controller}
Now the intermediate system is \eqref{doth} with $u$-PDE \eqref{org10}, \eqref{org12}. To remove the dependence on the
signal model state and convert the intermediate system into an exponentially stable form, we introduce the following backstepping transformation:
\begin{align}
	w(x,t)&=u(x,t)-\int_0^xq(x,y)u (y,t)\mathrm{~d}y-\gamma(x)X(t)\notag\\
	&-\bar{\gamma}(x)V(t)-\psi(x,t)\label{wxt}
\end{align}
where the solution of $q(x,y),\gamma(x),\psi(x,t)$, and that of the regulator
equations $\bar{\gamma}(x)$, are given in Appendix \ref{pdetrans}. 

Applying \eqref{wxt},  the intermediate system is \eqref{doth}, \eqref{org10}, \eqref{org12} is converted into the target system
\begin{align}
	\dot{H}(t)&=A_{\mathrm{h}}H(t)+B\vartheta w(0,t),\label{tar1}\\
	Dw_t(x,t)&=w_x(x,t)\label{tar2}\\
	w(1,t)&=0,	\label{tar3}	
\end{align} 
with choosing the nominal controller
\begin{align}
	&U(t)=-\int_{0}^{1}DKe^{DA(1-y)}Bu(y,t)\diff y-Ke^{DA}X(t)\notag\\
	&-\big(\int_{0}^{1}DKe^{DAy}\bar{G}e^{DS(1-y)}\diff y+G_0e^{DS}\big)V(t)+\psi(1,t)\notag\\
	&:=\mathcal{U}(\chi(t);\Theta), \label{u}
\end{align}
where
\begin{align}
	\psi(1,t)&=-\frac{f\left(\underline{z}_n(t+D),t+D\right)}{\vartheta(z_1(t+D),t+D)}:=\bar{\mathcal{P}}(Z(t),V(t),t,D)\notag\\
	&=\bar{\mathcal{P}}(T_{z}X(t)+T_{v}V(t),V(t),t,D),\label{p}
\end{align}
and where $\Theta$ defined in \eqref{theta}, and $\chi(t)$ defined in \eqref{chi} are included in \eqref{u} to emphasize the controller's dependence on the signals  $X(t)$, $V(t)$, and parameters $\Theta$.

\subsubsection{Selection of the design parameters}
The selection of the gain parameters $k_i$ in \eqref{hi} is essential for both regulation and safety. These parameters are chosen to satisfy:
\begin{equation}
	k_i\ge\max\{0,\hat{k}_i\},i=1,\cdots,n-1, \qquad k_n\ge0, \label{k}
\end{equation}
where
\begin{equation}
	\hat{k}_i\left(\chi(0);\Theta\right)=\frac{-1}{h_i\left(\underline{z}_i\left(D\right), D\right)}\left(\sum_{j=1}^i \frac{\partial h_i}{\partial z_j} z_{j+1}\left(D\right)+\frac{\partial h_i}{\partial t}\right).\label{hk}
\end{equation}
The vector $Z(D)$ can be precisely determined a priori at the initial time $t=0$ by evaluating the state predictor \eqref{predx}.

\subsubsection{Result of the nominal controller}
We now present the following two lemmas to summarize the results achieved by the nominal controller.
\begin{lemma}\label{lemmasafe2} 
	The high-relative-degree CBFs $h_i(t), i=1,\cdots,n$ are nonnegative under the selection of design parameters \eqref{k}, \eqref{hk}, i.e., $h_i(t)\ge 0, i=1,\cdots,n$, for time $t\ge D$.
\end{lemma}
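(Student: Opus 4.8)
The plan is to turn the recursive definitions \eqref{h1}--\eqref{hi} into a closed, upper-bidiagonal cascade for $H=(h_1,\dots,h_n)^\top$, to initialize its nonnegativity at the regulation start $t=D$ by exploiting the gain rule \eqref{k}--\eqref{hk}, and then to propagate nonnegativity forward in time. The starting point is a chain-rule identity: since $A_z$ is the shift matrix we have $\dot z_j=z_{j+1}$ for $j=1,\dots,n-1$, and since $h_{i-1}$ depends only on $\underline{z}_{i-1}$ and $t$, differentiating gives $\dot h_{i-1}=\sum_{j=1}^{i-1}\frac{\partial h_{i-1}}{\partial z_j}z_{j+1}+\frac{\partial h_{i-1}}{\partial t}$. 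Comparing with \eqref{hi} shows $h_i=\dot h_{i-1}+k_{i-1}h_{i-1}$, equivalently
\[
	\dot h_i=-k_ih_i+h_{i+1},\quad i=1,\dots,n-1,\qquad \dot h_n=-k_nh_n+b\vartheta w(0,t),
\]
where the last relation is read off from the target system \eqref{tar1} and the matrix $A_\mathrm{h}$ in \eqref{ah}. Solving the transport equation \eqref{tar2} along characteristics with the homogeneous boundary condition \eqref{tar3} gives $w(0,t)=w(1,t-D)=0$ for all $t\ge D$, so the forcing term vanishes and the cascade is autonomous and upper-triangular on $[D,\infty)$.

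Next I would initialize at $t=D$ by forward induction on $i$. For the base case, the exponent in \eqref{sigma} vanishes at $t=D$, so $\varsigma(D)=-h(e(D),D)+\epsilon$ in the initially unsafe branch and $\varsigma(D)=0$ otherwise; by \eqref{h1} this yields $h_1(D)=\epsilon>0$ or $h_1(D)=h(e(D),D)>0$, hence $h_1(D)>0$ in all cases. For the inductive step, the bracketed term in \eqref{hk} equals $\dot h_i(D)$ by the chain-rule identity above, so substituting into \eqref{hi} gives $h_{i+1}(D)=\dot h_i(D)+k_ih_i(D)=(k_i-\hat k_i)h_i(D)$; since $h_i(D)>0$ by hypothesis and $k_i\ge\max\{0,\hat k_i\}$ by \eqref{k}, we obtain $h_{i+1}(D)\ge0$, and $h_{i+1}(D)>0$ whenever the gain is taken strictly above $\hat k_i$, which also keeps the next $\hat k_{i+1}$ well defined. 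This establishes $h_i(D)\ge0$ for every $i$.

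Finally I would propagate nonnegativity to $t\ge D$ by backward induction on $i$ via variation of constants on the autonomous cascade. From $\dot h_n=-k_nh_n$ with $k_n\ge0$ we get $h_n(t)=h_n(D)e^{-k_n(t-D)}\ge0$; and assuming $h_{i+1}(\cdot)\ge0$ on $[D,\infty)$, the representation $h_i(t)=h_i(D)e^{-k_i(t-D)}+\int_D^t e^{-k_i(t-s)}h_{i+1}(s)\diff s$ is a sum of nonnegative terms, so $h_i(t)\ge0$. This delivers $h_i(t)\ge0$ for all $i$ and all $t\ge D$. The step I expect to be the main obstacle is the initialization: one must verify the chain-rule identity turning \eqref{hi} into $h_{i+1}=\dot h_i+k_ih_i$ so that the bracket in \eqref{hk} is correctly identified, carefully evaluate the piecewise $\varsigma$ at the junction $t=D$, and ensure each $h_i(D)$ remains strictly positive so that every $\hat k_i$ in \eqref{hk} is well defined. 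Once $w(0,t)=0$ for $t\ge D$ is established, the forward-in-time propagation is routine.
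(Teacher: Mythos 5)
Your proof is correct and follows essentially the same route as the paper's: forward induction at $t=D$ using \eqref{k}--\eqref{hk} to secure $h_i(D)\ge0$, combined with $w(0,t)=0$ for $t\ge D$ and the cascade structure of \eqref{tar1}--\eqref{tar3} to propagate nonnegativity forward. You merely make explicit what the paper leaves implicit (the variation-of-constants propagation and the need for strict positivity of $h_i(D)$ so that $\hat k_{i+1}$ in \eqref{hk} stays well defined), which is a fair refinement rather than a different approach.
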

\begin{proof}
	The proof is shown in Appendix \ref{appcbf}
\end{proof}

\begin{lemma}\label{lembound}
	The tracking error $e(t)$  \eqref{e} in the closed-loop system \eqref{org1}, \eqref{org2} with the nominal controller \eqref{u} converges to zero, and all system states remain bounded for any initial data $X(0)\in\mathbb{R}^n$, $V(0)\in\mathbb{R}^{n_v}$, and for any safety constraint $h(e,t)$ satisfying Assumption \ref{assumh}, provided that the design parameters $k_1,\cdots,k_n$ are chosen satisfying \eqref{k}, \eqref{hk}. 
\end{lemma}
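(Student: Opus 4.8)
The plan is to establish stability in the target coordinates \eqref{tar1}--\eqref{tar3}, which were engineered precisely so that the closed loop becomes a cascade of a stable finite-dimensional system driven by a transport PDE that vanishes in finite time, and then to transport the conclusions back to the physical variables through the (invertible) coordinate, barrier, and backstepping transformations. First I would analyze the PDE block \eqref{tar2}--\eqref{tar3}. Since $Dw_t=w_x$ with the homogeneous inflow condition $w(1,t)=0$, the method of characteristics shows that $w(x,t)=f(x+t/D)$ is a pure transport toward $x=0$, with the inflow boundary at $x=1$; the zero boundary datum is convected across the domain, so that $w(x,t)\equiv 0$ for all $x\in[0,1]$ once $t$ exceeds a finite time $T^\ast$ determined by $D$ and the initial profile. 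In particular, $w(0,t)$ is bounded on $[0,\infty)$ and vanishes for $t\ge T^\ast$.

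Next I would treat the finite-dimensional block \eqref{tar1}. For $t\ge T^\ast$ the driving term $B\vartheta w(0,t)$ is identically zero, so $\dot H=A_\mathrm{h}H$ exactly. Because $A_\mathrm{h}$ in \eqref{ah} is upper-triangular with diagonal $-k_1,\dots,-k_n$ and the gains satisfy \eqref{k}, its spectrum lies in the closed left half-plane; taking the admissible gains strictly positive places it in the open left half-plane, so $H(t)\to 0$ (exponentially). On the compact interval $[0,T^\ast]$ there is no finite escape: the dynamics are affine in $H$ with coefficients that are bounded there (recall $V$ is bounded since $\operatorname{eig}(S)$ lie on the imaginary axis, and $w$ is bounded by the previous step), so $H$ remains bounded, and hence $H(t)\to 0$ with $\sup_{t\ge0}|H(t)|<\infty$.

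Then I would pull these conclusions back. For the tracking error, \eqref{h1} gives $h_1=h(e,t)+\varsigma(t)$; since $\varsigma(t)=0$ for $t\ge D+\bar t$ by \eqref{sigma} and $h_1(t)\to0$, it follows that $h(e(t),t)\to0$, and Assumption \ref{assumh} then yields both $|e(t)|<\infty$ and $e(t)\to0$; as $z_1=e$ from the first row of \eqref{Z}, the error converges to zero as claimed. For boundedness of the remaining states I would exploit that $\partial h_i/\partial z_i=\vartheta\neq0$ makes the barrier map \eqref{h1}--\eqref{hi} triangular and hence invertible in the $z_i$: recovering $z_1,\dots,z_n$ recursively from the bounded $h_i$ and the bounded explicit time terms shows $Z$ is bounded, whence $X=T_z^{-1}(Z-T_vV)$ is bounded through \eqref{Z}. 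Finally, inverting the Volterra transformation \eqref{wxt}, whose reciprocal kernel exists (Appendix \ref{pdetrans}), expresses $u(x,t)$ in terms of the bounded quantities $w,X,V,\psi$, so $u[t]$ and $U(t)=u(1,t)$ are bounded.

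The main obstacle I anticipate lies in this last pull-back step rather than in the target-system stability, which is nearly immediate once $w$ is shown to vanish in finite time. Two points require care: establishing that the driving signal $\psi$ — which encodes the state predictor \eqref{predx} and the solution of the regulator equations — stays bounded, so that inverting \eqref{wxt} genuinely yields a bounded $u$; and verifying that the recursive inversion of the barrier coordinates does not amplify the explicit time-dependent terms $\partial h_i/\partial t$, which carry $\varsigma$ and its derivatives together with the $\bar P_r S^jV$ contributions, so that boundedness of $H$ truly transfers to boundedness of $Z$. These are the places where the time-varying, possibly nonlinear structure of the barrier transformation could interfere with the otherwise linear cascade argument.
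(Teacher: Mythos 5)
Your proposal is correct and follows essentially the same route as the paper's proof in Appendix C: show the transport state $w$ vanishes after the finite time $D$, conclude boundedness and exponential decay of $H$ from the (bi)diagonal structure of $A_{\mathrm{h}}$, and then pull boundedness and convergence back through the barrier map \eqref{h1}--\eqref{hi}, the linear change of coordinates \eqref{Z}, and the inverse Volterra transformation \eqref{inverw}. The two points you flag as delicate (boundedness of $\psi$ and of the recursive inversion of the barrier coordinates) are exactly where the paper spends its effort, handling the latter by deriving the recursion $\frac{\partial h_{i}}{\partial z_{i-1}}=\frac{\partial h_{1}}{\partial z_{1}}\sum_{j=1}^{i-1}k_j+(i-1)\frac{\partial^2 h_1}{\partial z_1\partial t}$ to bound all $\frac{\partial h_i}{\partial z_j}$.
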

\begin{proof}
	The proof is shown in Appendix \ref{appbound}
\end{proof} 
The aforementioned nominal controller $U(t)$ is built under the perfectly known plant model, while most parameters are considered unknown in this paper.  The remainder of this section incorporates a data-driven design to identify the uncertainties from a finite-time collection of full-state data, and then construct a corresponding safe control based on the nominal controller design presented in \eqref{u}.

\subsection{Extract system dynamic $\Theta_1$ using DMD} \label{secdmd}
The presence of the input delay makes the plant behave as an autonomous system during the initial delay interval $t\in[0,D]$. This allows us to use DMD to extract important dynamic characteristics $\Theta_1=\{A,\bar{G},S_d\}$ without corruption from the external control input. We first introduce the piecewise constant function $\hat{\Theta}_1(t)$ to denote the estimated parameters of $\Theta_1$, and $\hat{\Theta}_1(0)$ is the arbitrary initial estimate of $\Theta_1$  within the bounds in Assumptions \ref{assum:theta_bounds}.

\subsubsection{The extended system with its Koopman operator}
Recalling \eqref{org1}, \eqref{org2}, \eqref{modd}, during the initial period,  the plant can be modeled as an autonomous extended system
\begin{equation}\label{tilA}
	\dot{\tilde{X}}(t)=\underbrace{\begin{pmatrix}
			S_d&0\\
			GP_d&A
	\end{pmatrix}}_{\tilde{A}}\tilde{X}(t)
	,\quad t\in [0, D]
\end{equation}
where $ \tilde{X}(t)=(V_d,X)^\top\in\mathbb{R}^{\tilde{n}}$ and ${\tilde{n}}=n+n_d$. Besides, the extended matrix $\tilde{A}\in\mathbb{R}^{\tilde{n}\times\tilde{n}}$ satisfies the following assumption:
\begin{assumption}\label{assumeigen}
	The eigenvalues of the matrix $\tilde{A} $ are simple. 
\end{assumption}

By stating that the eigenvalues are simple, we impose that the algebraic multiplicity of each eigenvalue is one such that the matrix $\tilde{A}$ is diagonalizable. This assumption contributes to the clarity of the presentation. The proposed method can also be extended to the case of cyclic system matrices.

{As shown in \cite{data2}, the Koopman operator $\mathcal{K}(t)$ describes the evolution of the system \eqref{tilA} through observables $\phi(\tilde{X})$ according to
$
\mathcal{K}(t)\phi(\tilde{X})=\phi(e^{\tilde{A}t}\tilde{X}).\label{koopobe}
$
Given the linearity of \eqref{tilA},  the identity observable can be simply chosen as $\phi(\tilde{X})=\tilde{X}$.
The Koopman operator $\mathcal{K}(t)$ is characterized by its eigenfunctions $\tilde\varphi_i$ associated with eigenvalues $\tilde{\lambda}_i\in\mathbb{C}$.
The Koopman eigenfunction $\tilde\varphi_i(\tilde{X}) $ is an observable that evolves linearly in time, i.e., $\dot{\tilde\varphi}_i(\tilde{X})={\tilde{\lambda}_i}\tilde\varphi_i(\tilde{X})$, implying
$\mathcal{K}(t)\tilde\varphi_i(\tilde{X})=e^{\tilde{\lambda}_it}\tilde\varphi_i(\tilde{X}), i=1,\cdots,\tilde{n}$.}
{For the linear system \eqref{tilA}, these eigenfunctions take the specific form \cite{data2}:}
\begin{equation}
\tilde\varphi_i(\tilde{X})=\langle\tilde{X},\tilde{\omega}_i\rangle, \label{eigenfun}
\end{equation}
where$\langle\cdot,\cdot\rangle$ is the usual inner product and $\tilde{\omega}_i\in \mathbb{C}^{\tilde{n}} $  are the left eigenvectors of the system matrix $\tilde{A}$. Thus, the state  $\tilde{X}$ admits the spectral decomposition
\begin{equation}
	\tilde{X}=\sum_{i=1}^{\tilde{n}}\tilde{\nu}_i\tilde\varphi_i(\tilde{X})=\tilde{V}\underline{\tilde\varphi}(\tilde{X}), \label{tilx}
\end{equation}
where
$\tilde{V}=(\tilde{\nu}_1,\cdots,\tilde{\nu}_{\tilde{n}})\in\mathbb{C}^{\tilde{n}}$ is the matrix whose columns are the Koopman modes $\tilde{\nu}_i$ and $\underline{\tilde\varphi}(\tilde{X})=[\tilde\varphi_1(\tilde{X}),\cdots,\tilde\varphi_{\tilde{n}}(\tilde{X})] $.
Crucially, for the linear dynamics governed by $\tilde{A}$, these Koopman modes $\tilde{\nu}_i$ are the right eigenvectors of $\tilde{A}$, and the $\tilde{\lambda}_i$ are the corresponding eigenvalues. Especially, the Koopman modes for the observable $\tilde{X}$ are calculated as
\begin{equation}
	\tilde{\nu}_i = 
	\begin{cases}
		(\nu_{d,i}, \, \nu_{x,i})^\top, & i = 1, \dots, n_d \\
		(0, \, \nu_{x,i})^\top,         & i = n_d+1, \dots, \tilde{n}
	\end{cases}\label{nu}
\end{equation}
where $S_d{\nu}_{d,i}=\tilde{\lambda}_i{\nu}_{d,i}$ with
\begin{equation}
	{\nu}_{x,i}=(\tilde{\lambda}_iI-A)^{-1}GP_d{\nu}_{d,i}\label{nux}
\end{equation}
for $i=1,\cdots,n_d $, and
$
A{\nu}_{x,i}=\tilde{\lambda}_i{\nu}_{x,i}\label{nux2}
$ 
for $i=n_d+1,\cdots,\tilde{n}$.

\subsubsection{Krylov DMD}
To identify the underlying dynamics $\Theta_1$ of system \eqref{org1} from state measurements, we collect data at discrete time intervals with a constant sampling period $T_d > 0$  satisfying the following assumption. 
\begin{assumption}\label{assumtd}
	The sampling time $T_d$ satisfies
	$e^{\tilde{\lambda}_i T_d} \neq \mathrm{e}^{\tilde{\lambda}_j T_d}, i \neq j ,\forall \tilde{\lambda}_i, \tilde{\lambda}_j \in \operatorname{eig}(A) \cup \operatorname{eig}\left(S_d\right)$.
\end{assumption}
This assumption ensures that the observability and controllability of the continuous-time system are preserved when transitioning from the continuous-time domain to the sampled-data representation.
Choosing a sampling period $T_d$ based on Assumption \ref{assumtd} is straightforward, even though the matrices $A$ and $S_d$ are unknown. This is because the set of inadmissible sampling periods has a measure of zero, which renders the condition non-restrictive \cite{data2}.

This process generates a sequence of data points, which are arranged into the data matrix:
\begin{equation}
	\tilde{\mathcal{X}}=\begin{pmatrix}\tilde{X}(0)\quad \tilde{X}(1)\quad \cdots\quad \tilde{X}(2\tilde{n}-1)\end{pmatrix}\in\mathbb{R}^{\tilde{n}\times2\tilde{n}}.\label{snap}
\end{equation}
In this matrix, each column $\tilde{X}(k) = \tilde{X}(kT_d)$, $k \in \mathbb{N}$, is referred to as snapshots \cite{data3}.
\begin{remark}\label{remsam}
	We are making use of the delay to identify the uncertainties. The uncertain $D$ requires us to limit data collection in the interval $[0, \underline{D}]$, considering the known lower bound in Assumption \ref{assumboundd}. Consequently, the sampling time must satisfy $0 < T_d \le \frac{\underline{D}}{2\tilde{n}-1}$. If $\underline{D}$ is too small, the data collection period can be extended by applying zero input.
	
\end{remark}

From the sequence of snapshots \eqref{snap}, we construct the Hankel matrix $H_{\tilde{n}}(\tilde{X})$:
\begin{equation}
	H_{\tilde{n}}(\tilde{X})=\begin{pmatrix}\tilde{X}(0)&\tilde{X}(1)&\ldots&\tilde{X}(\tilde{n}-1)\\\tilde{X}(1)&\tilde{X}(2)&\ldots&\tilde{X}(\tilde{n})\\\vdots&\vdots&\vdots&\vdots\\\tilde{X}(\tilde{n}-1)&\tilde{X}(\tilde{n})&\ldots&\tilde{X}(2\tilde{n}-2)\end{pmatrix}\in\mathbb{R}^{\tilde{n}^2\times\tilde{n}}.
\end{equation}
This matrix is known to admit the factorization 
\begin{equation}
	H_{\tilde{n}}(\tilde{X})=Q_oQ_c, \label{qoc}
\end{equation}
where $Q_o$ is the observability matrix and $Q_c$ is the controllability matrix. They are defined as
\begin{equation}
    Q_o = \begin{pmatrix} I_{\tilde{n}} & {\tilde{A}_d}^\top & \cdots & ({\tilde{A}_d^{\tilde{n}-1}})^\top \end{pmatrix}^\top\in\mathbb{R}^{\tilde{n}^2\times\tilde{n}}\label{qo},
\end{equation}
\begin{equation}
    Q_c = \begin{pmatrix} \tilde{X}(0) & \tilde{A}_d\tilde{X}(0) & \cdots & \tilde{A}_d^{\tilde{n}-1}\tilde{X}(0) \end{pmatrix}\in\mathbb{R}^{\tilde{n}\times\tilde{n}}. \label{qc}
\end{equation}
where $I_{\tilde{n}}$ denotes the $\tilde{n}\times\tilde{n}$ identity matrix, and 
$
	\tilde{A}_d=e^{\tilde{A}T_d}
$.

We now introduce several lemmas to describe the properties of the Hankel matrix $H_{\tilde{n}}(\tilde{X})$ and to show how system characteristics can be extracted from it.
\begin{lemma}\label{lemfull1}
	The Hankel matrix $	H_{\tilde{n}}(\tilde{X}) $
	satisfies
	\begin{equation}
		\operatorname{rank}	H_{\tilde{n}}(\tilde{X}) =\tilde{n}
	\end{equation}
	for almost all $\tilde{X}(0)\in\mathbb{R}^{\tilde{n}}$. 
\end{lemma}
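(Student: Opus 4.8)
The plan is to exploit the factorization $H_{\tilde{n}}(\tilde{X}) = Q_o Q_c$ from \eqref{qoc} and thereby reduce the rank of the tall Hankel matrix to the rank of the square Krylov matrix $Q_c$. First I would observe that $Q_o$ in \eqref{qo} contains the identity block $I_{\tilde{n}}$ in its leading $\tilde{n}$ rows, so $Q_o$ has full column rank $\tilde{n}$ \emph{unconditionally}. Since left-multiplication by a matrix of full column rank preserves rank, this yields $\operatorname{rank} H_{\tilde{n}}(\tilde{X}) = \operatorname{rank}(Q_o Q_c) = \operatorname{rank} Q_c$. Thus it suffices to prove that the square matrix $Q_c$ in \eqref{qc} is nonsingular for almost every $\tilde{X}(0) \in \mathbb{R}^{\tilde{n}}$.

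Next I would read $Q_c = (\tilde{X}(0), \tilde{A}_d \tilde{X}(0), \ldots, \tilde{A}_d^{\tilde{n}-1}\tilde{X}(0))$ as the Krylov matrix generated by $\tilde{A}_d = e^{\tilde{A}T_d}$ and the initial snapshot $\tilde{X}(0)$. Its determinant $\det Q_c$ is a polynomial in the $\tilde{n}$ entries of $\tilde{X}(0)$, and the zero set of a real polynomial on $\mathbb{R}^{\tilde{n}}$ is either all of $\mathbb{R}^{\tilde{n}}$ (when the polynomial vanishes identically) or a proper algebraic variety of Lebesgue measure zero. Hence the claim reduces to showing that $\det Q_c$ is \emph{not} the zero polynomial, i.e., that there exists at least one $\tilde{X}(0)$ for which $Q_c$ is invertible; equivalently, that $\tilde{A}_d$ admits a cyclic (generating) vector.

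The crux, and the step I expect to be the main obstacle, is establishing that $\tilde{A}_d$ is nonderogatory so that such a cyclic vector exists. Here I would combine Assumption \ref{assumeigen} and Assumption \ref{assumtd}. By Assumption \ref{assumeigen} the eigenvalues $\tilde{\lambda}_1, \ldots, \tilde{\lambda}_{\tilde{n}}$ of $\tilde{A}$ are simple, and the spectrum of $\tilde{A}_d$ is $\{e^{\tilde{\lambda}_i T_d}\}$; by Assumption \ref{assumtd} the sampling time is chosen so that $e^{\tilde{\lambda}_i T_d} \neq e^{\tilde{\lambda}_j T_d}$ for $i \neq j$, so $\tilde{A}_d$ likewise has $\tilde{n}$ distinct eigenvalues and is diagonalizable with simple spectrum, hence nonderogatory. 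For such a matrix a cyclic vector exists: choosing $\tilde{X}(0)$ with a nonzero component along every eigenvector of $\tilde{A}_d$, the Krylov matrix factors through a Vandermonde matrix in the distinct values $e^{\tilde{\lambda}_i T_d}$, which is nonsingular. This exhibits one $\tilde{X}(0)$ with $\det Q_c \neq 0$, so $\det Q_c$ is a nonzero polynomial. Consequently its zero set has measure zero, and $Q_c$, hence $H_{\tilde{n}}(\tilde{X})$, has rank $\tilde{n}$ for almost all $\tilde{X}(0) \in \mathbb{R}^{\tilde{n}}$, which proves the lemma.
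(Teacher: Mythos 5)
Your proposal is correct and follows essentially the same route as the paper: factor $H_{\tilde{n}}(\tilde{X})=Q_oQ_c$, note $Q_o$ has full column rank from its identity block, and reduce to the nonsingularity of the Krylov matrix $Q_c$ for generic $\tilde{X}(0)$ using the fact that Assumptions \ref{assumeigen} and \ref{assumtd} make $\tilde{A}_d$ cyclic. Your write-up is in fact more explicit than the paper's about why ``almost all'' holds (the polynomial-determinant, measure-zero argument), which the paper only asserts via genericity of controllable pairs $(\tilde{A},\tilde{X}(0))$.
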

\begin{proof}
	The proof is shown in  Appendix \ref{appfull1}.
\end{proof}

The full-rank property established in Lemma \ref{lemfull1} enables the formulation of the Krylov DMD method \cite{dmd1,dmd2,data2}, which we describe as follows.
\begin{lemma}
	If $H_{\tilde{n}}(\tilde{X})=\tilde{n}$,
	\begin{equation}
		\mathcal{K}(T_d)H_{\tilde{n}}(\tilde{X})=H_{\tilde{n}}(\tilde{X})\underbrace{\begin{pmatrix}0&0&\ldots&0&-f_0\\1&0&\ldots&0&-f_1\\0&1&\ldots&0&-f_2\\\vdots&\vdots&\ddots&0&\vdots\\0&0&\ldots&1&-f_{\tilde{n}-1}\end{pmatrix}}_{F\in\mathbb{R}^{\tilde{n}\times\tilde{n}}}\label{kh}
	\end{equation}
	with 
	\begin{equation}
		\begin{aligned}f&=\mathrm{col}(f_{0},\ldots,f_{\tilde{n}-1})\\&=-(H_{\tilde{n}}^{\top}(\tilde{X})H_{\tilde{n}}(\tilde{X}))^{-1}H_{\tilde{n}}^{\top}(\tilde{X})\mathrm{col}(\tilde{X}(\tilde{n}),\ldots,\tilde{X}(2\tilde{n}-1))\end{aligned}\label{fcol}
	\end{equation}
	holds.
\end{lemma}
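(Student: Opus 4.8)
The plan is to prove the identity \eqref{kh} by comparing the two sides column by column, thereby reducing the whole claim to a single consistency statement about a linear system. First I would exploit that the chosen observable is the identity: by the Koopman relation $\mathcal{K}(T_d)\phi(\tilde{X})=\phi(e^{\tilde{A}T_d}\tilde{X})$ together with $\tilde{X}(k+1)=e^{\tilde{A}T_d}\tilde{X}(k)=\tilde{A}_d\tilde{X}(k)$, the operator $\mathcal{K}(T_d)$ acts on each snapshot as a one-step forward time shift, $\mathcal{K}(T_d)\tilde{X}(k)=\tilde{X}(k+1)$. Applying this blockwise, $\mathcal{K}(T_d)H_{\tilde{n}}(\tilde{X})$ is exactly the once-shifted block-Hankel matrix whose $j$-th column is $\mathrm{col}(\tilde{X}(j),\tilde{X}(j+1),\ldots,\tilde{X}(\tilde{n}+j-1))$ for $j=1,\ldots,\tilde{n}$.

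Next I would expand the right-hand side $H_{\tilde{n}}(\tilde{X})F$ using the companion structure of $F$ in \eqref{kh}. For $j=1,\ldots,\tilde{n}-1$ the $j$-th column of $F$ is the standard basis vector $e_{j+1}$, so the $j$-th column of $H_{\tilde{n}}(\tilde{X})F$ is simply column $j+1$ of $H_{\tilde{n}}(\tilde{X})$, namely $\mathrm{col}(\tilde{X}(j),\ldots,\tilde{X}(\tilde{n}+j-1))$; this matches the shifted Hankel matrix from the previous step automatically. Hence the identity \eqref{kh} collapses to matching only the last column, i.e. to establishing the exact equation
\[
H_{\tilde{n}}(\tilde{X})\,f=-\,\mathrm{col}\big(\tilde{X}(\tilde{n}),\ldots,\tilde{X}(2\tilde{n}-1)\big),
\]
since the last column of $F$ is $-f$ with $f=\mathrm{col}(f_0,\ldots,f_{\tilde{n}-1})$.

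To obtain this last-column equation I would invoke the Cayley--Hamilton theorem for $\tilde{A}_d=e^{\tilde{A}T_d}$: there exist fixed scalars $c_0,\ldots,c_{\tilde{n}-1}$ (the characteristic-polynomial coefficients) with $\tilde{A}_d^{\tilde{n}}=-\sum_{k=0}^{\tilde{n}-1}c_k\tilde{A}_d^{k}$. Since $\tilde{X}(m)=\tilde{A}_d^{m}\tilde{X}(0)$, applying this to each snapshot yields the recurrence $\tilde{X}(m+\tilde{n})=-\sum_{k=0}^{\tilde{n}-1}c_k\tilde{X}(m+k)$, valid with the same coefficients for every shift $m=0,1,\ldots,\tilde{n}-1$. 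Reading these $\tilde{n}$ vector identities as the block rows of the stacked equation shows that the linear system $H_{\tilde{n}}(\tilde{X})f=-\mathrm{col}(\tilde{X}(\tilde{n}),\ldots,\tilde{X}(2\tilde{n}-1))$ is consistent, with $f=\mathrm{col}(c_0,\ldots,c_{\tilde{n}-1})$ an exact solution. Finally, by Lemma~\ref{lemfull1} the matrix $H_{\tilde{n}}(\tilde{X})$ has full column rank $\tilde{n}$, so $H_{\tilde{n}}^{\top}(\tilde{X})H_{\tilde{n}}(\tilde{X})$ is invertible and the solution is unique; because the system is consistent, the normal-equations formula \eqref{fcol} returns exactly this unique solution, which establishes \eqref{kh}.

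I expect the consistency argument to be the crux. The subtlety is that a naive reading of the stacked system treats it as overdetermined ($\tilde{n}^2$ scalar equations in $\tilde{n}$ unknowns), so exact solvability is not automatic; what rescues it is that Cayley--Hamilton supplies one single recurrence whose coefficients are independent of the shift $m$, forcing every block row to obey the same linear relation simultaneously. Establishing this shift-invariance—rather than merely a per-row least-squares fit—is precisely what upgrades \eqref{fcol} from an approximation to the exact identity \eqref{kh}, after which the full-rank property from Lemma~\ref{lemfull1} only needs to certify uniqueness.
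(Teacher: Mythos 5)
Your proof is correct. The paper itself does not prove this lemma---it only cites [data2, Lemma~3]---but your argument (one-step shift action of $\mathcal{K}(T_d)$ on the Hankel columns, automatic matching of the first $\tilde{n}-1$ columns from the companion structure, Cayley--Hamilton for $\tilde{A}_d$ to establish consistency of the last-column system, and full column rank to identify the normal-equations solution with the unique exact solution) is precisely the standard Krylov-DMD argument the reference uses. The only cosmetic point is that invoking Lemma~\ref{lemfull1} for the rank is redundant, since the full-rank condition is already the hypothesis of the lemma being proved.
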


\begin{proof}
	Its proof can be found in [\citen{data2}, Lemma 3].
\end{proof}
The next step is crucial, which derives the Koopman modes, eigenvalues, and eigenfunctions from the companion matrix $F$.
\begin{lemma}\label{lemeigen}
	If $H_{\tilde{n}}(\tilde{X})=\tilde{n}$, and letting $\hat{\nu}_i,i=1,\cdots,\tilde{n}$ be the eigenvectors of $F$ coinciding with the corresponding eigenvalues $\hat{\lambda}_i$,
	the Koopman eigenvalues can be computed from
	\begin{equation}
		\tilde{\lambda}_i=\frac{\ln\hat{\lambda}_i}{T_d},\quad i=1,\ldots,\tilde{n}
	\end{equation}
	including the eigenvalues $\tilde{\lambda}_i, i =1,\cdots,n_d$, of the disturbance model $S_d$ in \eqref{modd} and $\tilde{\lambda}_i, i =n_d+1,\cdots,\tilde{n}$ of the  system  matrix $A$ in \eqref{org1}. The  Koopman modes $\tilde{\nu}_i$ corresponding  of the Koopman eigenvalues $\tilde{\lambda}_i$ is
	\begin{equation}
		\tilde{\nu}_{i}=[I_{\tilde{n}}\quad 0]H_{\tilde{n}}(\tilde{X})\hat{\nu}_i,i=1,\cdots,\tilde{n}.
	\end{equation}
\end{lemma}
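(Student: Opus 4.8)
The plan is to exploit the factorization $H_{\tilde{n}}(\tilde{X})=Q_oQ_c$ of \eqref{qoc} together with the Krylov relation \eqref{kh} to show that the companion matrix $F$ is \emph{similar} to the sampled extended matrix $\tilde{A}_d=e^{\tilde{A}T_d}$, and then to transfer spectral data from $\tilde{A}_d$ back to $\tilde{A}$. The central step I would establish first is the intertwining identity $\tilde{A}_dQ_c=Q_cF$. Since the Koopman operator at step $T_d$ acts on the identity observable as the one-step shift $\tilde{X}(k)\mapsto\tilde{X}(k+1)=\tilde{A}_d\tilde{X}(k)$, applying it blockwise to $H_{\tilde{n}}(\tilde{X})=Q_oQ_c$ advances every power of $\tilde{A}_d$ appearing in $Q_c$ by one, so that $\mathcal{K}(T_d)H_{\tilde{n}}(\tilde{X})=Q_o\tilde{A}_dQ_c$. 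Comparing this with the right-hand side $H_{\tilde{n}}(\tilde{X})F=Q_oQ_cF$ of \eqref{kh} gives $Q_o(\tilde{A}_dQ_c-Q_cF)=0$. Because $\operatorname{rank}H_{\tilde{n}}(\tilde{X})=\tilde{n}$ by Lemma \ref{lemfull1}, the factorization forces $Q_c$ to be invertible and $Q_o$ to have full column rank $\tilde{n}$; cancelling $Q_o$ yields $\tilde{A}_dQ_c=Q_cF$, i.e. $F=Q_c^{-1}\tilde{A}_dQ_c$.

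With this similarity in hand the three assertions follow in turn. The eigenvalues $\hat{\lambda}_i$ of $F$ coincide with those of $\tilde{A}_d=e^{\tilde{A}T_d}$, which are $e^{\tilde{\lambda}_iT_d}$ for the eigenvalues $\tilde{\lambda}_i$ of $\tilde{A}$; inverting and invoking Assumption \ref{assumtd}, which keeps the sampled spectrum distinct so that the logarithm unambiguously returns the continuous-time eigenvalues, gives $\tilde{\lambda}_i=\ln\hat{\lambda}_i/T_d$. The block-lower-triangular structure of $\tilde{A}$ in \eqref{tilA} gives $\operatorname{eig}(\tilde{A})=\operatorname{eig}(S_d)\cup\operatorname{eig}(A)$, which under the simplicity Assumption \ref{assumeigen} can be indexed so that $i=1,\dots,n_d$ are the modes of $S_d$ and $i=n_d+1,\dots,\tilde{n}$ those of $A$, establishing the second claim.

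For the Koopman modes, I would note that $F\hat{\nu}_i=\hat{\lambda}_i\hat{\nu}_i$ implies $\tilde{A}_d(Q_c\hat{\nu}_i)=Q_cF\hat{\nu}_i=\hat{\lambda}_i(Q_c\hat{\nu}_i)$, so $Q_c\hat{\nu}_i$ is a right eigenvector of $\tilde{A}_d$. Since $\tilde{A}$ is diagonalizable under Assumption \ref{assumeigen}, its exponential shares the same eigenvectors, hence $Q_c\hat{\nu}_i$ is also a right eigenvector of $\tilde{A}$ for $\tilde{\lambda}_i$, i.e. the Koopman mode $\tilde{\nu}_i$ of \eqref{nu} up to scaling. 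It then remains to identify $Q_c\hat{\nu}_i$ with the stated formula: reading off the first block row of $Q_o$ in \eqref{qo}, which equals $I_{\tilde{n}}$, gives $[I_{\tilde{n}}\quad 0]H_{\tilde{n}}(\tilde{X})=[I_{\tilde{n}}\quad 0]Q_oQ_c=Q_c$, so that $[I_{\tilde{n}}\quad 0]H_{\tilde{n}}(\tilde{X})\hat{\nu}_i=Q_c\hat{\nu}_i=\tilde{\nu}_i$, completing the third claim.

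The main obstacle I anticipate is the eigenvalue-recovery step rather than the linear algebra: the complex logarithm is multivalued, so one must argue that $\ln\hat{\lambda}_i/T_d$ returns the true $\tilde{\lambda}_i$ and not an aliased value. This is precisely where Assumption \ref{assumtd}, together with the sampling bound in Remark \ref{remsam}, is needed, ensuring injectivity of $\tilde{\lambda}\mapsto e^{\tilde{\lambda}T_d}$ on the admissible spectrum (a Nyquist-type condition) so that the correct branch is selected; once this is secured, the similarity and eigenvector-transfer arguments are routine.
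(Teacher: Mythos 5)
Your proof is correct, but it takes a genuinely different route from the paper's. You establish the intertwining identity $\tilde{A}_dQ_c=Q_cF$ by applying the one-step shift blockwise to $H_{\tilde{n}}(\tilde{X})=Q_oQ_c$ and cancelling the full-column-rank factor $Q_o$, thereby exhibiting an explicit similarity $F=Q_c^{-1}\tilde{A}_dQ_c$; eigenvalues then coincide immediately, and eigenvectors transfer through $Q_c$, with the mode formula following from the observation that the first block row of $Q_o$ is $I_{\tilde{n}}$, so $[I_{\tilde{n}}\ \ 0]H_{\tilde{n}}(\tilde{X})=Q_c$. The paper instead works at the level of the characteristic polynomial: it derives $P_F(\tilde{A}_d)\tilde{X}(0)=0$ from the least-squares definition of $f$, passes to the spectral decomposition $\tilde{X}(0)=\tilde{V}\underline{\tilde{\varphi}}(\tilde{X}(0))$, and uses the nonvanishing of every Koopman eigenfunction at $\tilde{X}(0)$ (a PBH-type consequence of $\det Q_c\neq0$) to force $P_F(\hat{\Lambda})=0$; the mode formula is then obtained through the explicit factorization $H_{\tilde{n}}(\tilde{X})=Q_o\tilde{V}\,\mathrm{diag}(\tilde{\varphi}_i(\tilde{X}(0)))V_F$ and the Vandermonde identity $V_F\hat{\nu}_i=e_i$. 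Your similarity argument is cleaner and yields the spectral correspondence (with multiplicities) and the eigenvector transfer in one stroke, bypassing the Vandermonde computation entirely; the paper's route stays closer to the Koopman-eigenfunction formalism and makes the scalar $\tilde{\varphi}_i(\tilde{X}(0))$ relating $[I_{\tilde{n}}\ \ 0]H_{\tilde{n}}(\tilde{X})\hat{\nu}_i$ to $\tilde{\nu}_i$ explicit (both proofs recover the mode only up to this nonzero scaling, which is all that is needed). Your closing remark on the branch of the complex logarithm is a point the paper glosses over, and your appeal to Assumptions \ref{assumeigen} and \ref{assumtd} to guarantee that the one-dimensional eigenspaces of $\tilde{A}_d$ coincide with those of $\tilde{A}$ is exactly the justification needed for the eigenvector-transfer step.
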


\begin{proof}
	The proof is shown in Appendix \ref{appf}.
\end{proof}

\begin{lemma}\label{lemdmd}
If $\operatorname{rank}H_{\tilde{n}}(\tilde{X})=\tilde{n}$, the exact identification  of the unknown system dynamics ${\Theta}_1$ is achieved at $t=\underline{D}$, that is $\hat{\Theta}_1(\underline{D})=\{\mathcal{A},\mathcal{S}_d,\bar{\mathcal{G}}\}={\Theta}_1$, where
\begin{align}
    \mathcal{A} &= (\mathbf{0}\quad I_n)\,\, \tilde{\mathcal{A}}\,\, (\mathbf{0}\quad I_n)^\top, \label{matha} \\
    \mathcal{S}_d &= (I_{n_d}\quad \mathbf{0})\,\, \tilde{\mathcal{A}}\,\, (I_{n_d}\quad \mathbf{0})^\top,  \label{mathsd}\\
    \bar{\mathcal{G}} &= \Big( (\mathbf{0}\quad I_n)\,\, \tilde{\mathcal{A}}\,\, (I_{n_d}\quad \mathbf{0})^\top\quad \mathbf{0} \Big) \label{mathg},
\end{align}
    with identificating $\tilde A$ by
    \begin{equation}
\tilde{\mathcal{A}}=\tilde{V}\tilde{\Lambda}\tilde{V}^{-1}.\label{eq:tA}
	\end{equation}
	The matrices $\tilde{V}$ and $\tilde{\Lambda}=\operatorname{diag}(\tilde{\lambda}_1,\cdots,\tilde{\lambda}_n)$ obtained from Lemma \ref{lemeigen}.
\end{lemma}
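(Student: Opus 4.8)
The plan is to reduce the entire statement to the single identity $\tilde{\mathcal{A}}=\tilde{A}$; once that is in hand, the three formulas \eqref{matha}--\eqref{mathg} are mere block read-offs from the structured matrix in \eqref{tilA}. First I would note that the partition $\tilde{X}=(V_d,X)^\top$ groups the state into its first $n_d$ and its last $n$ coordinates, so that $(I_{n_d}\quad\mathbf{0})$ and $(\mathbf{0}\quad I_n)$ are precisely the coordinate projections onto these two blocks. Pre- and post-multiplying $\tilde{A}$ by these selectors extracts, respectively, its top-left, bottom-right, and bottom-left blocks, which by \eqref{tilA} equal $S_d$, $A$, and $GP_d$. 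Since $\bar{G}=G\bar{P}_d=(GP_d\quad\mathbf{0})$ by the definition of $\bar{P}_d$ in \eqref{vd}, appending the trailing zero block in \eqref{mathg} reproduces $\bar{G}$. Hence $\tilde{\mathcal{A}}=\tilde{A}$ immediately yields $\mathcal{A}=A$, $\mathcal{S}_d=S_d$, and $\bar{\mathcal{G}}=\bar{G}$, i.e., $\hat{\Theta}_1(\underline{D})=\Theta_1$.

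To establish $\tilde{\mathcal{A}}=\tilde{A}$, I would invoke Lemma \ref{lemeigen}: the scalars $\tilde{\lambda}_i$ obtained from the companion matrix $F$ are exactly the eigenvalues of $\tilde{A}$, and the Koopman modes $\tilde{\nu}_i$ collected into $\tilde{V}$ are exactly the corresponding right eigenvectors of $\tilde{A}$, as noted below \eqref{tilx}. Under Assumption \ref{assumeigen} these eigenvalues are simple, so $\tilde{A}$ has $\tilde{n}$ distinct eigenvalues and hence $\tilde{n}$ linearly independent eigenvectors; this makes $\tilde{V}$ nonsingular and furnishes the genuine eigendecomposition $\tilde{A}=\tilde{V}\tilde{\Lambda}\tilde{V}^{-1}$ with $\tilde{\Lambda}=\operatorname{diag}(\tilde{\lambda}_1,\ldots,\tilde{\lambda}_{\tilde{n}})$. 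Comparing with \eqref{eq:tA} gives $\tilde{\mathcal{A}}=\tilde{A}$. For the timing claim, Remark \ref{remsam} fixes $T_d\le\underline{D}/(2\tilde{n}-1)$, so the complete snapshot sequence \eqref{snap}, and therefore $H_{\tilde{n}}(\tilde{X})$, $F$, and all spectral data, is available by $t=(2\tilde{n}-1)T_d\le\underline{D}$, over which the plant still evolves autonomously as in \eqref{tilA}.

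The hard part is not the block extraction but guaranteeing that the computed spectral data actually reconstruct $\tilde{A}$, i.e., that $\tilde{V}$ is invertible and that each $\tilde{\lambda}_i$ is paired with its correct $\tilde{\nu}_i$. This rests on two earlier results acting in tandem: the full-rank property $\operatorname{rank}H_{\tilde{n}}(\tilde{X})=\tilde{n}$ from Lemma \ref{lemfull1}, which makes the Krylov construction well posed and keeps the modes $\tilde{\nu}_i=[I_{\tilde{n}}\quad\mathbf{0}]H_{\tilde{n}}(\tilde{X})\hat{\nu}_i$ nonzero and linearly independent, and Assumption \ref{assumtd}, which keeps the sampled eigenvalues $e^{\tilde{\lambda}_iT_d}$ distinct so that the logarithm in Lemma \ref{lemeigen} inverts the sampling without aliasing. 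With these in place, the reconstruction and block read-off are routine.
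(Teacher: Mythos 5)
Your proposal is correct and follows essentially the same route as the paper's proof in Appendix G: invoke Lemma \ref{lemeigen} to obtain the eigenpairs of $\tilde{A}$, use Assumption \ref{assumeigen} to justify the eigendecomposition $\tilde{A}=\tilde{V}\tilde{\Lambda}\tilde{V}^{-1}$, and then read off $A$, $S_d$, and $\bar{G}$ from the block structure of \eqref{tilA}. Your additional remarks on the timing of data collection and the role of Lemma \ref{lemfull1} and Assumption \ref{assumtd} are consistent with the paper but are handled there in Remark \ref{remsam} and the earlier lemmas rather than inside this proof.
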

\begin{proof}
The proof is shown in Appendix \ref{appdmd}.
\end{proof}
\subsection{Identification of parameters $\Theta_2$}\label{secada}
Following \cite{delay9}, we use a batch of actuation and plant data collected over a duration to build the following identifier: 
\begin{align}
	&\hat{\Theta}_2(t_{i+1})=\arg\min\Big\{|\ell-\hat{\Theta}_2(t_{i})|^2: \ell\in\bar{\Theta}_2,\notag\\
	&\bar{F}_n(t_{i+1},\mu_{i+1})=\bar{Q}_n(t_{i+1},\mu_{i+1})\ell,n=1,2,\cdots\Big\},\label{est1}
\end{align}
to estimate the unknown parameters $\Theta_2 = \{D, b\}$, i.e., the unknown delay and the parameter in the input matrix $B$, where $\hat{\Theta}_2(t_{i})=(\hat{D}(t_{i}),\hat{b}(t_{i})) $ and the set 
\begin{equation}
	\bar{\Theta}_2:=\{\ell\in \mathbb{R}^2:\underline{D}\leq \ell_D\leq\overline{D},\underline{b}\leq \ell_b\leq\overline{b}\} \label{thetabound}
\end{equation}
uses the known bounds given in Assumption \ref{assumboundd}, \ref{assumboundb}.  The parameter estimates \eqref{est1} are updated at the discrete sequence of time instants ${\{t_i\ge0\}}_{i=0}^\infty,i=0,1,2,\cdots$, by employing the fixed-time triggering mechanism:
$
	t_{i+1}=t_i+T_a, \label{tritime}
$
where $T_a> 0$ is a design parameter satisfying 
$
	t_1=t_0+T_a\ge\underline{D}.\label{eq:ta}
$
Above condition ensures that updates commence only after the DMD identification of $a_{n,i}$ completes at $t = \underline{D}$, as these parameters are required to compute \eqref{est1} (specifically $f_b(t)$).

Each new estimate at $t_{i+1}$ is computed using a batch of historical plant data collected over the time window $[\mu_{i+1}, t_{i+1}]$. The start of this window $\mu_{i+1}$ is defined as:
\begin{equation}
	\mu_{i+1}:=\min\{t_g:g\in{0,\cdots,i},t_g\ge t_{i+1}-\tilde{N}T\},\label{mu}
\end{equation}
where $\tilde{N}\ge 1$ is a free integer parameter.
The matrices $\bar{F}_{{n}}$ and $\bar{Q}_{{n}}$ used in the estimator \eqref{est1} are defined as:
$
	\bar{F}_{{n}}=\left(F_{{n}}, F_b\right)^\top$, $\bar{Q}_{{n}}=\begin{pmatrix}
		Q_n  & 0 \\
		0 & Q_{b} 
	\end{pmatrix}$.
Their components are computed by integrating signals over the data window $[\mu_{i+1}, t_{i+1}]$:
\begin{align}
	F_n=\int_{\mu_{i+1}}^{t_{i+1}}f_n(t)q_n(t)\diff t,\quad
	Q_n=\int_{\mu_{i+1}}^{t_{i+1}}q_n(t)^2\diff t	,\label{gn}
\end{align}
$F_b=\int_{\mu_{i+1}}^{t_{i+1}}f_b(t)q_b(t)\diff t$, $Q_b=\allowdisplaybreaks\int_{\mu_{i+1}}^{t_{i+1}}q_b(t)^2\diff t$, where 
$q_n(t)=-\int_{0}^{1}\sin(x\pi n)u(x,t)\diff x$, $f_n(t)=\pi  n\int_{0}^{t}\int_{0}^{1}\cos(x\pi n)u(x,\tau)\diff x\diff\tau,f_b(t)=x_n(t)-x_n(0)-\int_{0}^{t}\sum_{i=1}^{n}a_{n,i}x_i(\tau) \diff\tau-\int_{0}^{t}g_nV(\tau)\diff \tau,q_b(t)=\int_{0}^{t}u(0,\tau)\diff \tau$.

We briefly describe the design procedure of \eqref{est1} below. It is obtained from \eqref{org11}--\eqref{org13} that $f_n(t)=	Dq_n(t), 
f_b(t)=bq_b(t)$,
based on which the quadratic function $\Xi_{i,n}:\mathbb{R}^2\rightarrow\mathbb{R}_+$ is constructed as
$
	\Xi_{i,{n}}(\ell)=\int_{\mu_{i+1}}^{t_{i+1}}[(f_{{n}}-\ell_{D}q_{{n}})^{2}+(f_b-\ell_{b}q_b)^{2}]dt,\label{Xi}
$
for $i\in \mathbb{N}$, where $\ell=(\ell_D,\ell_b) ^\top$. It follows that the true parameter vector $\Theta_2$ is the unique global minimum, such that
the gradient of $\Xi_{i,n}(\ell)$ vanishes here according to Fermat's theorem. Thus, taking the partial derivatives to $\Xi_{i,{n}}$ with respect to $\ell_D$, $\ell_b$ to zero  yields that $ \bar F_n(t_{i+1},\mu_{i+1})=\bar{Q}_n(t_{i+1},\mu_{i+1})\Theta_2 $ for each mode $n \in \mathbb{N}$ and update step $i \in \mathbb{N}$. The estimator \eqref{est1} is then derived from such a batch least-squares solution. More details can be found in \cite{adasafe1}.

\begin{lemma}\label{lemada}
	The finite-time exact identification of the unknown
	parameters $\Theta_2$ is achieved, that is, there exists a finite time $t_f$
	defined by 
	\begin{equation}
		t_f=\min\{t_i:\exists t\in [0,t_i),u(0,t)\neq0\}, \label{tf}
	\end{equation}
	such that $\hat{\Theta}_2(t_f)\equiv\Theta_2, \forall t\ge t_f$.
\end{lemma}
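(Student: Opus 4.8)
The plan is to read \eqref{est1} as a sequence of Euclidean projections onto a feasible set that always contains the true value $\Theta_2$, and then to show this set shrinks to the singleton $\{\Theta_2\}$ exactly at $t_f$. First I would record feasibility of $\Theta_2$ at every update. The structural identities $f_n(t)=Dq_n(t)$ and $f_b(t)=bq_b(t)$ hold for all $t$: the first follows by testing \eqref{org10} against $\sin(x\pi n)$, integrating by parts (the boundary terms vanish since $\sin(\pi n)=0$) and integrating in time with $q_n(0)=0$; the second by integrating the $n$-th row of the plant \eqref{org1}, which is computable because $a_{n,i}$ and $g_n$ are already furnished by Lemma \ref{lemdmd} at $\underline{D}\le t_1$. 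Forming the weighted window integrals \eqref{gn} then yields $F_n=DQ_n$ and $F_b=bQ_b$, hence $\bar{F}_n=\bar{Q}_n\Theta_2$ for all $n$ and all $i$. Since $\Theta_2$ lies in the box $\bar{\Theta}_2$ of \eqref{thetabound} (Assumptions \ref{assumboundd}, \ref{assumboundb}), the feasible set $\mathcal{F}_{i+1}:=\{\ell\in\bar{\Theta}_2:\bar{F}_n=\bar{Q}_n\ell,\,\forall n\}$ is nonempty, and \eqref{est1} is precisely the projection of $\hat{\Theta}_2(t_i)$ onto $\mathcal{F}_{i+1}$.

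Next I would pin down when $\mathcal{F}_{i+1}$ collapses to a point. The two constraint families decouple: $F_n=Q_n\ell_D$ constrains $\ell_D$ and $F_b=Q_b\ell_b$ constrains $\ell_b$. If $Q_n=\int q_n^2>0$ for some $n$, the first equation forces $\ell_D=F_n/Q_n=D$; if $Q_b=\int q_b^2>0$, the second forces $\ell_b=F_b/Q_b=b$. Thus exact identification of $\Theta_2$ is equivalent to $Q_b>0$ together with $Q_n>0$ for at least one $n$ on the active window. The regressor $q_b(t)=\int_0^t u(0,\tau)\diff\tau$ is driven by the boundary trace $u(0,\cdot)$, which by \eqref{u0} stays identically zero until the delayed input has propagated through the transport line---precisely the event marked in the definition \eqref{tf} of $t_f$. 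Hence $q_b$ is the binding regressor and its activation sets the identification time.

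At $t=t_f$ the trace $u(0,\cdot)$ has been nonzero on $[0,t_f)$, so $q_b\not\equiv 0$ and $Q_b>0$; moreover, by the characteristics of \eqref{u0} a nonzero value $u(0,t^\ast)=U(t^\ast-D)$ is carried across all $x\in[0,1]$ over a time span of length $D$, so the profile $u[t]\not\equiv 0$ on an interval and, by completeness of $\{\sin(x\pi n)\}$ in $L^2(0,1)$, $Q_n>0$ for some $n$. Therefore $\mathcal{F}_{i_f}=\{\Theta_2\}$ and the projection returns $\hat{\Theta}_2(t_f)=\Theta_2$. Persistence is then immediate: for every $t_i\ge t_f$ the value $\Theta_2$ remains feasible by the argument above, so projecting $\hat{\Theta}_2(t_i)=\Theta_2$ onto a set containing $\Theta_2$ again returns $\Theta_2$; since $\hat{\Theta}_2$ is piecewise constant between updates, $\hat{\Theta}_2(t)\equiv\Theta_2$ for all $t\ge t_f$.

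I expect the crux to be this excitation step---upgrading ``$u(0,t^\ast)\neq 0$ for some $t^\ast<t_f$'' to strict positivity of $Q_b$ and of some $Q_n$ on the specific window $[\mu_{i_f},t_{i_f}]$ selected by \eqref{mu}. Two points require care: checking that this window overlaps the time interval on which the nonzero control value actually populates the state $u$, and excluding the degenerate case in which $q_b$ vanishes on the whole window (which would force both $u(0,\cdot)\equiv 0$ there and $\int_0^{\mu_{i_f}}u(0,s)\diff s=0$). Controlling these two points is exactly what makes $t_f$ the sharp identification time rather than merely an eventual one.
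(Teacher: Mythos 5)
Your proposal follows exactly the route the paper itself takes: the paper omits the proof and defers to the BaLSI arguments of its references, and those arguments are precisely your reading of \eqref{est1} as a projection onto a feasible set that always contains $\Theta_2$ and collapses to $\{\Theta_2\}$ once the regressors are exciting. The feasibility identities $f_n=Dq_n$ and $f_b=bq_b$, the decoupling of the $\ell_D$- and $\ell_b$-constraints, and the reduction of exact identification to $Q_b>0$ together with $Q_n>0$ for some $n$ are all as in the cited proofs; your observation that $f_b$ is only computable once Lemma \ref{lemdmd} has delivered $a_{n,i}$ and $g_n$ (hence the requirement $t_1\ge\underline{D}$) is a point the paper itself makes just before the lemma.

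The one step you flag but do not close --- that the excitation actually lands in the window $[\mu_{i_f},t_{i_f}]$ and that $q_b$ cannot degenerate there --- follows from the minimality built into \eqref{tf} together with $\tilde N\ge 1$. By minimality of $t_f=t_{i_f}$, the preceding update instant satisfies $u(0,t)=0$ for all $t<t_{i_f-1}$, so any instant $t^\ast$ with $u(0,t^\ast)\neq0$ and $t^\ast<t_{i_f}$ must lie in $[t_{i_f-1},t_{i_f})$; since $\tilde N\ge1$ (reading $T$ in \eqref{mu} as the trigger period $T_a$), the candidate set in \eqref{mu} contains $t_{i_f-1}$, hence $\mu_{i_f}\le t_{i_f-1}\le t^\ast$ and the window always contains the excitation. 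Your degenerate case is then excluded for free: $q_b(t)=\int_0^t u(0,\tau)\diff\tau$ vanishes identically on $[0,t_{i_f-1}]$, so it cannot have ``integrated to zero'' before the window opens, and on $[t_{i_f-1},t_{i_f}]$ it starts from zero with a derivative $u(0,\cdot)$ that is not identically zero, whence $q_b\not\equiv0$ there and $Q_b>0$ by continuity. The same localization handles $Q_n$: the nonzero boundary value at $t^\ast$ is carried by the characteristics of \eqref{u0} across $x\in[0,1]$ during $[t^\ast-D,t^\ast]$, so $u[t]\not\equiv0$ on the nonempty subinterval $[\max\{t^\ast-D,\,t_{i_f-1}\},\,t^\ast]$ of the window, and completeness of $\{\sin(n\pi x)\}$ in $L^2(0,1)$ plus continuity of $t\mapsto q_n(t)$ gives $Q_n>0$ for some $n$. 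With these two points closed, your argument is a complete proof and coincides with the one the paper intends.
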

\begin{proof}
	Its proof is very similar to
	the proofs of Lemma 4 in \cite{adasafe1} and  Sec. IV-C of \cite{delay9}. Due to space constraints, they are omitted here.
\end{proof}

\subsection{Data-Driven Safe Control}\label{seccon2}
In this section, we build a data-driven safe controller by combining the nominal design in Sec. \ref{secnom} and the data-driven identification process in Secs. \ref{secdmd}, \ref{secada}. Replacing the unknown parameters by their real-time estimates $\hat{\Theta}(t)=(\hat{\Theta}_1(t),\hat{\Theta}_2(t))$ in the nominal safe control law \eqref{u}, we have
\begin{equation}
	U_d(t):=\mathcal{U}(\chi(t);\hat\Theta(t)),\label{ud}
\end{equation}
where $\hat{\Theta}(t)$ is piecewise-constant vector recalling the designs in Secs. \ref{secdmd}, \ref{secada}:
\begin{equation}
	\hat{\Theta}(t)=\begin{cases}
		(\hat{\Theta}_1(0),\hat{\Theta}_2(0)),&t\in[0,\underline{D})\\
		(\hat{\Theta}_1(\underline{D}),\hat{\Theta}_2(0)),&t\in[\underline{D},t_f)\\
		(\hat{\Theta}_1(\underline{D}),\hat{\Theta}_2(t_f)),&t\in[t_f,\infty)
	\end{cases}\label{thetahat}
\end{equation}
where $\hat{\Theta}_1(0)$, $\hat{\Theta}_2(0)$ are the initial estimates, and $\hat\Theta_1(\underline D)=\{\mathcal{A},\mathcal{S}_d,\bar{\mathcal{G}}\}$, which are obtained by \eqref{matha}, is the exact identification of $\Theta_1$ according to Lemma \ref{lemdmd},  and $\hat\Theta_2(t_f)$ is the exact identification of $\Theta_2$ according to Lemma \ref{lemada}.

 Different from the nominal control $U$, here $U_d(t)$ is potentially unsafe due to the identification error of $\Theta$ for $t\in[0,t_f)$. {Especially, the safety mechanism designed about \eqref{sigma},  \eqref{hk} relies on the exact value of $h(e(D),D)$ recalling \eqref{p0}, which is inaccessible in the case of the uncertain plant. We instead compute its lower and upper bounds, $\underline{h}$ and $\overline{h}$, over the parameter uncertainty sets $\bar{\Theta}_1,\bar{\Theta}_2$ defined in Assumptions \ref{assum:theta_bounds} and \eqref{thetabound}, as $\overline{h}=\max_{\theta_1\in\bar{\Theta}_1,\theta_2\in\bar{\Theta}_2}\mathcal{P}_0(\chi(0);\theta_1,\theta_2),$
	\begin{align}
		\underline{h}=\min_{\theta_1\in\bar{\Theta}_1,\theta_2\in\bar{\Theta}_2}\mathcal{P}_0(\chi(0);\theta_1,\theta_2).\label{uh}
	\end{align}
For \eqref{sigma}, the sign of $h(e(D),D)$ should be known. Therefore, we impose the following assumption to ensure the sign is unambiguous:
	\begin{assumption}
	The lower and upper bounds $\underline{h}$ and $\overline{h}$ have the same sign, i.e., they satisfy either $\overline{h} \le 0$ or $\underline{h} > 0$.\label{assumsign}
	\end{assumption}
    
This assumption ensures that the interval $[\underline{h}, \overline{h}]$ does not contain zero, which allows the sign of $h(e(D),D)$ to be determined a priori despite parametric uncertainties by evaluating the bounds $\underline{h},\overline{h}$. When Assumption 4 does not hold, the sign of $h(e(D),D)$ cannot be reliably inferred in the presence of uncertainties. Therefore, the system has to be conservatively treated as unsafe at that instant, i.e., $h(e(D),D)\le 0$ is adopted in the subsequent control design, which will lead to rigorous safety guarantees established starting from $t=\bar D+\bar t$ with a positive $\bar t$ assigned by users.

	Then the modified function $\varsigma$ \eqref{sigma} is introduced as
	\begin{equation}
		\varsigma(t)=\begin{cases}
			\begin{cases}
				(-\underline{h}+\epsilon)e^{\frac{1}{\bar{t}^2}-\frac{1}{(t-\overline{D}-\bar{t})^2}},&0\le t<\overline{D}+\bar{t},\\
				0,&t\geq \overline{D} +\bar{t}
			\end{cases}\\
			\phantom{==========,}    if\quad \overline{h} \leq0.\\
			0,\quad t\geq 0,\quad \phantom{===,} if\quad \underline{h} >0
		\end{cases}\label{sigma2}
	\end{equation}
	where the positive design parameters $\epsilon$ and $\bar t$ are free. The role of \eqref{sigma2} is to drive the state that is unsafe initially to the safe region by a prescribed time $\overline{D}+\bar{t}$.

Also, the selection of the design parameter $k_i$ is modified to account for model uncertainty by using the known parameter bounds from Assumptions \ref{assumboundd}--\ref{assum:theta_bounds}:
\begin{equation}
	k_i\ge\max_{\theta_1\in\bar{\Theta}_1,\theta_2\in\bar{\Theta}_2}\{0,\hat{k}_i(\chi(0);\theta_1, \theta_2\},i=1,\cdots,n-1 \label{kcon}
\end{equation}
and $k_n\ge0$, where $\hat{k}_i$ is obtained from substituting the variable $\theta_1,\theta_2$ for the unknown $\Theta$ in \eqref{hk},  which are specified by the bounds in Assumptions \ref{assum:theta_bounds} and  \eqref{thetabound}, respectively.  This choice guarantees the initial positivity of $h_i(z_i(D),D),i=2,\cdots,n$ from $h_1(z_1(D),D)>0$ ensured by \eqref{sigma2}.

To ensure robust safety against all possible parameter values before identification is complete, we construct the  following safe  controller under uncertainty:
\begin{equation}
	U_a(t)=\begin{cases}
		\frac{1}{\vartheta_0}\max\limits_{\theta_1\in\bar{\Theta}_1,\theta_2\in\bar{\Theta}_2}\vartheta_0\mathcal{U}(\chi(t);\theta_1,\theta_2),\quad 0  \le  t < \underline{D} \\
		\frac{1}{\vartheta_0}\quad\max\limits_{\theta_2\in\bar{\Theta}_2}\vartheta_0\mathcal{U}(\chi(t);\hat{\Theta}_1(\underline{D}),\theta_2),\quad   \underline{D}  \le  t  <  t_f,\\
		U_d(t). \phantom{aaaaaaaaaaa=======}  t\ge t_f
	\end{cases}\label{ua}
\end{equation}
where
\begin{equation}
	\vartheta_0:=\operatorname{sgn}(\vartheta(e(0),0)).\label{var0}
\end{equation}
The switching time $t_f$ is the triggering time when the exact
parameter identification is achieved,  determined by \eqref{tf}. The diagram of the closed-loop system about the proposed data-driven safe control is depicted in Fig.\ref{fig:diag1}, and the result is presented in the next subsection. 
\begin{figure}[t]
	\centering	
	\includegraphics[width=1\linewidth]{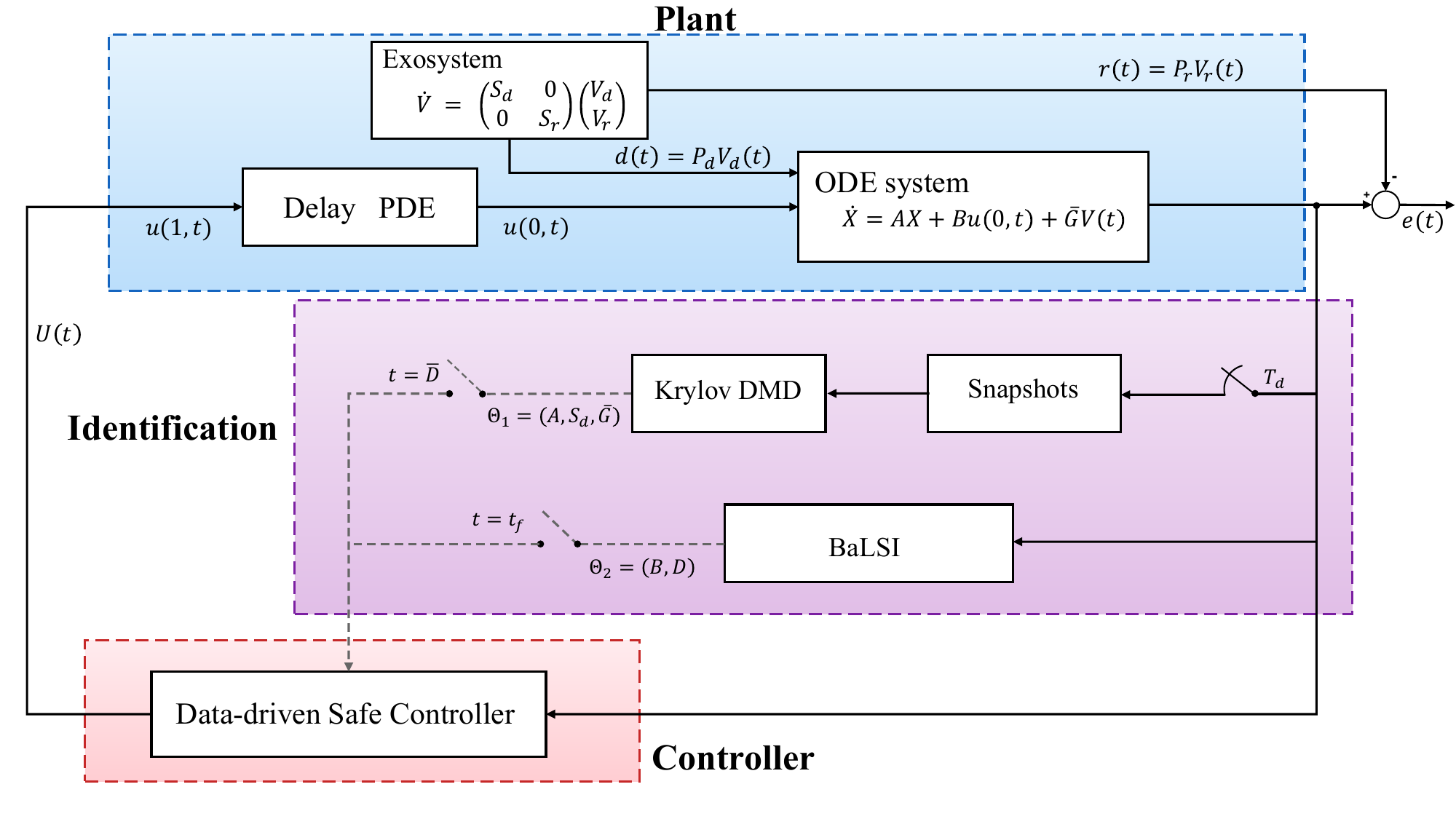}
	\caption{ Diagram of the proposed data-driven safe  control system.}		
	\label{fig:diag1}
\end{figure}

\subsection{Result with the controller under uncertainty}
\begin{definition}\label{def1}
	The initial safety condition is defined as: the safety is ensured in the uncontrolled period $t\in[0,D]$,  i.e., $h(e(t),t) \ge 0, \forall t \in [0,D]$ and $h(e(D),D) \neq 0$. It means that the safety holds for the uncontrolled period at the beginning and the error state $e(t)$ does not stay at the safe boundary
	at the initial regulation time $t=D$. 
\end{definition}
\begin{theorem}\label{theo1}
	For initial data $X(0)\in\mathbb{R}^n$, $V(0)\in\mathbb{R}^{n_v}$, and the safety constraint $h(e,t)$ satisfying Assumptions \ref{assumh}, \ref{assumsign}, choosing the design parameters $k_1,\cdots,k_n$ satisfying \eqref{kcon},  the closed-loop system consisting of the disturbed linear plant \eqref{org1}, \eqref{org2} with unknown $\Theta$ \eqref{theta}, and the data-driven safe controller \eqref{ua},  has the following properties\par
	\noindent 1) There exists a unique solution $X(t)\in\mathbb{R}^n$, $V(t)\in\mathbb{R}^{n_v}$, $u[t]\in L^2(0,1)$ to the system \eqref{org11}--\eqref{org13} with control input \eqref{ua}.\par
	\noindent 2)		The identification $\hat{\Theta}(t)$ of unknown parameters ${\Theta}$    is bounded and reaches the true value in finite time $t_f$ that is defined in \eqref{tf}.\par
	\noindent 	3) Tracking error state $e(t)$ \eqref{e} is convergent to zero and all states are bounded;\par
	\noindent 4) The  safety is ensured in the sense that
	
	a) If the initial safe condition in Definition \ref{def1} is met, then safety is guaranteed for $t \in [0, \infty)$.
	
	b) If the initial safe condition in Definition \ref{def1} is not met, but $h(e(D),D) >0$, then safety is guaranteed for $t \ge D$.
	
	c)  If the initial safe condition in Definition \ref{def1} is not met, and $h(e(D),D) \le0$, the safety is guaranteed for $t\ge D+\bar{t}$ where $\bar{t}>0$ can be arbitrarily assigned by users.		
	
\end{theorem}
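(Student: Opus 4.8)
The plan is to reduce the uncertain closed loop to the nominal one after the finite identification time $t_f$, and to control the transient $[0,t_f)$ entirely through the worst-case structure of $U_a(t)$ in \eqref{ua}. On each of the three intervals in \eqref{thetahat} the estimate $\hat\Theta(t)$ is frozen, so \eqref{org11}--\eqref{org13} is an ODE--transport cascade driven by a feedback that is the pointwise maximum of the continuous map $\theta\mapsto\vartheta_0\mathcal U(\chi(t);\theta)$ over the compact sets $\bar\Theta_1,\bar\Theta_2$; by the maximum theorem this feedback is continuous in $\chi$, and since $\mathcal U$ is locally Lipschitz in $\chi$, existence and uniqueness of $(X,V,u[\cdot])$ follow interval-by-interval by the method of characteristics and are concatenated across the switching instants, giving claim~1, while linearity of the plant and boundedness of the input on the finite window $[0,t_f)$ rule out finite escape. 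Claim~2 is then immediate: $\hat\Theta(t)$ stays in $\bar\Theta_1\times\bar\Theta_2$ because the DMD output lies in $\bar\Theta_1$ and the BaLSI recursion \eqref{est1} projects onto $\bar\Theta_2$, while exactness at $t_f$ is precisely Lemmas~\ref{lemdmd} and \ref{lemada} ($\hat\Theta_1$ exact at $\underline D$, $\hat\Theta_2$ exact at $t_f$).

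For claim~3 the key observation is that for $t\ge t_f$ we have $\hat\Theta(t)=\Theta$, so $U_a(t)=U_d(t)=\mathcal U(\chi(t);\Theta)$ coincides with the exact nominal law \eqref{u}. Hence from $t_f$ onward the backstepping transformation \eqref{wxt} with the true parameters is exact and carries the system into the target form \eqref{tar1}--\eqref{tar3}; invoking Lemma~\ref{lembound}, restarted from the (bounded) state at $t_f$, yields $e(t)\to0$ and boundedness of all states for $t\ge t_f$, which together with the no-escape bound on $[0,t_f)$ establishes boundedness on $[0,\infty)$ and convergence of $e$.

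Claim~4 is the core of the argument and rests on the one-sided estimate $\vartheta_0\,w(0,t)\ge0$ for $t\ge D$. Writing $U^\star(t):=\mathcal U(\chi(t);\Theta)$ for the law that would force $w(0,\cdot)=0$, the $\max$/sign construction in \eqref{ua}, \eqref{var0} gives $\vartheta_0\big(U_a(t)-U^\star(t)\big)\ge0$ on $[0,t_f)$, since on each subinterval the maximum is taken over a set containing the true parameters (all of $\bar\Theta_1\times\bar\Theta_2$ on $[0,\underline D)$, then $\{\Theta_1\}\times\bar\Theta_2$ on $[\underline D,t_f)$), with equality for $t\ge t_f$; propagating this difference through the transport PDE \eqref{org10}, \eqref{org12} and the $x=1$ trace of \eqref{wxt} yields $\vartheta_0 w(0,t)\ge0$. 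Because $\vartheta\neq0$ has constant sign by Assumption~\ref{assumh}, so that $\vartheta_0=\operatorname{sgn}\vartheta$, and $b>0$, the last row of \eqref{doth} reads $\dot h_n=-k_nh_n+b\vartheta w(0,t)$ with $b\vartheta w(0,t)=b|\vartheta|\,\vartheta_0 w(0,t)\ge0$, hence $\dot h_n\ge-k_nh_n$. I would then check the initial data $h_i(D)\ge0$: $h_1(D)=h(e(D),D)+\varsigma(D)>0$ (when $\overline h\le0$ the shift obeys $\varsigma(D)\ge-\underline h+\epsilon>-h(e(D),D)$ by \eqref{sigma2}, and when $\underline h>0$ one has $\varsigma\equiv0$ with $h(e(D),D)>0$), and inductively $h_i(D)\ge0$ via the gain choice \eqref{kcon}, whose maximum over $\bar\Theta_1,\bar\Theta_2$ dominates the true $\hat k_{i-1}$. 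The comparison lemma gives $h_n(t)\ge0$ for $t\ge D$; the identity $h_i=\dot h_{i-1}+k_{i-1}h_{i-1}$ from \eqref{hi} then drives a downward induction, exactly as in Lemma~\ref{lemmasafe2}, yielding $h_i(t)\ge0$ for all $i$ and $t\ge D$, in particular $h(e(t),t)=h_1(t)-\varsigma(t)\ge-\varsigma(t)$. Reading off the behaviour of $\varsigma$ in \eqref{sigma2} under Assumption~\ref{assumsign} settles the three cases: in (a)--(b), $h(e(D),D)>0$ forces $\underline h>0$, so $\varsigma\equiv0$ and $h=h_1\ge0$ for $t\ge D$ (with $[0,D]$ safety added by hypothesis in (a)); in (c), $h(e(D),D)\le0$ forces $\overline h\le0$, so $\varsigma>0$ decays to $0$ by the prescribed recovery time, after which $h=h_1-\varsigma\ge0$.

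The hard part will be the sign propagation $\vartheta_0(U_a-U^\star)\ge0\Rightarrow\vartheta_0 w(0,t)\ge0$ through the delay under switching estimates, because the transformation \eqref{wxt} itself depends on the only partly identified parameters on $[0,t_f)$, so the target form \eqref{tar1}--\eqref{tar3} does not hold verbatim there; one must argue directly on the perturbed $w$-dynamics that the worst-case input cannot let $h_n$ overshoot below zero, and must confirm that the barrier initial data at $t=D$ remain valid despite the a priori unknown value of $h(e(D),D)$, which is exactly the purpose of the bounds $\underline h,\overline h$ and Assumption~\ref{assumsign}.
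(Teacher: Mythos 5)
Your proposal is correct and follows essentially the same route as the paper: reduce to the target system \eqref{tar1}--\eqref{tar2} with the perturbed boundary trace $w(1,t)=U_a(t)-U(t)=\eta(t)$, use the max/sign construction in \eqref{ua}, \eqref{var0} to get $\vartheta_0 w(0,t)\ge0$ for $t\ge D$, and cascade nonnegativity of the $h_i$ from the initial data secured by \eqref{kcon} and \eqref{sigma2}. The ``hard part'' you flag at the end dissolves once you adopt the paper's convention that the transformation \eqref{wxt} is taken with the \emph{true} parameters throughout, so \eqref{tar1}--\eqref{tar2} hold verbatim on $[0,t_f)$ and the entire parameter mismatch is confined to the boundary condition, whence $w(0,t)=w(1,t-D)=\eta(t-D)$ by pure transport and the sign propagation is immediate.
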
 
\begin{proof}
	1)From the initial condition \eqref{u0} and the transformations \eqref{Z}, \eqref{h1}, \eqref{hi}, as well as  the exogenous signal model \eqref{modv}, we have $w[0]\in L^2(0,1)$, $H(0)\in\mathbb{R}^n$, and $V(t)\in\mathbb{R}^{n_v}$. Defining the difference between $U_a$ \eqref{ua} and the nominal controller \eqref{u} as
$
		\eta(t)=U_a(t)-U(t).$
	and applying the designed controller $U_a(t)$ to the plant \eqref{org11}--\eqref{org13}, the target system becomes \eqref{tar1}, \eqref{tar2} with
	\begin{equation}
		w(1,t)=	\eta(t). \label{eta2}
	\end{equation}
	We know that the difference $\eta(t)$ originates from the parameter mismatch error and only contains system signals $\chi(t)$ defined in \eqref{chi}.  Thus, by defining the norm as $\Omega(t) = |X(t)|^2 + |V(t)|^2 + \|u(t)\|^2$ and applying the transformations \eqref{Z}, \eqref{h1}, \eqref{hi} along with the predictions \eqref{predx}, we can establish the following inequality for some positive constant $\mathcal{N}$:
	$
	{\eta}^2(t) \le \mathcal{N} \Omega(t).
	$
	The inherent boundedness of the linear system with bounded input over a finite horizon ensures that the norm $\Omega(t)$ is bounded. 
	From the above inequality, it directly follows that $\eta(t)$ is bounded within $t\in[0,t_f)$. 
	The controller switching in \eqref{ua} ensures that $\eta(t)$ is zero for $t \ge t_f$. Then, we obtain that $w[t]\in L^2(0,1)$ with $w[0]\in L^2(0,1)$ from \eqref{eta2}, \eqref{tar2}.
	Recalling the target system \eqref{tar1}, \eqref{tar2}, \eqref{eta2}, we have the solution of ODE that
	$
	H(t)=H(0)e^{A_ht}+\int_{0}^{t}e^{A_h(t-\tau)}B\vartheta(e(\tau),\tau)w(0,\tau)\diff \tau
	$. Together with the boundedness of $w[t]$ and boundedness of  $\vartheta$ \eqref{var} from Assumption \ref{assumh}, it follows that  $H(t)$ on $t \in [0,\infty)$ is bounded.
		Following the recursive process in the proof of Lemma \ref{lembound} shown in Appendix \ref{appbound}, we know that all $\frac{\partial h_i}{\partial z_j},1\leq j\leq i\leq n,i,j\in\mathbb{N}$ are bounded. It further concludes that $Z(t)\in \mathbb{R}^n$ considering $H(t)\in \mathbb{R}^n$ and \eqref{h1}--\eqref{sigma}. Then, $X(t)\in \mathbb{R}^n$ is obtained from \eqref{Z}.  
	Applying the Cauchy-Schwarz inequality to the inverse transformation \eqref{inverw}, we have that $u[t]\in L^2(0,1)$ is then obtained from $w[t]\in L^2(0,1)$. The proof of this property is complete.\par
	2) According to Lemmas \ref{lemdmd}, \ref{lemada}, we directly obtain the property 2) that the	parameter identification $\hat{\Theta}(t)$ achieves the true value ${\Theta}$ in finite time $t_f$.\par
	3) In the target system \eqref{tar1}, \eqref{tar2} with \eqref{eta2}, the difference $\eta$ in \eqref{eta2} will vanish after a finite time $t=t_f$, recalling Property 2) and \eqref{eta2}. Together with that the matrix $A_h$ in \eqref{doth} is Hurwitz, $|H(t)|$ is thus exponentially convergent to zero. 
	Then, follow the same process in proving Lemma \ref{lembound} in Appendix \ref{appbound}, this property is obtained.

	4) For the target system \eqref{tar1}, \eqref{tar2} with \eqref{eta2}, the sign of $\vartheta(t)$ defined in \eqref{var}, is constant, because Assumption \ref{assumh} implies that the function $\vartheta(e(t),t)$  is continuous and nonzero. As a result, $\operatorname{sgn}(\vartheta(t))$ is equal to $\vartheta_0$ defined in \eqref{var0}. Considering \eqref{eta2} and \eqref{tar2}, it is known that  $b\vartheta w(0,t)\ge 0$ for all $t \in[D,t_f+D)$. 
	Recalling \eqref{kcon},  the positivity of the initial barrier function value, i.e., $h_{i}\left(\underline{z}_{i}\left(D\right), D\right)$, is guaranteed by design parameters $k_i$  which belong to the subset of \eqref{k}. Therefore, from the structure of $H(t)$ in \eqref{tar1}, we know that all barrier functions $h_i$ are nonnegative during the time period $t\in[D,\infty)$, i.e., $h_1(t) \ge 0$  is ensured for all $t \ge D$. Next, we  prove property 3) about safety by examining each of the three cases:\par

	a) Because the initial safe condition in Definition \ref{def1} is met, \eqref{sigma2} directly imply $\varsigma(t)\equiv0$. Consequently, by \eqref{h1}, $h(e(t),t)=h_1(z_1(t),t)\ge0$ for all $t\ge D$. Together with the initial condition, this establishes system safety, $h(e(t),t) \ge 0$, for all time $t \in [0,\infty)$.\par

	b) Even though the safety is not guaranteed during the period $[0,D)$ in this case, considering $h(e(D),D) \ge 0$ and through the same process with the proof in a), we have $h(e(t),t) \ge 0$ holds on $t \in  [D, \infty)$.
	
	c) We know that $\varsigma(t)$ is identically zero after a preset finite time $t \ge D+\bar{t}$ according to \eqref{sigma2}. It follows \eqref{h1} that $h(e(t),t) = h_1(z_1(t),t)\ge0$ for all $t \ge D+\bar{t}$.  Thus, safety is guarantee after a finite time  $t \ge D+\bar{t}$.\par	
	The proof of this theorem is complete.
\end{proof}

\section{Control Design using output data}\label{secout}

In contrast to the preceding full-state feedback design, this section addresses a more practical situation in which only the output state is measurable. We develop an output-feedback safe controller by reconstructing the system and disturbance dynamics directly from output measurements.

Given the very limited amount of data available for identification, in addition to Assumptions \ref{assumboundd}, \ref{assum:theta_bounds},  it is necessary to impose the following additional assumptions. For the original plant \eqref{org1}, \eqref{org2}, and the exogenous model \eqref{modd}--\eqref{modv}, we now assume that the matrices $B, G, P_d$, which characterize the channels through which the control input and external disturbances act on the system, are known. The system matrices $A, S_d$ of the plant and disturbances remain unknown, but are assumed to take the following companion form:
\begin{equation}
	A\! =\!\! \begin{pmatrix}
  0 & 1 & & \\
    & \ddots & \ddots & \\
    & & 0 & 1 \\
  a_1 & \dots & a_{n-1} & a_n
\end{pmatrix}\!,\!
S_d\!=\!\! \begin{pmatrix}
  0 & 1 & & \\
    & \ddots & \ddots & \\
    & & 0 & 1 \\
  \beta_1 & \dots & \beta_{n_d-1} & \beta_{n_d}
\end{pmatrix}\!.\label{a2}
\end{equation}
Accordingly, the unknown argument $\Theta$ \eqref{theta} becomes
\begin{equation}
	\Theta=\{\underbrace{A,S_d}_{\Theta_1},D\}.\label{theta2}
\end{equation}
Additionally, the following assumption ensures the unknown initial states are within a known compact set $\bar{\zeta}_0$, which is required for ensuring safety before the plant states and disturbances are successfully estimated by the data-driven observer.
\begin{assumption}\label{assumboundv}
	The bounds of the initial states $X(0)$ of the plant and the initial values $V_d(0)$ of the exogenous signal are known and arbitrary.
\end{assumption}
The safe barrier function $h$  satisfies Assumptions \ref{assumh}, \ref{assumsign}.
Besides, given that the system state and disturbances are unmeasured in this section,
we impose an additional assumption for the barrier function $h$:
\begin{assumption}\label{assumh3}
The barrier function $h$ ensures that there exists a positive $\xi_e$ such that
\begin{equation}
	\tilde{\mathcal{P}}(\tilde{V}(t),\tilde{X}(t))\leq\xi_{e}(|\tilde{V}(t)|+|\tilde{X}(t)|)\label{tilp}
\end{equation}
where $\tilde{\mathcal{P}}:= \bar{\mathcal{P}}(T_{z}X(t)+T_{v}V(t),V(t),t)-\bar{\mathcal{P}}(T_{z}\hat{X}(t)+T_{v}\hat{V}(t),\hat{V}(t),t)$
and
$\bar{\mathcal{P}}$ \eqref{p} is determined by  \eqref{h1}--\eqref{var}.
\end{assumption}

\subsection{Identification of the unknown system dynamics}
In this section, we identify system parameters only from input-output data to construct an extended observer for unmeasured states and disturbances, a prerequisite for safe output regulation.
During the period $t\in[0,\underline{D}]$, we introduce the extended system \eqref{tilA} as well.  Following the methodology of Sec. \ref{secdmd}, we apply the Koopman operator $\mathcal{K}(t)$ to the observable $\tilde{X}$ \eqref{tilx}, employing the same set of Koopman elements: the eigenfunction $\tilde{\varphi}(\tilde{X})$ \eqref{eigenfun}, modes $\tilde{\nu}_i$ \eqref{nu}, and eigenvalues $\tilde{\lambda}_i$.  As with \eqref{snap}, data is collected at a fixed sampling time $T_d$ over the interval, consistent with Assumption \ref{assumtd} and Remark \ref{remsam}.
Accordingly, the measurement data matrix is constructed exclusively from the output signals $Y(t)$ \eqref{org2}:
\begin{equation}
\tilde{\mathcal{Y}}=
\begin{pmatrix}
	Y(0)\quad Y(1)\quad \cdots\quad Y(2\tilde{n}-1) \\
\end{pmatrix}\in\mathbb{R}^{1\times2\tilde{n}}\label{snap2}
\end{equation}
with snapshots $Y(k) =  Y(kT_d), k\in\mathbb{N}$.

Accordingly, the output is expressed using the extended state $\tilde{X}$  as 
$	Y={\bar{C}}\tilde{X}$, where $\bar{C}={\begin{pmatrix}0&C\end{pmatrix}}$. 
From the output data sequence in \eqref{snap2}, we form the Hankel matrix
\begin{equation}
H_{\tilde{n}}( Y)=\begin{pmatrix} Y(0)& Y(1)&\ldots& Y(\tilde{n}-1)\\ Y(1)& Y(2)&\ldots& Y(\tilde{n})\\\vdots&\vdots&\vdots&\vdots\\ Y(\tilde{n}-1)& Y(\tilde{n})&\ldots& Y(2\tilde{n}-2)\end{pmatrix}\in\mathbb{R}^{\tilde{n}\times\tilde{n}}.\label{hankely}
\end{equation}
This Hankel matrix can also be factorized into the product of the extended observability and controllability matrices, i.e., $H_{\tilde{n}}(Y)=Q_oQ_c$, where $Q_c$ is given by \eqref{qc} and $Q_o$ is the extended observability matrix
\begin{equation}
    Q_o = \begin{pmatrix} \bar{C}^\top & (\bar{C}\tilde{A}_d)^\top & \dots & (\bar{C}\tilde{A}_d^{\tilde{n}-1})^\top \end{pmatrix}^\top\in\mathbb{R}^{\tilde{n}\times\tilde{n}}.\label{qo2}
\end{equation}

Unlike \cite{data2}, which requires the output dimension to be no less than the disturbance dimension, the following lemma establishes a full-rank condition for $H_{\tilde{n}}(Y)$ using only low-dimensional output data.

\begin{lemma}\label{lemfull2}
If
\begin{equation}
	\operatorname{rank}\begin{pmatrix}A-\tilde{\lambda} I&GP_d\\C&0\end{pmatrix}=n+1,\quad\forall\tilde{\lambda}\in\operatorname{eig}(S_d),\label{rank1}
\end{equation}
the Hankel matrix $	H_{\tilde{n}}(Y) $
satisfies
\begin{equation}
	\operatorname{rank}	H_{\tilde{n}}({Y}) =\tilde{n}\label{ranhy}
\end{equation}
for almost all $Y(0)\in\mathbb{R}$.
\end{lemma}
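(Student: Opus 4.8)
The plan is to exploit the stated factorization $H_{\tilde{n}}(Y)=Q_oQ_c$ into the extended observability matrix $Q_o$ in \eqref{qo2} and the Krylov (controllability) matrix $Q_c$ in \eqref{qc}. Since both factors are square of size $\tilde{n}\times\tilde{n}$, the product attains rank $\tilde{n}$ in \eqref{ranhy} if and only if each factor has rank $\tilde{n}$. I would therefore split the proof into two independent claims: that $Q_o$ is full rank (a structural property, independent of the initial datum), and that $Q_c$ is full rank for almost every initial condition $\tilde{X}(0)$, which is what ``almost all $Y(0)$'' encodes, since every snapshot $Y(k)=\bar{C}\tilde{A}_d^k\tilde{X}(0)$ is generated by $\tilde{X}(0)$.

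For $Q_c$, by Assumption \ref{assumeigen} the matrix $\tilde{A}$ has simple spectrum, and by Assumption \ref{assumtd} the sampled matrix $\tilde{A}_d=e^{\tilde{A}T_d}$ inherits $\tilde{n}$ distinct eigenvalues, hence is non-derogatory. Diagonalizing $\tilde{A}_d=W\Lambda W^{-1}$ and writing $\xi=W^{-1}\tilde{X}(0)$, the Krylov matrix factors as $Q_c=W\,\mathrm{diag}(\xi)\,\mathcal{V}$, where $\mathcal{V}$ is the Vandermonde matrix of the distinct eigenvalues. Thus $\det Q_c=\det W\cdot\big(\prod_i\xi_i\big)\cdot\det\mathcal{V}$, and since $\det\mathcal{V}\neq0$, the quantity $\det Q_c$ is a nontrivial polynomial in the entries of $\tilde{X}(0)$. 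Its zero set is a finite union of hyperplanes of measure zero, so $Q_c$ has rank $\tilde{n}$ for almost all $\tilde{X}(0)$. This is the step that supplies the genericity.

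For $Q_o$, which is the crux, $Q_o$ has rank $\tilde{n}$ if and only if the pair $(\bar{C},\tilde{A}_d)$ is observable, which by Assumption \ref{assumtd} is equivalent to observability of the continuous pair $(\bar{C},\tilde{A})$. I would verify the latter by the PBH test, checking full column rank of $\begin{pmatrix}\tilde{A}-\tilde{\lambda}I\\ \bar{C}\end{pmatrix}$ at every $\tilde{\lambda}\in\operatorname{eig}(\tilde{A})=\operatorname{eig}(A)\cup\operatorname{eig}(S_d)$, a union that is disjoint because the spectrum is simple. For $\tilde{\lambda}\in\operatorname{eig}(A)$, the block $S_d-\tilde{\lambda}I$ is invertible, so elementary row operations collapse the test to the PBH condition of $(C,A)$, which holds by the standing observability of $(C,A)$. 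For $\tilde{\lambda}\in\operatorname{eig}(S_d)$ the block $A-\tilde{\lambda}I$ is invertible; solving $v=-(A-\tilde{\lambda}I)^{-1}GP_d\,u$ from the middle block rows reduces any kernel vector $(u,v)$ to the two requirements $u\in\ker(S_d-\tilde{\lambda}I)$ and $C(A-\tilde{\lambda}I)^{-1}GP_d\,u=0$. A Schur-complement evaluation of \eqref{rank1} shows that the rank condition there is precisely the statement $C(A-\tilde{\lambda}I)^{-1}GP_d\neq0$; combined with observability of $(P_d,S_d)$, which forces $P_d\,u_0\neq0$ on the one-dimensional eigenspace $\operatorname{span}(u_0)$, this excludes a nonzero kernel vector and yields full column rank.

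The hard part will be this last case, $\tilde{\lambda}\in\operatorname{eig}(S_d)$: converting the Rosenbrock rank condition \eqref{rank1} into triviality of the augmented observability kernel. The subtlety is that \eqref{rank1} constrains the disturbance-to-output map $C(A-\tilde{\lambda}I)^{-1}GP_d$ on the entire disturbance space, whereas observability only requires its nonvanishing along the single eigenvector direction $u_0$ of $S_d$; bridging this gap is where the simple-spectrum assumption and the observability of $(P_d,S_d)$ must be invoked jointly. I would also emphasize that the genericity in the statement necessarily resides in $Q_c$ alone, because $Q_o$ is structural: if observability failed, no initial condition could restore rank $\tilde{n}$, which is exactly why \eqref{rank1} is imposed as a hypothesis rather than obtained generically.
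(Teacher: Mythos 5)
Your overall route coincides with the paper's (Appendix \ref{appfull2}): factor $H_{\tilde n}(Y)=Q_oQ_c$ with $Q_o$, $Q_c$ as in \eqref{qo2}, \eqref{qc}; obtain genericity of $\operatorname{rank}Q_c=\tilde n$ over the initial condition exactly as in Lemma \ref{lemfull1}; and reduce $\operatorname{rank}Q_o=\tilde n$ to a PBH test on $(\bar C,\tilde A)$, split over $\operatorname{eig}(A)$ (handled by observability of $(C,A)$, since $S_d-\tilde\lambda I$ is invertible there by Assumption \ref{assumeigen}) and over $\operatorname{eig}(S_d)$ (handled by the Schur-complement reading of \eqref{rank1}). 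Those parts of your proposal are correct and match the paper step for step, including the transfer from $(\bar C,\tilde A)$ to $(\bar C,\tilde A_d)$ via Assumption \ref{assumtd}.

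The genuine gap is precisely the step you yourself flag as ``the hard part,'' and the bridge you propose does not close it. After the Schur complement, \eqref{rank1} yields only that the row vector $P_C(\tilde\lambda)=-C(A-\tilde\lambda I)^{-1}GP_d\in\mathbb{R}^{1\times n_d}$ is nonzero, whereas the PBH test at $\tilde\lambda\in\operatorname{eig}(S_d)$ requires the scalar $P_C(\tilde\lambda)u_0\neq0$ for the specific eigenvector $u_0$ of $S_d$. Knowing $P_C(\tilde\lambda)\neq0$ together with $P_du_0\neq0$ (which is what observability of $(P_d,S_d)$ buys you) does not imply $P_C(\tilde\lambda)u_0\neq0$ when $n_d>1$: a nonzero row vector can annihilate a particular nonzero column vector, and $P_du_0\neq0$ says nothing about its composition with $C(A-\tilde\lambda I)^{-1}G$. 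The paper closes this step using structure your proposal never invokes: in Sec.~\ref{secout} the matrix $S_d$ is assumed to be in companion form \eqref{a2}, so its eigenvector for $\tilde\lambda_i$ is the explicit Vandermonde vector $v_{d,i}=(1,\tilde\lambda_i,\dots)^\top$, and the paper argues directly that $P_C(\tilde\lambda_i)v_{d,i}\neq0$ from the location of the nonzero entry of $P_C(\tilde\lambda_i)$. (That step of the paper is itself terse, but it is the intended mechanism: any complete argument must pin down the pairing of $P_C(\tilde\lambda)$ with the one-dimensional eigenspace of $S_d$, rather than conclude from $P_C(\tilde\lambda)\neq0$ alone.) As written, your treatment of the $\tilde\lambda\in\operatorname{eig}(S_d)$ case would fail, so the observability of $Q_o$ is not established.
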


\begin{proof}
	The proof is shown in  Appendix \ref{appfull2}.
\end{proof}

Given that the Hankel matrix $H_{\tilde{n}}(Y)$ has full rank, as established in Lemma \ref{lemfull2}, it is natural to apply the Krylov DMD method to extract the system's dynamic characteristics, as follows.

\begin{lemma}
If $\operatorname{rank}H_{\tilde{n}}(Y)=\tilde{n}$,
\begin{equation}
	\mathcal{K}(T_d)H_{\tilde{n}}(Y)=H_{\tilde{n}}(Y)\underbrace{\begin{pmatrix}
  0 & & & -f_0 \\
  1 & 0 & & -f_1 \\
    & \ddots & \ddots & \vdots \\
    & & 1 & -f_{\tilde{n}-1}
\end{pmatrix}}_{F\in\mathbb{R}^{\tilde{n}\times\tilde{n}}} \label
	{f2}
\end{equation}
with 
\begin{equation}
	\begin{aligned}f&=\mathrm{col}(f_{0},\ldots,f_{\tilde{n}-1})\\&=-(H_{\tilde{n}}^{\top}(Y)H_{\tilde{n}}(Y))^{-1}H_{\tilde{n}}^{\top}(Y)\mathrm{col}(Y(\tilde{n}),\ldots,Y(2\tilde{n}-1))\end{aligned}
\end{equation}
holds.
\end{lemma}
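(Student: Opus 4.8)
The plan is to follow the same reasoning as for the full-state companion identity \eqref{kh}--\eqref{fcol} (paralleling [\citen{data2}, Lemma 3]), while exploiting the structural simplification that $H_{\tilde{n}}(Y)$ is now \emph{square} and, by hypothesis, of full rank $\tilde{n}$, hence invertible. First I would pin down the action of the Koopman operator on the output observable. Writing $Y(k)=\bar{C}\tilde{X}(k)=\bar{C}\tilde{A}_d^{k}\tilde{X}(0)$ and recalling that the identity observable evolves as $\mathcal{K}(T_d)\phi(\tilde{X})=\phi(e^{\tilde{A}T_d}\tilde{X})$, applying $\mathcal{K}(T_d)$ to each scalar entry advances its time index by one, i.e. $Y(k)\mapsto Y(k+1)$. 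Thus $\mathcal{K}(T_d)H_{\tilde{n}}(Y)$ is exactly the one-step forward-shifted Hankel matrix, whose first $\tilde{n}-1$ columns coincide with the last $\tilde{n}-1$ columns of $H_{\tilde{n}}(Y)$ and whose final column is $\mathrm{col}(Y(\tilde{n}),\ldots,Y(2\tilde{n}-1))$.

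Next I would compare this shifted matrix with $H_{\tilde{n}}(Y)F$ column by column. By construction the $j$-th column of the companion matrix $F$ is the standard basis vector $e_{j+1}$ for $j=1,\ldots,\tilde{n}-1$, so the first $\tilde{n}-1$ columns of $H_{\tilde{n}}(Y)F$ reproduce columns $2,\ldots,\tilde{n}$ of $H_{\tilde{n}}(Y)$; these match the shifted matrix automatically from the Hankel structure, leaving only the last column to verify. Since the last column of $F$ is $-f$, it remains to show $-H_{\tilde{n}}(Y)f=\mathrm{col}(Y(\tilde{n}),\ldots,Y(2\tilde{n}-1))$ for the $f$ given by the stated normal equations.

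Finally I would close this last-column identity using the full-rank hypothesis: because $H_{\tilde{n}}(Y)$ is square and of rank $\tilde{n}$, it is invertible and $H_{\tilde{n}}(Y)(H_{\tilde{n}}^{\top}(Y)H_{\tilde{n}}(Y))^{-1}H_{\tilde{n}}^{\top}(Y)=I_{\tilde{n}}$. Substituting the stated $f$ then gives $-H_{\tilde{n}}(Y)f=\mathrm{col}(Y(\tilde{n}),\ldots,Y(2\tilde{n}-1))$ exactly, which establishes \eqref{f2}. The same $f$ can be read off structurally from the Cayley--Hamilton theorem applied to $\tilde{A}_d$: the snapshots obey an order-$\tilde{n}$ linear recurrence whose coefficients are those of the characteristic polynomial of $\tilde{A}_d$, and full rank makes these coefficients unique and equal to the least-squares solution.

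I expect the only delicate step to be the first one — making precise that the abstract Koopman action on observables induces exactly the columnwise forward shift on the numerical Hankel matrix. Once this correspondence between $\mathcal{K}(T_d)$ and the shift is fixed, the rest is a direct consequence of the companion structure of $F$ together with the invertibility of the square Hankel matrix, so that, unlike the tall full-state case, no separate consistency argument for an overdetermined system is required.
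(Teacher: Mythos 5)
Your proposal is correct. The paper does not present its own proof of this lemma---it defers to [\citen{data2}, Lemma 3]---but your argument is precisely the standard one behind that result: the Koopman operator $\mathcal{K}(T_d)$ advances each snapshot index by one, the subdiagonal-of-ones structure of $F$ reproduces the shifted columns automatically from the Hankel structure, and the last column reduces to the identity $-H_{\tilde{n}}(Y)f=\mathrm{col}(Y(\tilde{n}),\ldots,Y(2\tilde{n}-1))$. Your observation that the square, full-rank case makes $H_{\tilde{n}}(Y)(H_{\tilde{n}}^{\top}(Y)H_{\tilde{n}}(Y))^{-1}H_{\tilde{n}}^{\top}(Y)=I_{\tilde{n}}$, so that no separate consistency (Cayley--Hamilton) argument is needed for the overdetermined normal equations, is accurate and is the only point where this output-data version genuinely differs from the full-state version in \eqref{kh}--\eqref{fcol}.
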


\begin{proof}
Its proof can be found in [\citen{data2}, Lemma 3].
\end{proof}

Next, we extract the Koopman eigenvalues from matrix $F$ \eqref{f2}, and subsequently identify the system dynamic matrix.

\begin{lemma}\label{lemdmd2}
Assume that $\operatorname{rank}H_{\tilde{n}}(Y)=\tilde{n}$, and let $\hat{\nu}_i,i=1,\cdots,\tilde{n}$ be the eigenvectors of $F$ coincide with the corresponding eigenvalues $\hat{\lambda}_i$.
The Koopman eigenvalues can be computed from
\begin{equation}
	\tilde{\lambda}_i=\frac{\ln\hat{\lambda}_i}{T_d},\quad i=1,\ldots,\tilde{n} \label{tillam}
\end{equation}
including the eigenvalues $\tilde{\lambda}_i, i =1,\cdots,n_d$, of the disturbance model $S_d$ in \eqref{a2} and $\tilde{\lambda}_i, i =n_d+1,\cdots,\tilde{n}$ of the  system  matrix $A$ in \eqref{a2}.
The exact identification  of the unknown system dynamics ${\Theta}_1=\{A, S_d\}$ is achieved at $t=\underline{D}$, that is $\hat{\Theta}_1(\underline{D})=\{\mathcal{A},\mathcal{S}_d\}\equiv{\Theta}_1$, where
\begin{equation}
       \mathcal{A} = V_A \Lambda_A V_A^{-1},\qquad    \mathcal{S}_d= V_{d} \Lambda_{d} V_{d}^{-1},  \label{mathsd2}
\end{equation}
    with $\Lambda_A = \operatorname{diag}(\tilde{\lambda}_{n_d+1}, \cdots, \tilde{\lambda}_{\tilde{n}})$, $\Lambda_{d} = \operatorname{diag}(\tilde{\lambda}_1, \cdots, \tilde{\lambda}_{n_d})$, and the eigenvector matrices 
		\begin{equation}
    V_A = \begin{pmatrix}
        1 & \dots & 1 \\
        \tilde{\lambda}_{n_d+1} & \dots & \tilde{\lambda}_{\tilde{n}} \\
        \vdots & \ddots & \vdots \\
        \tilde{\lambda}_{n_d+1}^{n-1} & \dots & \tilde{\lambda}_{\tilde{n}}^{n-1}
    \end{pmatrix},
    V_d = \begin{pmatrix}
        1 & \dots & 1 \\
        \tilde{\lambda}_1 & \dots & \tilde{\lambda}_{n_d} \\
        \vdots & \ddots & \vdots \\
        \tilde{\lambda}_1^{n_d-1} & \dots & \tilde{\lambda}_{n_d}^{n_d-1}
    \end{pmatrix},
    \label{eq:vandermonde_matrices}
\end{equation}
	as Vandermonde matrices constructed from the corresponding eigenvalues.
	
\end{lemma}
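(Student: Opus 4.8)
The plan is to prove the statement in three stages: recover the continuous-time eigenvalues from the companion matrix $F$, identify them as the combined spectra of $S_d$ and $A$ via the block structure of $\tilde{A}$, and then reconstruct each companion matrix exactly through a Vandermonde diagonalization. For the first stage, the intertwining relation \eqref{f2}, namely $\mathcal{K}(T_d)H_{\tilde{n}}(Y)=H_{\tilde{n}}(Y)F$, expresses the unit time-shift of the output snapshots through $F$; under the full-rank hypothesis $\operatorname{rank}H_{\tilde{n}}(Y)=\tilde{n}$ the snapshots are linearly independent, so $F$ encodes the minimal linear recurrence obeyed by the samples $Y(k)=\bar{C}e^{\tilde{A}kT_d}\tilde{X}(0)$. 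Its characteristic roots are hence exactly $\hat{\lambda}_i=e^{\tilde{\lambda}_i T_d}$, mirroring the full-state argument of Lemma \ref{lemeigen}. Assumption \ref{assumtd} renders these roots pairwise distinct, so the map $\hat{\lambda}_i\mapsto\frac{\ln\hat{\lambda}_i}{T_d}$ recovers each $\tilde{\lambda}_i$ unambiguously, yielding \eqref{tillam}.

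For the second stage, I would use that $\tilde{A}$ in \eqref{tilA} is block lower-triangular with diagonal blocks $S_d$ and $A$, so $\det(\tilde{A}-\tilde{\lambda}I)=\det(S_d-\tilde{\lambda}I)\det(A-\tilde{\lambda}I)$ and therefore $\operatorname{eig}(\tilde{A})=\operatorname{eig}(S_d)\cup\operatorname{eig}(A)$. This partitions the $\tilde{n}$ recovered values into the $n_d$ eigenvalues of $S_d$ and the $n$ eigenvalues of $A$. The heart of the third stage is the elementary fact that, for a matrix in the companion form \eqref{a2} with eigenvalue $\tilde{\lambda}$, the vector $(1,\tilde{\lambda},\dots,\tilde{\lambda}^{\,n-1})^\top$ is a right eigenvector: the superdiagonal of ones shifts each component, giving $(Av)_k=\tilde{\lambda}^{k}=\tilde{\lambda}v_k$ for $k<n$, while the last row reproduces the characteristic equation $\tilde{\lambda}^{n}=\sum_j a_j\tilde{\lambda}^{j-1}$. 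Assembling these eigenvectors over the distinct eigenvalues — distinctness being guaranteed by Assumption \ref{assumeigen} — yields precisely the Vandermonde matrices $V_A$ and $V_d$ of \eqref{eq:vandermonde_matrices}, which are invertible because their nodes are distinct. Hence $AV_A=V_A\Lambda_A$ gives $A=V_A\Lambda_A V_A^{-1}=\mathcal{A}$, and identically $S_d=V_d\Lambda_d V_d^{-1}=\mathcal{S}_d$, establishing \eqref{mathsd2}. Since every snapshot $Y(0),\dots,Y(2\tilde{n}-1)$ is collected within $[0,\underline{D}]$ under the sampling bound of Remark \ref{remsam}, the identification is complete at $t=\underline{D}$.

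I expect the main obstacle to lie not in the Vandermonde algebra, which is routine once the companion structure is exploited, but in justifying the eigenvalue-to-subsystem assignment of the second stage. In the full-state case of Lemma \ref{lemdmd} the mode structure \eqref{nu} separates disturbance from plant eigenvalues through the vanishing of the $V_d$-block, whereas the scalar Hankel matrix $H_{\tilde{n}}(Y)$ does not expose the lifted modes, so the partition cannot be read off directly from the output data alone. I would close this gap by invoking the a priori knowledge of the disturbance dimension $n_d$ together with the structural separation of the two spectra implied by the exosystem model \eqref{modv}; the rank condition \eqref{rank1} then plays the complementary role of ensuring, via Lemma \ref{lemfull2}, that no disturbance eigenvalue is rendered unobservable and dropped from $H_{\tilde{n}}(Y)$, so that all $\tilde{n}$ eigenvalues are genuinely present before the assignment is performed.
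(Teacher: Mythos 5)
Your proposal is correct and follows essentially the same route as the paper's proof: the eigenvalues of $F$ are recovered exactly as in Lemma~\ref{lemeigen}, and the companion structure in \eqref{a2} together with the distinctness of the eigenvalues (Assumption~\ref{assumeigen}) makes the Vandermonde matrices \eqref{eq:vandermonde_matrices} the eigenvector matrices, giving \eqref{mathsd2}. Your explicit treatment of the eigenvalue-to-subsystem assignment (which of the $\tilde n$ recovered eigenvalues belong to $S_d$ versus $A$) is in fact more careful than the paper's proof, which leaves this partition implicit; your proposed resolution via the known dimension $n_d$ and the spectral structure of the exosystem (eigenvalues of $S_d$ on the imaginary axis) is a reasonable way to close that gap.
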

\begin{proof}
The proof is shown in  Appendix \ref{appdmd2}.
\end{proof}

In addition to \eqref{mathsd2}, alternatively, the coefficients $a_i,d_i$ in the matrices $A$, $S_d$ \eqref{a2} can be obtained by the coefficients of the characteristic polynomial using all known eigenvalues $\tilde{\lambda}_i$.

To estimate the unknown delay parameter $D$, we adapt the same BaLSI design in Sec. \ref{secada} to build the delay identifier.
Since only $D$ is unknown, the general parameter estimator \eqref{est1} simplifies to
\begin{align}
&\hat{D}(t_{i+1})=\arg\min\Big\{|\ell-\hat{D}(t_{i})|^2: \ell\in{D}_0,\notag\\
&{F}_n(t_{i+1},\mu_{i+1})={Q}_n(t_{i+1},\mu_{i+1})\ell,n=1,2,\cdots\Big\},\label{est2}
\end{align}
where ${F}_n$, ${Q}_n$ are defined as in  \eqref{gn}, and the set $
{D}_0:=\{\ell\in \mathbb{R}:\underline{D}\leq \ell\leq\overline{D}\} 
$
is given in Assumption \ref{assumboundd}.
\begin{lemma}\label{lemada2}
The finite-time exact identification of the unknown
parameters $D$ is achieved, that is, there exists a finite time $t_f$
defined by 
\begin{equation}
	t_f=\min\{t_i:\exists t\in [0,t_i),u[t]\neq0\}, \label{tf2}
\end{equation}
such that $\hat{D}(t_f)=\mathcal{D}\equiv D, \forall t\ge t_f$.
\end{lemma}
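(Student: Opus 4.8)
The plan is to exploit the linear-in-$D$ regression structure that the transport PDE \eqref{org10}--\eqref{org13} imposes on the integrated signals, and then to argue that a single excited Fourier mode renders the batch least-squares constraint in \eqref{est2} uniquely solvable in finite time. First I would establish the identity $f_n(t)=Dq_n(t)$ for every mode $n\in\mathbb{N}^+$ and every $t\ge 0$. Differentiating $q_n(t)=-\int_0^1\sin(n\pi x)u(x,t)\diff x$ in time, substituting $u_t=u_x/D$ from \eqref{org10}, and integrating by parts (the boundary terms vanish because $\sin(n\pi x)$ is zero at $x=0,1$) gives $\dot q_n(t)=\frac{n\pi}{D}\int_0^1\cos(n\pi x)u(x,t)\diff x$. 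Integrating from $0$ to $t$ and using the zero initial profile \eqref{org13}, so that $q_n(0)=0$, yields $q_n(t)=\frac{1}{D}f_n(t)$, i.e. $f_n(t)=Dq_n(t)$, the same regression relation already used in Sec. \ref{secada}.

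Next I would transfer this pointwise identity to the windowed integrals defining the estimator. Multiplying $f_n=Dq_n$ by $q_n$ and integrating over the data window $[\mu_{i+1},t_{i+1}]$ gives $F_n(t_{i+1},\mu_{i+1})=D\,Q_n(t_{i+1},\mu_{i+1})$ for every $n$ and every update index $i$, with $F_n,Q_n$ as in \eqref{gn}. Consequently the true delay $D$ satisfies the affine constraints $F_n=Q_n\ell$ and lies in $D_0$, so it is feasible for \eqref{est2} at every step; only uniqueness is in question. Before the actuator state is excited, i.e. while $u[t]\equiv 0$ on the window, every $Q_n=0$ and every constraint collapses to $0=0$, so the projection in \eqref{est2} simply retains the previous estimate $\hat D(t_i)$.

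The decisive step, and the main obstacle, is the finite-time excitation argument at $t_f$. By the definition \eqref{tf2}, $t_f$ is the first update time after the actuator profile becomes nonzero, and the window bookkeeping through $\mu_{i+1}$ and $\tilde N$ (as in \cite{adasafe1,delay9}) ensures the record $[\mu_{i+1},t_f]$ retains an instant at which $u[t]\neq 0$ in $L^2(0,1)$. Since $\{\sin(n\pi x)\}_{n\in\mathbb{N}^+}$ is a complete orthogonal basis of $L^2(0,1)$, a nonzero profile has a nonzero projection onto at least one mode, so $q_{n^\ast}(t^\ast)\neq 0$ for some $n^\ast$ and some $t^\ast$ in the window; by continuity of $q_{n^\ast}$ this persists on a subinterval, whence $Q_{n^\ast}(t_f,\mu_{i+1})=\int q_{n^\ast}^2\diff t>0$. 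The constraint $F_{n^\ast}=Q_{n^\ast}\ell$ then has the unique solution $\ell=F_{n^\ast}/Q_{n^\ast}=D$, and since $D\in D_0$ the feasible set of \eqref{est2} reduces to the singleton $\{D\}$, forcing $\hat D(t_f)=D$.

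Finally, for any $t\ge t_f$ the relation $F_n=Q_nD$ with at least one $Q_n>0$ keeps the feasible set equal to $\{D\}$, so each subsequent projection returns $\hat D=D$; hence $\hat D(t)\equiv D$ for all $t\ge t_f$, giving exact finite-time identification. The argument is structurally identical to, but simpler than, the proof of Lemma \ref{lemada}, since here only the scalar $D$ is unknown and the regression is a single affine equation per mode. The point requiring genuine care is verifying that a single nonzero actuator profile, together with the window bookkeeping, guarantees $Q_{n^\ast}>0$ in finite time, which is exactly where completeness of the sine basis is invoked.
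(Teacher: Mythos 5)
Your argument is correct and follows essentially the same route as the proof the paper defers to (Sec.~IV-C of \cite{delay9}, cf.\ Lemma 4 of \cite{adasafe1}): the regression identity $f_n(t)=Dq_n(t)$ obtained from the transport PDE via integration by parts, feasibility of the true $D$ in every batch constraint, and completeness of the sine basis guaranteeing that the first nonzero actuator profile excites some mode with $Q_{n^\ast}>0$, collapsing the constraint set to $\{D\}$ at $t_f$. No gaps.
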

\begin{proof}
Its proof can be found in Sec. IV-C of \cite{delay9}.
\end{proof}

\subsection{{Design of observer under uncertain dynamics}}
Based on the data-driven parameter identification provided in the last subsection, we establish an extended state observer for the unmeasured states and disturbances of the system \eqref{org11}--\eqref{org13}, relying on the output data $Y(t)$ in \eqref{org2}.
Treating the disturbances as extended states, the state observer for the extended system consisting of \eqref{org1}, \eqref{org2}  \eqref{modd} is designed as
\begin{align}
\dot{\hat{X}}(t)=&\hat{A}(t)\hat{X}(t)+BU(t-\hat{\mathcal{D}}(t))+{G}P_d\hat{V}_d(t)\notag\\
&+L_x(t)(Y(t)-\hat{Y}(t)),\label{ox}\\
\dot{\hat{V}}_d(t)=&\hat{S}_d(t) \hat{V}_d(t)+L_v(t)(Y(t)-\hat{Y}(t)),\label{ov}\\
\hat{Y}(t)=&C\hat{X}(t),\label{oy}
\end{align}
where  $\hat{X}$ is the estimate of the state vector $X$, $\hat{V}_d(t)$ is the estimate of  the unmeasured disturbance $V_d(t)$, {and the observer gains, $L_v(t)$ and $L_x(t)$ are designed later.}
The system matrices used in the observer are the identified values of $A$ and $S_d$, denoted as $\hat{A}(t)$ and $\hat{S}_d(t)$. Specifically, as established in Lemma \ref{lemdmd2}, these estimates are piecewise constant and converge in finite time: $\hat{A}(t)=\begin{cases}
A_0,& 0\le t< \underline{D}\\
\mathcal{A},& t\ge \underline{D}\\
\end{cases},\hat{S}_d(t)=\begin{cases}
{S_d}_0,& 0\le t< \underline{D}\\
\mathcal{S}_d,& t\ge \underline{D}\\
\end{cases},$ and $\hat{\mathcal{D}}(t)=\begin{cases}
\mathcal{D}_0,& 0\le t<t_f\\
\mathcal{D},& t\ge t_f
\end{cases}$, where $\mathcal{A},\mathcal{S}_d$ are given in Lemma \ref{lemdmd2}, $\mathcal{D}=\hat{D}(t_f)$ is given in Lemma \ref{lemada2}, and $A_0,{S_d}_0,{\mathcal{D}}_0$ are arbitrary initial estimates satisfying Assumptions \ref{assumboundd}, \ref{assum:theta_bounds}.

Defining the state estimation error as
\begin{equation}
{e_o}^\top:=(\tilde{V}_d, \tilde{X})=({V}_d,{X})-(\hat{V}_d,\hat{X}),\label{eo}
\end{equation}
the dynamics of the extended observer error are given by
\begin{align}
\dot{e_o}
=\underbrace{\begin{pmatrix}
		S_d&-L_v(t)C\\
		\bar{G}&A-L_x(t)C
\end{pmatrix}}_{L(t)}
e_o
+\underbrace{\begin{pmatrix}
		\Delta{S}_d(t)\,\hat{V}(t)\\
		\Delta A(t)\,\hat{X}(t)+B \Delta U(t)
\end{pmatrix}}_{\Delta{e}(t)},\label{oerr}
\end{align}
where $\Delta{e}(t)$ arises entirely from the parameter identification errors for the system matrices and the time delay,
where $\Delta{S}_d(t):=S_d-\hat{S}_d(t)$, $\Delta A(t):=A-\hat{A}(t)$, $\Delta U(t) :=U(t-D)-U(t-\hat{D}(t))$.
The observer gains, $L_v(t)$ and $L_x(t)$, are piecewise constant which are defined as 
\begin{equation}
L_v(t)=\begin{cases}
	L_{v,0},& 0\le t< \underline{D}\\
	\mathcal{L}_v,& t\ge \underline{D}\\
\end{cases},
L_x(t)=\begin{cases}
	L_{x,0},& 0\le t< \underline{D}\\
	\mathcal{L}_x,& t\ge \underline{D}\\
\end{cases}\label{lgain}
\end{equation}
where the initial observer gains  $L_{v,0}$, $L_{x,0}$ are free design parameters, and $\mathcal{L}_v$, $\mathcal{L}_x$ are determined by using the estimates ${\mathcal{A}, \mathcal{S}_d}$ to make the matrix $L(t)$ Hurwitz, denoted as $\mathcal{L}$. {It indicates that the whole observer gains $L(t)$ is precise-constant function as well, i.e., $L(t)=\begin{cases}
	L_{0},& 0\le t< \underline{D}\\
	\mathcal{L},& t\ge \underline{D}\\
\end{cases}.$}

We present the following lemmas to estimate the upper bound of the observation error, which is required for the subsequent construction of the safe output-feedback controller
\begin{lemma}\label{lemerr1}
For a square matrix $L$, we have the following estimate
$
	\Vert e^{Lt} \Vert\le M_L(t)e^{-\delta t}, \forall t\ge 0 \label{eqlemerr1}
$
where $\delta=-\max \Re(\operatorname{eig}(L))$, and where $M_L =\|P_1\|\|P_1^{-1}\|>0$ which is constant when $L$ is diagonalizable, otherwise, $M_L(t)=\|P_{1}\|\|P_{1}^{-1}\|\sum_{j=0}^{n_{k}}\frac{t^{j}}{j!}>0$ which is a polynomial in $t$. The matrix $P_1$ is invertible, satisfying $L=P_1\Lambda{P_1}^{-1}$ where $\Lambda$ is either a diagonal matrix (if $L$ is diagonalizable)
or a Jordan matrix (in which the size of the largest Jordan block is $n_k$).
\end{lemma}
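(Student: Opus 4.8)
The plan is to reduce the whole estimate to the similarity-invariance of the matrix exponential. Writing $L = P_1 \Lambda P_1^{-1}$, I would first observe that $e^{Lt} = P_1 e^{\Lambda t} P_1^{-1}$, so submultiplicativity of the induced $2$-norm gives $\|e^{Lt}\| \le \|P_1\| \, \|e^{\Lambda t}\| \, \|P_1^{-1}\|$. The entire problem then collapses to estimating $\|e^{\Lambda t}\|$, where $\Lambda$ is either diagonal or block-diagonal with Jordan blocks---a structure for which the exponential is explicit. The constant (or polynomial) prefactor $M_L$ then emerges simply as the product of $\|P_1\|$, $\|P_1^{-1}\|$, and the bound on $\|e^{\Lambda t}\|$.

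For the diagonalizable case I would note that $e^{\Lambda t} = \operatorname{diag}(e^{\tilde{\lambda}_1 t}, \dots, e^{\tilde{\lambda}_{\tilde{n}} t})$, whose spectral norm equals $\max_i |e^{\tilde{\lambda}_i t}| = \max_i e^{\Re(\tilde{\lambda}_i) t} = e^{-\delta t}$ by the definition $\delta = -\max_i \Re(\tilde{\lambda}_i)$. This immediately yields the constant prefactor built from $\|P_1\|$ and $\|P_1^{-1}\|$ and settles this branch with no time-dependence in $M_L$.

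For the non-diagonalizable case I would treat each Jordan block $J_i = \tilde{\lambda}_i I + N_i$ individually, where $N_i$ is the nilpotent shift of size $n_i$ satisfying $N_i^{n_i} = 0$ and $\|N_i\| = 1$. Then the finite expansion $e^{J_i t} = e^{\tilde{\lambda}_i t} \sum_{j=0}^{n_i-1} \frac{(N_i t)^j}{j!}$ holds, and the triangle inequality with $\|N_i^j\| \le 1$ gives $\|e^{J_i t}\| \le e^{\Re(\tilde{\lambda}_i) t} \sum_{j=0}^{n_i-1} \frac{t^j}{j!}$. Since $e^{\Lambda t}$ is block diagonal, its norm is the maximum of the block norms; bounding each $\Re(\tilde{\lambda}_i)$ by $-\delta$ and majorizing every block sum by the one indexed by the largest block size $n_k$ produces $\|e^{\Lambda t}\| \le e^{-\delta t}\sum_{j=0}^{n_k}\frac{t^j}{j!}$, which after multiplication by $\|P_1\|\,\|P_1^{-1}\|$ gives the claimed polynomial prefactor $M_L(t)$.

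The step requiring the most care is the Jordan branch. One must verify that the spectral norm of a block-diagonal matrix reduces exactly to the maximum over the block norms, and that the nilpotent-power estimate $\|N_i^j\| \le 1$ combines with the differing block sizes to yield a single uniform polynomial governed by $n_k$. The only genuinely delicate bookkeeping is decoupling the eigenvalue attaining $-\delta$ from the block carrying the largest nilpotent degree: since these need not coincide, one bounds the two contributions separately---the exponential decay uniformly by $e^{-\delta t}$ and the polynomial growth uniformly by the largest block---so that their product still dominates every individual block. Everything else is routine submultiplicativity and the explicit exponential of a Jordan block.
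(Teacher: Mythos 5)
Your proposal is correct and follows the standard similarity/Jordan-form argument; the paper itself does not reproduce a proof but defers to [\citen{outsafe}, Appendix 0.4], and your derivation (including the careful decoupling of the worst-case decay rate from the largest Jordan block) is exactly the argument that reference relies on. Note only that it confirms the prefactor in the diagonalizable case should read $\|P_1\|\,\|P_1^{-1}\|$, consistent with the non-diagonalizable branch of the lemma statement.
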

\begin{proof}
Its proof can be found in [\citen{outsafe}, Appendix 0.4].
\end{proof}

Defining the observer error norm
\begin{equation}
\Omega_e=|\tilde{X}|+|\tilde{V}_d|,\label{omega}
\end{equation}
we establish the next lemma.

\begin{lemma}\label{lemerr}
For initial data $\hat{X}(0)\in\mathbb{R}^n$, $\hat{V}_d(0)\in\mathbb{R}^{n_d}$, the observer \eqref{ox}--\eqref{oy} where the observer gains function $L_v(t)$, $L_x(t)$	 are defined by \eqref{lgain}, and satisfies that $L(t)\equiv \mathcal{L}, t\ge\underline{D}$ is Hurwitz,  the observer error system \eqref{oerr} is well-posed in the sense of $\tilde{X}(0)\in\mathbb{R}^n$, $\tilde{V}_d(0)\in\mathbb{R}^{n_d}$,for $t\in[0,\infty)$. Moreover, the exponential convergence to zero of the observer errors \eqref{oerr} is achieved in the sense that
\begin{align}
	\Omega_e&\le \sqrt2\Gamma_{e,1}(t):=\rho_{e_0}(t),\quad 0\le t< \underline{D};\label{omee}\\
	\Omega_e&\le \sqrt2\Gamma_{e,2}(t):=\rho_{e_1}(t),\quad \underline{D}\le t< t_f;\label{omee1}\\
	\Omega_e&\le \sqrt2\Gamma_{e,2}(t_f) M_{\mathcal{L}}(t-t_f)e^{-\delta_{\mathcal{L}}(t-t_f)}:=\rho_e(t),\quad t\ge t_f\label{omee2}
\end{align}
where the observer error bound functions $\Gamma_{e,1}(t)$, $\Gamma_{e,2}(t)$ is defined in  \eqref{eq:game1}, \eqref{eq:game2}, respectively,  and $M_{\mathcal{L}},\delta_{\mathcal{L}}$ is 
obtained from applying Lemma \ref{lemerr1} to the Hurwitz matrix $\mathcal{L}$.
\end{lemma}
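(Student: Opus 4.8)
The plan is to exploit the piecewise-constant structure of the error dynamics \eqref{oerr}: the coefficient matrix $L(t)$ switches only at $t=\underline{D}$ (from the arbitrary $L_0$ built on the initial gains $L_{v,0},L_{x,0}$ to the designed Hurwitz matrix $\mathcal{L}$), while the inhomogeneous term $\Delta e(t)$ collapses in stages. Indeed, by Lemma \ref{lemdmd2} the identification errors satisfy $\Delta S_d(t)=0$ and $\Delta A(t)=0$ for $t\ge\underline{D}$, and by Lemma \ref{lemada2} the delay mismatch gives $\Delta U(t)=0$ for $t\ge t_f$; hence $\Delta e(t)=(0,\,B\Delta U(t))^\top$ on $[\underline{D},t_f)$ and $\Delta e(t)\equiv 0$ on $[t_f,\infty)$. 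I would therefore solve \eqref{oerr} sequentially on the three intervals $[0,\underline{D})$, $[\underline{D},t_f)$, $[t_f,\infty)$ by variation of constants, matching $e_o$ at the switching instants $\underline{D}$ and $t_f$. Since on each interval the problem is a linear ODE with constant coefficient matrix and a forcing that is bounded and measurable on finite horizons, existence and uniqueness of an absolutely continuous solution on each piece, and thus well-posedness on $[0,\infty)$, follows by concatenation.

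For the quantitative bounds I would apply Lemma \ref{lemerr1} on each interval to control the transition matrices. On $[0,\underline{D})$, writing $e_o(t)=e^{L_0 t}e_o(0)+\int_0^t e^{L_0(t-\tau)}\Delta e(\tau)\diff\tau$ and using $\|e^{L_0 t}\|\le M_{L_0}(t)e^{-\delta_{L_0}t}$ together with a bound on the full forcing $(\Delta S_d\hat V,\,\Delta A\hat X+B\Delta U)^\top$ over this finite window yields the transient estimate $\Gamma_{e,1}(t)$ in \eqref{eq:game1}; note $L_0$ need not be Hurwitz, but the interval is finite so no decay is required. On $[\underline{D},t_f)$ the matrix is the Hurwitz $\mathcal{L}$ and the forcing reduces to $(0,B\Delta U)^\top$; starting from the bound on $e_o(\underline{D})$ inherited from the first step and again invoking Lemma \ref{lemerr1} for $\mathcal{L}$ produces $\Gamma_{e,2}(t)$ in \eqref{eq:game2}. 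On $[t_f,\infty)$ the forcing vanishes, so $e_o(t)=e^{\mathcal{L}(t-t_f)}e_o(t_f)$ and Lemma \ref{lemerr1} gives $\|e_o(t)\|\le M_{\mathcal{L}}(t-t_f)e^{-\delta_{\mathcal{L}}(t-t_f)}\|e_o(t_f)\|$. Converting the Euclidean norm of $e_o$ into $\Omega_e$ through $\Omega_e=|\tilde X|+|\tilde V_d|\le\sqrt2|e_o|$ and using $|e_o(t_f)|\le\Gamma_{e,2}(t_f)$ then delivers exactly \eqref{omee}--\eqref{omee2}, with the exponential convergence to zero following from $\delta_{\mathcal{L}}>0$.

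The main obstacle is the uniform control of the inhomogeneous term $\Delta e(t)$ over the two transient intervals, because it contains the observer states $\hat X(t),\hat V(t)$ and the delayed control mismatch $\Delta U(t)=U(t-D)-U(t-\hat D(t))$, all of which depend, through the closed loop, on $e_o$ itself. The point is that this apparent circularity is harmless over a finite horizon: since the plant, observer, and controller form a linear interconnection driven by signals that are bounded on $[0,t_f)$, standard finite-time boundedness of linear systems with bounded inputs guarantees that $\hat X,\hat V,U$, and hence $\Delta e$, admit finite bounds on $[0,\underline{D})$ and $[\underline{D},t_f)$, which is all that is needed to define $\Gamma_{e,1},\Gamma_{e,2}$. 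The genuinely decisive step is the clean vanishing of $\Delta e$ for $t\ge t_f$ established via Lemmas \ref{lemdmd2} and \ref{lemada2}, which reduces the error dynamics to the autonomous Hurwitz system and thereby yields the asymptotic exponential decay.
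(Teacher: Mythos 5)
Your proposal is correct and follows essentially the same route as the paper: a piecewise variation-of-constants solution on $[0,\underline{D})$, $[\underline{D},t_f)$, $[t_f,\infty)$ with the staged vanishing of $\Delta e$ via Lemmas \ref{lemdmd2} and \ref{lemada2}, Lemma \ref{lemerr1} applied to each transition matrix, and the conversion $\Omega_e\le\sqrt2|e_o|$. The only detail the paper adds that you gloss over is that $L_0$ and $e_o(0)$ are themselves unknown, so the first-interval bound must be taken as a worst case over the parameter set $\bar\Theta_1$ (yielding $\bar M_{L_0},\bar\delta_{L_0}$) and over the initial-state set of Assumption \ref{assumboundv} (yielding $M_0$) for $\Gamma_{e,1}$ to be computable.
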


\begin{proof}
The solution to the observer error ODE \eqref{oerr} can be expressed piecewise as
\begin{equation}
	e_o(t)=\begin{cases}
		e_o(0)e^{L_0t}+\int_0^t e^{L_0(t-\tau)}\Delta{e}(\tau)\diff \tau, & t\in[0,\underline{D})\\
		e_o(\underline{D})e^{\mathcal{L}(t-\underline{D})}+\int_{\underline{D}}^t e^{\mathcal{L}(t-\tau)}\Delta{e}(\tau)\diff \tau, & t\in[\underline{D},t_f)\\
		e_o(t_f)e^{\mathcal{L}(t-t_f)}. & t\in[t_f,\infty)
	\end{cases}\label{oerrsol}
\end{equation}
Before the parameter estimates converge at $t_f$, the error dynamics are perturbed by the term $\Delta e(\tau)$. The magnitude of this perturbation is bounded, which can be derived from the known bounds of system parameters $A, S_d, D$ according to Assumptions \ref{assumboundd}, \ref{assum:theta_bounds}. Using the induced 2-norm, these bounds are defined as
$
	\|\Delta S_d(t)\| \le \delta_{S_d},  \|\Delta A(t)\| \le \delta_A,  |\Delta U| \le \delta_U(t), 
$
where $\delta_{S_d}$ and $\delta_A$ are positive constants, and $\delta_U(t)$ is a bounded function representing the error resulting from the delay estimate $\hat{D}$. 
By the triangle inequality, the upper bound of the perturbation term $\Delta e(t)$ for $t\in[0,\underline{D})$ is derived as 
\begin{align}
	\|\Delta e(t)\| &= \left\| \begin{pmatrix} \Delta S_d(t)\hat{V}(t) \\ \Delta A(t)\hat{X}(t) + B\Delta U \end{pmatrix} \right\| \nonumber \\
	&\le \|\Delta S_d(t)\||\hat{V}_d(t)| + \|\Delta A(t)\||\hat{X}(t)| + |B||\Delta U| \nonumber \\
	&\le \delta_{S_d} |\hat{V}_d(t)| + \delta_A |\hat{X}(t)| + b \delta_U(t) := \Gamma_1(t)
\end{align}
which will vanish for $t\ge t_f$.
It readily follows from \eqref{oerrsol} that $e_0(t)$ is bounded with $\tilde{X}(0)\in\mathbb{R}^n$, $\tilde{V}_d(0)\in\mathbb{R}^{n_d}$, and bounded $\Delta e(t)$ within the finite time $t\in[0,t_f)$. Since the matrix $\mathcal{L}$ is Hurwitz, the error $e_0$ will be exponentially convergent to zero for $t\ge t_f$. The results of well-posedness and exponential convergence to zero are thus established.

Next, we calculate the explicit upper bounds of the observer errors. First, for $t\in[0,\underline{D})$, utilizing Lemma \ref{lemerr1}, we bound the matrix $e^{L_0 t}$ over the uncertainty set ${\Theta}_1$ defined in \eqref{theta} as
$
	\|e^{L_0 t}\| \le M_{L_0}(t)e^{-\delta_{L_0} t}\le \bar{M}_{L_0}e^{-\bar\delta_{L_0} t},\label{equlemerr1}
$
where $\bar\delta_{L_0}=-\max_{\theta_1\in \bar{\Theta}_1} \max\Re (\operatorname{eig} (L_0(\theta_1))$ is the worst-case decay rate over all possible parameters in $\bar{\Theta}_1$.
The gain term $\bar{M}_{L_0}(t)$ depends on the structure of the Jordan decomposition of $L_0(\theta_1)$, that is
$\bar{M}_{L_0}=\max_{\theta_1\in \bar{\Theta}_1}\|P_{1}(\theta_1)\|\|P_{1}^{-1}(\theta_1)\|$ if $L_0(\theta_1)$ is diagonalizable for all $\theta_1\in \bar{\Theta}_1$, otherwise $\bar{M}_{L_0}=\max_{\theta_1\in \bar{\Theta}_1}\|P_{1}(\theta_1)\|\|P_{1}^{-1}(\theta_1)\|\sum_{j=0}^{n_{k}}\frac{{\underline{D}}^{j}}{j!}>0$ which is a polynomial in $t$, recalling Lemma \ref{lemerr1} with unknown $\Theta_1$. 
Accordingly, for the interval $t \in [0, \underline{D})$, we establish
the upper bound for the error state norm $|e_o(t)|$ as
\begin{align}
	&|e_o(t)| \le  M_0 \bar{M}_{L_0}e^{-\bar\delta_{L_0} t}\notag\\
	&+ \int_0^t \bar{M}_{L_0}e^{-\bar\delta_{L_0} (t-\tau)} \Gamma_1(\tau) d\tau:=\Gamma_{e,1}(t),\label{eq:game1}
\end{align}
{where the constant 
$
	M_0=\sqrt{\sum_{i=1}^{\tilde n}(\bar{\zeta}_{0,i}-\underline{\zeta}_{0,i})^2}
$
 is the upper bound on the initial observer error $|e_o(0)|$ which derived from the known upper bound $\bar{\zeta}_{0,i}$ and lower bound $\underline{\zeta}_{0,i}$ of each entry in the initial state vectors in  Assumption \ref{assumboundv}.}
 Next, recalling Lemma \ref{lemdmd2}, we know that the error term $\Delta{S}_d(t)=0$, $\Delta A(t)=0$ after $t \ge\underline{D}$. It implies that the parameter estimation error term $\Delta{e}(t)$ in \eqref{oerr} becomes $(0\quad B \Delta U(t))^\top$ after $t> \underline{D}$, and thus the upper bound of the perturbation term $\Delta e(t)$ becomes
$
	\|\Delta e(t)\| \le b \delta_U(t) := \Gamma_2(t), t\in[\underline{D},t_f).
$
The upper bound of the whole error state $|e_o(t)|$ for $t\in[\underline{D},t_f)$ is derived as
\begin{align}
	&|e_o(t)| \le  \Gamma_{e,1}(\underline{D}) M_{\mathcal{L}}(t-\underline{D})e^{-\delta_{\mathcal{L}}(t-\underline{D})}\notag\\
	&+ \int_{\underline{D}}^t M_{\mathcal{L}}(t-\tau) e^{-\delta_{\mathcal{L}}(t-\tau)} \Gamma_2(\tau) d\tau:=\Gamma_{e,2}(t),\label{eq:game2}
\end{align}
where $M_{\mathcal{L}}(t)$, $\delta_{\mathcal{L}} > 0$ are obtained by applying the result of lemma \ref{lemerr1} to the Hurwitz  matrix $\mathcal{L}$.
Finally, for $t\ge t_f$, we have $\Delta U(t)=0$ from Lemma \ref{lemada2}, such that $\Delta e(t)=0$. Thus, the upper bound of the observer error is derived as
$
	|e_o(t)| \le \Gamma_{e,2}(t_f) M_{\mathcal{L}}(t-t_f)e^{-\delta_{\mathcal{L}}(t-t_f)}
$
 for $t\in[t_f,\infty)$. Recalling $\Omega_e$ defined in \eqref{omega}, applying Young's inequality yields the following bound $\Omega_e\leq\sqrt{2}|e_o(t)|$.
\end{proof}

\subsection{Safe output regulation under unknown parameters}\label{secconobe}
Leveraging the above data-driven identification, we now design the safe output regulator.
With the modified dynamic matrices $A, S_d$ in \eqref{a2}, the first  transformation \eqref{Z} in Sec. \ref{secfistran} reduces to
$
Z(t)=X(t)+T_vV(t)
$. The subsequent transformation design then proceeds as described in Sec. \ref{secnom}. 
With the estimates $\hat{X}(t)$ and $\hat{V}(t)$ from the observer \eqref{ox}--\eqref{oy}, we can construct the output-feedback controller:
\begin{equation}
\hat{U}_d(t) := \mathcal{U}(\hat{\chi}(t); \hat{\Theta}(t))+\vartheta_0 \hat\rho_e(t;\hat{\Theta}(t)), \label{uod}
\end{equation}
where  $\mathcal{U}(\hat\chi(t);\Theta)$ is  derived by
employing the observer state $\hat{\chi}(t) = ( \hat{V}_d(t),V_r(t), \hat{X}(t))$ to replace the system state \eqref{chi}
in \eqref{u}. 
Following the approach in \cite{outsafe}, the following function $\hat\rho_e(t)$ with the known constant $\vartheta_0$ given by \eqref{var0} is added in \eqref{uod} to tolerate the observer error \eqref{eo},
\begin{equation}
\hat\rho_e(t)=\begin{cases}
	0, 0\le t< t_f\\
	2\max\{\|\tilde{K}e^\mathcal{D\mathcal{A}}\|,|\bar{\gamma}(1,\hat{\Theta}(t_f))|,\xi_e\}\rho_{e}(t), t\ge t_f
\end{cases}\label{rho2}
\end{equation} where the gains $\tilde{K}$ and $\bar{\gamma}(1,\hat{\Theta}(t_f))$ are obtained by substituting $\hat{\Theta}(t_f)$ into \eqref{K} and \eqref{bargam}, $\rho_e(t)$ is given in \eqref{omee2},  and where $\xi_e$ is a constant defined in Assumption \ref{assumh3}.

Defining the vector $\zeta_0$ that includes the initial states $X(0)$ and $V_d(0)$, it belongs in the set $\bar{\zeta}_0$ according to Assumption \ref{assumboundv}. The vector ${\theta}_1$, $\mathfrak{D}$ represents the unknown system parameters, bounded by $\bar{\Theta}_1,D_0$ from Assumptions \ref{assum:theta_bounds}, \ref{assumboundd}. On the basis of it, we substitute
the lower bound
$
\underline{h}=\min_{\zeta_0\in\bar{\zeta}_0,{\theta}_1\in\bar{\Theta}_1,\mathfrak{D}\in D_0}\mathcal{P}_0(\zeta_0;{\theta}_1,\mathfrak{D}) \label{uh2}
$
into \eqref{sigma2} for the uncertain value $h(e(D),D)$, to choose the positive constant $\epsilon$ in \eqref{sigma2} to ensure that $h_1(z_1(D),D)\ge0$. 

Analogously, the design parameters $k_i$ are selected only using the boundaries to robustly guarantee the initial positivity of all functions $h_i(z_i(D),D)$ across the entire uncertainty set: 
\begin{equation}
k_i\ge\max_{\zeta_0\in\bar{\zeta}_0,{\theta}_1\in\bar{\Theta}_1,\mathfrak{D}\in D_0}\{0,\hat{k}_i(\zeta_0;{\theta}_1,\mathfrak{D})\},i=1,\cdots,n-1,   \label{kcon3}
\end{equation}
$k_n\ge0$, where $\hat{k}_i$ is given by \eqref{hk}.

Following the same design principles as between \eqref{kcon}--\eqref{var0}, we formulate the following safe controller to operate under both parameter and state uncertainty:
\begin{equation}
U_a(t)=\begin{cases}
	\frac{1}{\vartheta_0}\max\limits_{\zeta(t)\in\bar{\zeta}(t),{\theta}_1\in\bar{\Theta}_1,\mathfrak{D}\in D_0}\vartheta_0\mathcal{U}(\zeta(t);{\theta}_1,\mathfrak{D}),& \\
	\phantom{=================}   0  \le  t < \underline{D} \\
	\frac{1}{\vartheta_0}\max\limits_{\zeta(t)\in\bar{\zeta}(t),\mathfrak{D}\in D_0}\vartheta_0\mathcal{U}(\zeta(t);\hat\Theta_1(\underline{D}),\mathfrak{D})\},& \\
	\phantom{=================}   \underline{D}\le t    <  t_f,\\
	\hat{U}_d(t),\phantom{aaaaaaaaaaa=====}  t\ge t_f
\end{cases}\label{ua2}
\end{equation}
where the form of $\mathcal{U} $ is specified by \eqref{u}, and $\bar\zeta(t)$ denotes all possible values of  states, obtained from the observer state $\hat\chi(t)$ with its upper bound defined in \eqref{omee}, \eqref{omee1}, i.e.,
\begin{equation}
\bar{\zeta}(t)=\begin{cases}
	\{\zeta(t)\in\mathbb{R}^{\tilde{n}}:|\zeta(t)-\hat\chi(t)|\le \rho_{e_0}(t)\}, & 0 \le t < \underline{D} \\
	\{\zeta(t)\in\mathbb{R}^{\tilde{n}}:|\zeta(t)-\hat\chi(t)|\le \rho_{e_1}(t)\}. & \underline{D} \le  t  <  t_f
\end{cases}
\end{equation}

\begin{figure}[t]
	\centering	
	\includegraphics[width=1\linewidth]{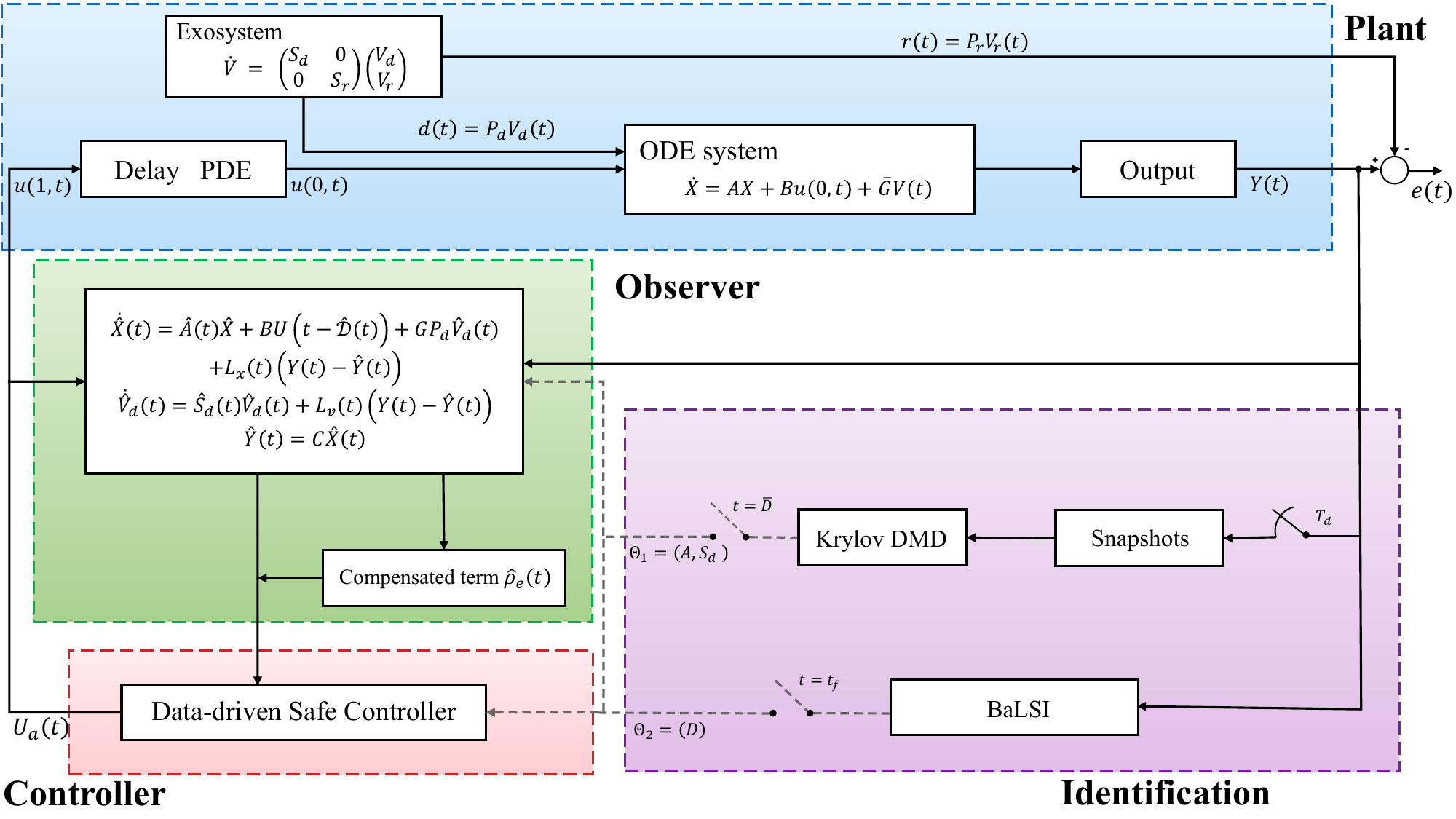}
	\caption{Diagram of the proposed output-feedback data-driven safe control system.}		
	\label{fig:diag2}
\end{figure}

\begin{remark}
To strictly guarantee system safety over the time interval $[0, t_f)$ under unknown states and parameters, all possible values are computed. This process is potentially computationally demanding. However, if a control input that guarantees safety over the finite horizon $[0,\underline D)$ is 
known, the computational burden can be significantly reduced by simply applying this safe input during this initial interval. Such an input for a short initial time is always available in many practical applications.
\end{remark}

The diagram of the proposed safe output regulation closed-loop system is depicted in Fig. \ref{fig:diag2}, and the result is presented in the next subsection. 
\subsection{Result of safe output-feedback Controller}
\begin{theorem}\label{theo2}
For initial data $X(0)\in\mathbb{R}^n$,  $V(0)\in\mathbb{R}^{n_v}$, $\hat{X}(0)\in\mathbb{R}^n$, $\hat{V}(0)\in\mathbb{R}^{n_v}$  satisfying Assumption \ref{assumboundv}, and the safety constraint $h(e,t)$ satisfying Assumptions \ref{assumh}, \ref{assumsign}, \ref{assumh3}, choosing the design parameters $k_1,\cdots,k_n$ satisfying \eqref{kcon3}, the closed-loop system consisting of the disturbed linear plant \eqref{org1}, \eqref{org2} with unknown $\Theta$ \eqref{theta2}, the observer \eqref{ox}--\eqref{oy}, and the controller \eqref{ua2}, has the following properties:\par
\noindent	 1) There exists a unique solution $X(t)\in\mathbb{R}^n$, $V(t)\in\mathbb{R}^{n_v}$, $u[t]\in L^2(0,1)$.\par
\noindent 2)		The identification $\hat{\Theta}(t)$ of unknown parameters ${\Theta}$    is bounded and reaches the true value in  finite time $t_f$ that is defined in \eqref{tf2}.\par
\noindent 	3) Tracking error state $e(t)$ \eqref{e} is convergent to zero and all states are bounded;\par
\noindent 4) Safety is ensured in the sense that

a) If the initial safe condition in Definition \ref{def1} is met, then safety is guaranteed for $t \in [0, \infty)$.

b) If the initial safe condition in Definition \ref{def1} is not met, but $h(e(D),D) >0$, then safety is guaranteed for $t \ge D$.

c)  If the initial safe condition in Definition \ref{def1} is not met, and $h(e(D),D) \le 0$, the safety is guaranteed for $t\ge D+\bar{t}$ where $\bar{t}>0$ can be arbitrarily assigned by users.		

\end{theorem}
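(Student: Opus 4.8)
The plan is to mirror the proof of Theorem \ref{theo1}, treating the extended observer as the one genuinely new ingredient and absorbing its effect through the exponentially decaying error bound of Lemma \ref{lemerr} together with the compensating term $\hat\rho_e$ in \eqref{rho2}. Property 2) is immediate: finite-time exact identification $\hat\Theta(t)\to\Theta$ follows from Lemmas \ref{lemdmd2} and \ref{lemada2}, giving $\hat\Theta_1(\underline D)=\{\mathcal A,\mathcal S_d\}$ and $\hat D(t_f)=\mathcal D$. For the remaining properties I would set $\eta(t):=U_a(t)-U(t)$ as before, so that under \eqref{ua2} the target system is again \eqref{tar1}, \eqref{tar2} with boundary condition $w(1,t)=\eta(t)$; the transport dynamics \eqref{tar2} then propagate this to $w(0,t)=\eta(t-D)$.

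For well-posedness (property 1) and convergence (property 3), the new feature is that $\eta(t)$ no longer vanishes after $t_f$: it now carries the observer-state error $\mathcal U(\hat\chi;\Theta)-\mathcal U(\chi;\Theta)$ plus the added term $\vartheta_0\hat\rho_e(t)$. Because the common actuator-dependent integral $\int_0^1 DKe^{DA(1-y)}Bu(y,t)\diff y$ in \eqref{u} uses the true profile $u$ and hence cancels in this difference, $\eta$ depends only on the linear state-error terms and on the $\psi(1,t)=\bar{\mathcal P}$ contribution controlled by Assumption \ref{assumh3}. I would therefore bound $|\mathcal U(\hat\chi;\Theta)-\mathcal U(\chi;\Theta)|\le 2\max\{\|\tilde Ke^{\mathcal D\mathcal A}\|,|\bar\gamma(1,\hat\Theta(t_f))|,\xi_e\}\,\Omega_e$ and invoke $\Omega_e\le\rho_e(t)$ from Lemma \ref{lemerr} to conclude $\eta(t)$ decays exponentially for $t\ge t_f$. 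Well-posedness on $[0,t_f)$ then follows from finite-horizon boundedness as in Theorem \ref{theo1} (the observer system itself being well-posed by Lemma \ref{lemerr}); since $A_\mathrm h$ is Hurwitz and the forcing $w(0,t)=\eta(t-D)$ is exponentially decaying, variation of constants on \eqref{tar1} yields $|H(t)|\to0$ exponentially, and unwinding \eqref{Z}, \eqref{h1}--\eqref{sigma} gives boundedness of all states and $e(t)\to0$.

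The crux is safety (property 4), which reduces to the sign condition $\vartheta_0\eta(t)\ge0$ on $[0,\infty)$. On $[0,t_f)$ this follows from the max-based construction \eqref{ua2}: since the true state $\chi(t)$ lies in the observer-error ball $\bar\zeta(t)$ (as $|\chi-\hat\chi|=\Omega_e\le\rho_{e_0}(t),\rho_{e_1}(t)$ by Lemma \ref{lemerr}) and the true parameters lie in $\bar\Theta_1,D_0$, maximizing $\vartheta_0\mathcal U(\zeta;\theta)$ over these sets gives $\vartheta_0 U_a\ge\vartheta_0\mathcal U(\chi;\Theta)=\vartheta_0 U$, i.e.\ $\vartheta_0\eta\ge0$. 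On $[t_f,\infty)$ we have $\vartheta_0\eta(t)=\vartheta_0[\mathcal U(\hat\chi;\Theta)-\mathcal U(\chi;\Theta)]+\hat\rho_e(t)\ge\hat\rho_e(t)-|\mathcal U(\hat\chi;\Theta)-\mathcal U(\chi;\Theta)|\ge0$, the last step holding precisely because the factor $2$ and the maximum in \eqref{rho2} make $\hat\rho_e=2\max\{\cdots\}\rho_e\ge 2\max\{\cdots\}\Omega_e$ dominate the state-error terms. In both regimes, $\vartheta_0\eta(t)\ge0$, the constant sign of $\vartheta$ (Assumption \ref{assumh}), and $b>0$ give $b\vartheta(t)w(0,t)=b\vartheta(t)\eta(t-D)\ge0$ for $t\ge D$. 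The conservative choices of $k_i$ in \eqref{kcon3} and of $\epsilon,\underline h$ in \eqref{sigma2}, taken over the full uncertainty set including $\bar\zeta_0$, then secure $h_i(z_i(D),D)\ge0$, and the Hurwitz structure of $A_\mathrm h$ in \eqref{ah} with the nonnegative forcing propagates $h_i(t)\ge0$ for all $t\ge D$ (Lemma \ref{lemmasafe2}); cases a)--c) follow verbatim from the properties of $\varsigma$ in \eqref{sigma2}, exactly as in Theorem \ref{theo1}.

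The main obstacle I anticipate is the safety step on $[t_f,\infty)$, namely the term-by-term verification that $\hat\rho_e$ dominates $\mathcal U(\hat\chi;\Theta)-\mathcal U(\chi;\Theta)$. This requires exploiting the linearity of $\mathcal U$ in its state argument to match the gains $\|\tilde Ke^{\mathcal D\mathcal A}\|$ and $|\bar\gamma(1,\hat\Theta(t_f))|$ against the linear $\hat X,\hat V_d$ errors, while the genuinely nonlinear $\psi=\bar{\mathcal P}$ term can be handled only through the Lipschitz-type estimate \eqref{tilp} of Assumption \ref{assumh3}; without that assumption the compensation in \eqref{rho2} would not close. A secondary subtlety is that on $[D,t_f)$ safety must hold uniformly over the entire ball $\bar\zeta(t)$ rather than at a single state, so the tightness of the bounds $\rho_{e_0},\rho_{e_1}$ from Lemma \ref{lemerr} is what renders the max-based controller \eqref{ua2} simultaneously safe and non-vacuous.
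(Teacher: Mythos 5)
Your proposal is correct and follows essentially the same route as the paper: decompose $U_a-U$ into a pre-$t_f$ bounded mismatch and a post-$t_f$ observer-error term $\hat\eta_2$ dominated by $\vartheta_0\hat\rho_e$ via Lemma \ref{lemerr} and Assumption \ref{assumh3}, then transfer the sign condition $b\vartheta w(0,t)\ge 0$ and the conservative choices \eqref{kcon3}, \eqref{sigma2} into the barrier argument of Theorem \ref{theo1}. Your explicit justification that the true state lies in the ball $\bar\zeta(t)$ (so the max-based controller dominates the nominal one on $[0,t_f)$) and that the actuator integral cancels in $\hat\eta_2$ fills in steps the paper states more tersely, but the logic is the same.
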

\begin{proof}
1) From the initial condition \eqref{u0} and the transformations \eqref{Z}, \eqref{h1}, \eqref{hi}, as well as  the exogenous signal
model \eqref{modv}, we have $w[0]\in L^2(0,1)$, $H(0)\in\mathbb{R}^n$, and $V(t)\in\mathbb{R}^{n_v}$.
On the  interval $t\in[0,t_f)$, define the difference between $U_a(t)$ \eqref{ua2} and $U(t)$ \eqref{u} as:
$\hat{\eta}_1(t)=U_a(t)-U(t),\label{hateta1}
$
which arises from both parameter mismatch and state estimation error.  A fundamental property of linear systems guarantees that for bounded inputs, $\hat{\eta}_1(t)$ remains bounded on a finite horizon $t\in[0,t_f)$.
Note that, for the time period $t\in[0,t_f)$, the target system becomes \eqref{tar1}, \eqref{tar2} with $w(1,t)=\hat{\eta}_1(t)$.
Thus, we have that $w(1,t)^2$ is bounded during  $t\in[0,t_f)$.
According to Lemmas \ref{lemdmd2}, \ref{lemada2}, all identified parameters have converged to their true values by the time $t=t_f$. Consequently, when the controller \eqref{ua2} switches at $t=t_f$ with the exact identification $\hat{\Theta}(t_f)$, the difference between the applied controller $U_a(t)$ \eqref{ua2} and the nominal controller $U(t)$ \eqref{u} is given by
$
	U_a(t)-U=\vartheta_0 \hat\rho_e(t;\hat{\Theta}(t_f))+\hat{\eta}_2(t),\label{hateta2}
$
where $\hat{\eta}_2(t)$ is defined as
\begin{align}
	&\hat{\eta}_2(t)=\mathcal{U}(\hat\chi(t);\hat{\Theta}(t_f))-U\notag\\
	&=\tilde{K}e^{\mathcal{D}\mathcal{A}}{X}(t)-\bar{\gamma}(1;\hat{\Theta}(t_f)){V}(t)-\tilde{\mathcal{P}}({V}(t),{X}(t)).\label{etao}
\end{align}
Accordingly, the target system becomes \eqref{tar1}, \eqref{tar2} with
\begin{equation}
	w(1,t)=\vartheta_0 \hat\rho_e(t)+\hat{\eta}_2(t),\quad t \ge t_f.\label{w1t}
\end{equation}
Recalling Lemma \ref{lemerr}, \eqref{p}, \eqref{tilp} in Assumption \ref{assumh3}, \eqref{rho2}, we know that $w(1,t)$ is bounded for all $t\in[t_f,\infty)$ from \eqref{w1t}. As it is also bounded on $[0,t_f)$, $w(1,t)$ is bounded for $t \ge 0$, which ensures $w[t] \in L^2(0,1)$. Following the same process in the proof of property 1) in Theorem \ref{theo1},  we have $H(t)\in \mathbb{R}^n$, $Z(t)\in \mathbb{R}^n$, $X(t)\in \mathbb{R}^n$ and $u[t]\in L^2(0,1)$ as well.
Recalling the results about the observer errors in Lemma \ref{lemerr}, and \eqref{eo}, Property 1 is obtained.

2) According to Lemmas \ref{lemdmd2}, \ref{lemada2}, we directly obtain the property 2) that the parameter identification $\hat{\Theta}(t)$ achieves the true value in finite time $t_f$.\par

3)	 Combining Lemma \ref{lemerr} with \eqref{p} and \eqref{tilp} from Assumption \ref{assumh3} yields the bound of $\hat{\eta}_2(t)$ in \eqref{etao} as
\begin{align}
	|\hat{\eta}_2(t)|\le 2\max\{\|\tilde{K}e^{\mathcal{D}}\mathcal{A}\|,|\bar{\gamma}(1,\hat{\Theta})|,\xi_e\}\Omega_e\le \hat\rho_e(t).\label{tileta2}
\end{align}
As established in Lemma \ref{lemerr} via \eqref{omee2}, $\rho_e(t)$ converges exponentially to zero. This implies that $\hat\rho_e(t)$, as defined in \eqref{rho2}, and $w(1,t)$, as given by \eqref{w1t}, also converge exponentially to zero. Follow the same process in the proof of the property 3) in Theorem \ref{theo1}, this property is obtained.\par

4) From Assumption \ref{assumh}, we have ${\rm sgn}(\vartheta(t))\equiv {\rm sgn}(\vartheta_0)$. Based on $\hat{\eta}_1$, \eqref{ua2}, this implies that $b\vartheta(t) w(0,t)\ge0$ for $t\in[D,D+t_f)$. Furthermore, combining \eqref{tileta2} with \eqref{w1t}, it shows that the product $b\vartheta(t) w(0,t)$ remains non-negative for all $t \ge D+t_f$. Consequently, we conclude that $b\vartheta w(0,t)\ge 0$ for all $t\in[D,\infty)$, regardless of the sign of $\vartheta$. The selection of the design parameters $k_i$ in \eqref{kcon3}  defines a stricter subset of the conditions in \eqref{k}, which strictly guarantees the positivity of the initial barrier function values, i.e., $h_{i}\left(\underline{z}_{i}\left(D\right), D\right)$. Thus, the result of Lemma \ref{lemmasafe2} remains valid. Following the same procedure used to prove property 4) in Theorem \ref{theo1}, we obtain this property.
The proof of this theorem is complete.

\end{proof}

\section{Application in Safe Vehicle Platooning}\label{secexample}

\begin{figure}[t]
	\centering	
	\includegraphics[width=1\linewidth]{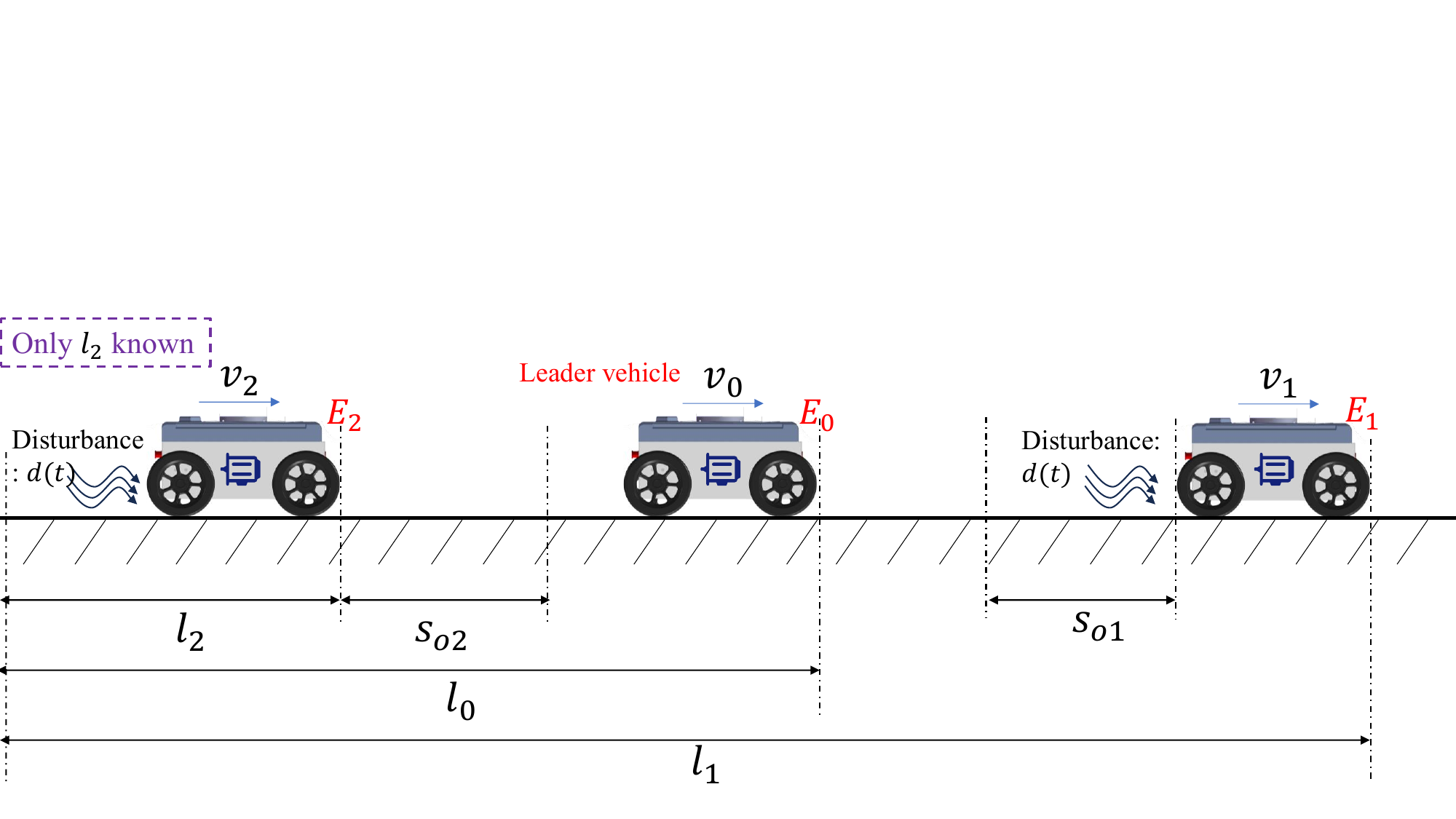}
	\caption{ Vehicle platooning with leader $E_0$, and the controlled $E_i,i=1,2$, where the safe s to be maintained are $s_{oi}$.}		
	\label{fig:0}
\end{figure}
\subsection{Physical model}
In this section, we validate the effectiveness of the designed controller through a practical application of safe vehicle platooning, depicted in Fig. \ref{fig:0}, which consists of three vehicles, where the middle vehicle, $E_0$, serves as a reference, traveling at a known speed $v_0(t)$. The control actions are applied to the other two vehicles: the preceding vehicle, $E_1$, and the following vehicle, $E_2$.
Each vehicle $E_i$ is subject to an unknown input delay $D_i$ between the control command $F_i(t)$ and the actual drive force $F_{w,i}(t)$, accounting for actuator dynamics and communication latency, i.e.,
$
	{F_w}_{i}(t)=F_i(t-D_i).\label{eq:delaysim}
$
According to Newtonian mechanics, the longitudinal dynamics of the $i$-th vehicle are described by:
\begin{align}
	\frac{dl_i}{dt}=v_i(t)+d_1(t),\,
	M_{i}\frac{dv_i}{dt}={F_w}_{i}(t)-F_r+d_2(t),\label{veh1}
\end{align}	
where $M_i$ is the mass of vehicle $E_i$,  $l_i(t)$ and $v_i(t)$ are its  displacement and  speed respectively, and $F_r=f_1 v_i(t)$ represents the overall resistance force. The exogenous signal $d(t)=(d_1(t),d_2(t))^\top$ is introduced to capture the influence of external disturbances, such as wind gusts and road slope, as well as unmodeled dynamics and modeling errors.

 Considering the disturbance $d(t)$  and target trajectory $r_i(t)$  are both generated by an exosystem $\dot{V}(t) = S V(t)$ with $S=\operatorname{diag}(S_d,S_r)$ and $V_i(t)=(V_d(t),V_{r,i}(t))^\top$ according to \eqref{modd}--\eqref{vd} and defining the state vector as $X_i(t) = (l_i(t), v_i(t))^\top$, and the control input as $U_i(t) = F_i(t)$, the  vehicle dynamics system can be expressed in state-space form: 
\begin{equation}
	\dot{X}_i(t) = 
	\begin{pmatrix}
		0 & 1\\
		0 & a_{i}
	\end{pmatrix}	
	X_i(t)
	+ \bar{G}V_i(t) +
	\begin{pmatrix}
		0\\
		b_i
	\end{pmatrix}
	U_i(t-D_i).
	\label{model}
\end{equation}
 Here, the disturbance part of exosystem $V(t)$ is  generated by  $S_d=\begin{pmatrix}0&1; \beta_1&0\end{pmatrix}$,  ${P}_d=I_2$  recalling \eqref{modd}. This signal affects the vehicle dynamics through the input matrix $\bar{G}=G\bar{P}_d$, where $G=\begin{pmatrix}1&1; 0&1\end{pmatrix}$ and $\bar{P}_d=(P_d,0)$.
Besides, the specified reference trajectory for each vehicle $E_i$ is $r_i(t)=l_0(t)+(-1)^{i-1}s_{oi}$, where $s_{oi}$ denote the preset target distance. The leader's displacement is given by $l_0(t)=4t-\cos(t)+11$. This trajectory is produced by the reference generator from \eqref{modr1}, defined by $
	S_r =(
		0 , 1 , 0 , 0;
		0 , 0 , 1 , 0 ;
		0 , -1 , 0 , 1;
		0 , 0 , 0 , 0
)
$, $
	P_r = \begin{pmatrix} 1 , 0 , 0 , 0 \end{pmatrix}$, with
	 the initial condition $V_{r,i}(0) = (10+(-1)^{i-1}s_{oi}, 4, 1, 4)^\top$.

The model parameters are not precisely known which arise from various operational factors, including variations in vehicle load and unpredictable network latencies. We assume that these unknown parameters are bounded within known intervals, as shown in Table \ref{tab:params}.

\subsection{Control objective}
The objective is to enforce a safe platoon formation by regulating the distances between adjacent vehicles, defined as $s_1(t) = l_1(t) - l_0(t)$ and $s_2(t) = l_0(t) - l_2(t)$, to their designated safe setpoints, $s_{o1}$ and $s_{o2}$, without breaching the
safety constraint. If starting from an unsafe condition, i.e., $s_i(D) < s_{oi}$, the system is expected to return to the safe region in a finite time assigned by users, and subsequently converge to the specified setpoints without breaching safe constraints.

\begin{table}[tb]
		\renewcommand{\arraystretch}{1.05}
	\centering
	\caption{Parameters and Uncertainty Bounds}
	\label{tab:params}
	\begin{tabular}{>{\scriptsize}p{12em}<{\raggedright} >{\scriptsize}p{8em}<{\centering}>{\scriptsize}p{3em}<{\centering}}
		\hline\\[-2.5mm]  
		\small\textbf{Parameter} &\small \textbf{values} & \small\textbf{Bounds} \\ \\[-2.5mm]\hline
		Linear damping coefficient: $f_1$ & 5 (Ns/m) & -- \\
		Vehicle mass :$M_1$ & 5 (kg) & -- \\
		Vehicle mass:$M_2$ & 4 (kg) & -- \\
		Input delay in $E_1$: $D_1$ & 1.5 (s) & $[1, 3]$ \\
		Input delay in $E_2$: $D_2$ & 1.2 (s) & $[1, 3]$ \\
		Input parameters $b_i$ & $1/M_i$ 	& 	$[0.1, 1]$ \\
		system parameters $a_{i}$ & $-f_1/M_i$ & $[-2,0]$ \\
		Disturbance parameter: $\beta_{1}$ & $-1/4$  & $[-1 , 0]$ \\ 
		\hline
	\end{tabular}
\end{table}

\textbf{Front vehicle $E_1$}: 
The safety constraint for the front vehicle  is formulated as the barrier function:
$
	\mathcal{H}_1(e_1(t),t) =X_{1,1}(t)- r_1(t)  = e_1(t) \ge 0.
$
For this vehicle,  all states, including disturbance states, are assumed to be available to test the performance of the control design in Sec. \ref{secfull}.

\textbf{Rear vehicle $E_2$}: 
The safety constraint  is formulated as the barrier function:
$
	\mathcal{H}_2(e_2(t),t) =r_2(t)-Y(t)   = -e_2(t) \ge 0.
$
 It is obtained from \eqref{var} that $\vartheta_2=-1$. For this vehicle, only the output state $Y(t)=x_{2,1}$ is available, to test the performance of the control design in Sec. \ref{secout}.

\subsection{Controller design}
The control gains are selected as $k_1=3$ and $k_2=1 $ according to \eqref{k}, \eqref{hk}, \eqref{kcon}, \eqref{kcon3}, with the initial distances $x_{1,1}(0)=17$, $x_{2,1}(0)=5$  and the initial velocities of $x_{i,2}(0)=5$ m/s. The initial value of the disturbance is $V_d(0)=(1,0)^\top$.	
Based on the derived controller \eqref{u}, the  control gains  for vehicles $E_i$ are calculated as:
$
	K_i=\frac{1}{b_i}(0,a_{i})$,$ \quad {G_0}_i=\frac{1}{b_i}(g_2+g_1S_i-P_rS_i^2)
$,
$f_i=-\frac{1}{b_i}((k_1+k_2)Z_{2,i}+k_1k_2Z_{1,i})$
\subsubsection{Case1:	Within safe region initially}

To verify the safety and convergence properties, the target distances are set as the smaller value $s_{o1}=0.5$ and $s_{o2}=0.5$. According to \eqref{u}, the nominal controller for each vehicle $E_i$ is
		\begin{align}
			&U_i(t)=-\frac{1}{b}\big[k_1k_2x_{i,1}(t+D)+(k_1+k_2+a_i)x_{i,2}(t+D)\notag\\
			&+\big((k_1+k_2)g_1+g_2+g_1S\notag\\
			&-(k_1k_2\bar{P}_r+(k_1+k_2)\bar{P}_rS_r+\bar{P}_rS^2)\big)e^{S}V(t)\big]\label{usim}
		\end{align} 	
with the prediction $X_i(t+D)$ given by \eqref{predx}.
Considering the system parameter uncertainties $\Theta=\{a_1,\beta_1,b_1,D_1\}$, according to \eqref{ua}, the data-driven safe
 controller is derived from the nominal  controller \eqref{usim} based on the control design in Sec. \ref{seccon2}. 
 Besides,  for vehicle $E_2$, only the output state $Y(t)=x_{2,1}(t)$ is measurable and the system parameters $\Theta=\{a_2,\beta_1,D_2\}$ are unknown, thus the observer \eqref{ox}--\eqref{oy} is implemented to estimate the full state $X_2(t),V_d(t)$.  Using the estimated state $\hat{X}_2(t), \hat{V}_d(t)$ replacing the state $X_2(t), V_d(t)$ in \eqref{usim}, and added the compensated term $\hat\rho_e(t;\hat{\Theta}(t))$ in \eqref{rho2}   to mitigate the effects of observer error and parameter estimation error, according to Sec. \ref{secconobe}. The observer gains are selected as
	$L_{v,0}=0$,
	$\mathcal{L}_v=(-0.6,15)^\top$,
	$L_{x,0}=0$,
	$\mathcal{L}_x=(9,10)^\top$
  recalling \eqref{lgain}. Besides, the known bound of unknown states is given by  $x_{2,2}\in[2,6]$ and $V_{d,i}\in[-2,2]$ satisfying Assumptions \ref{assumboundv}, and we take $\hat{x}_{2,2}=2$ and $\hat{V}_{d}=0$ as the initial values of the observer.

\subsubsection{Case2:	Within unsafe region initially}
Here, the target distances are set as larger values $s_{o1}=12$ and $s_{o2}=8$,  where the barrier function  $\mathcal{H}_i(e_i(D),D)<0$, which are initially unsafe. Considering ${\varsigma}_i(t)$ given by \eqref{sigma}, \eqref{sigma2}, with design parameters $\bar{t}=1.5,\epsilon=5$, there is an additional term $\bar{\varsigma}_i(t)=-\frac{1}{b\vartheta_i}(k_1k_2{\varsigma}_i(t+D)+(k_1+k_2)\dot{\varsigma}_i(t+D)+\ddot{\varsigma}_i(t+D))$ added in \eqref{usim}.

For output-feedback systems, the compensation term $\hat\rho_e(t;\hat{\Theta}(t))$ is computed using observer error bounds, i.e., $\rho_{e_0}(t)$, $\rho_{e_1}(t)$ from \eqref{omee}, \eqref{omee1}. Following parameter identification at $t_f$, the state observer can be re-initialized with the current state estimates ($\hat{X}(t_f), \hat{V}_d(t_f)$) and updated using the now-identified system model. This can reduce conservatism at the cost of an increased computational burden.

\subsection{Simulation results}
\begin{figure}[!t]
	\centering
	\includegraphics[width=0.98\columnwidth]{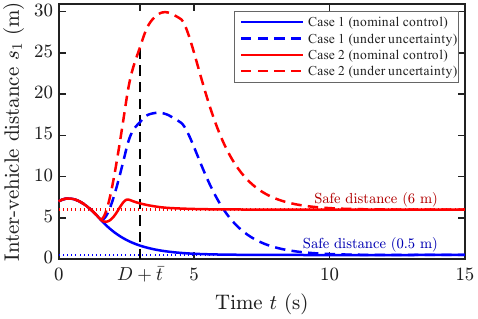}
	\caption{Distance between vehicles $E_1$ and $E_0$.}
	\label{fig1:spacing}
	\end{figure}
\begin{figure}[!t]
  \centering
  \begin{subfigure}[t]{0.5\columnwidth}
    \centering
    \includegraphics[width=\linewidth]{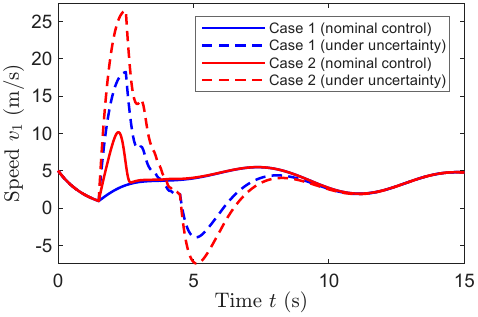}
    \caption{Speed $v_1(t)$}
    \label{fig1:speed}
  \end{subfigure}\hfill
  \begin{subfigure}[t]{0.5\columnwidth}
    \centering
    \includegraphics[width=\linewidth]{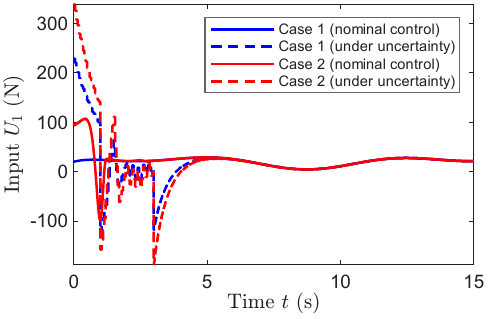}
    \caption{Input $U_1(t)$}
    \label{fig1:input}
  \end{subfigure}
  \caption{Speed and Control force of vehicle $E_1$.}
  \label{fig1:speed-input}
\end{figure}
\begin{figure}[!t]
	\centering
	\includegraphics[width=0.98\columnwidth]{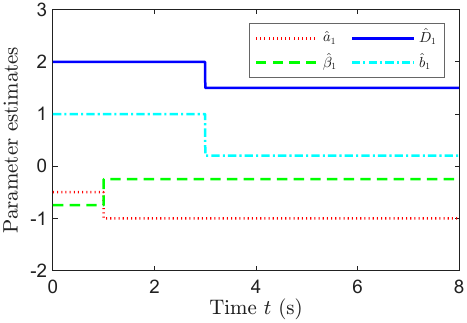}
	\caption{Estimates of the unknown parameters $(a_1,\beta_1,b_1,D_1)$ in vehicle $E_1$.}
	\label{fig1:est}
	\end{figure}

The simulation, including Cases 1 and 2, is performed using the finite difference method with a time step $\diff t=0.001$ and a space step $\diff x=0.01$. The iteration interval for unknown parameters is taken as $\diff _D=0.1$.

The simulation results of vehicle $E_1$ are shown in Figs. \ref{fig1:spacing}--\ref{fig1:est}, where the blue line represents the results in Case 1, and the red line represents the results in Case 2. The solid lines denote the nominal control \eqref{u}, while dashed lines denote the proposed data-driven safe control \eqref{ua}. 
As shown in Fig. \ref{fig1:spacing}, the distance between vehicles $E_1$ and $E_0$ converges to the desired distance ($s_{o1}=0.5$ m in Case 1, $s_{o1}=6$ m in Case 2). Moreover, in Case 1, the safe distance is ensured all the time. In Case 2, from an unsafe vehicle distance at $t=D$, i.e., the distance between the vehicles is less than the desired distance, vehicle distance is driven back and maintained in the safe distance within the predefined time $D+\bar{t}$. The vehicle's speed and control input are illustrated in Figs. \ref{fig1:speed}, \ref{fig1:input}, respectively. Fig. \ref{fig1:est} demonstrates that all unknown parameters $\{a_1, \beta_1, b_1, D_1\}$ converge to their true values within the finite time $t_f=3$s.

The simulation results of vehicle $E_2$ are shown in Figs. \ref{fig2:spacing}-- \ref{fig2:obs}. As illustrated by the blue line in Fig. \ref{fig2:spacing}, the inter-vehicle distance converges to the desired safe distance $s_{o2}=0.5$ m without breaching the safety constraint in Case 1.  The red line in Fig. \ref{fig2:spacing} for Case 2 shows that, starting from an unsafe initial distance, the inter-vehicle distance is regulated to the safe region by the preset time $D+\bar{t}$, and then safely converges to the desired distance $s_{o2}=8$ m. The results of the speed $v_2(t)$ and control force $U_2(t)$ are illustrated in Figs. \ref{fig2:speed}, \ref{fig2:input}.  The observer successfully estimates the true system states, as shown in  Figs. \ref{fig2:obsx1}, \ref{fig2:obsx2}, and the disturbance states, as shown in Fig. \ref{fig2:obsd1}. 
 The unknown parameters $\{a_2,\beta_1,D_2\}$ are exactly identified within the finite time $t_f=3$ s, as shown in Fig. \ref{fig2:est2}.

\begin{figure}[!t]
	\centering
	\includegraphics[width=0.98\columnwidth]{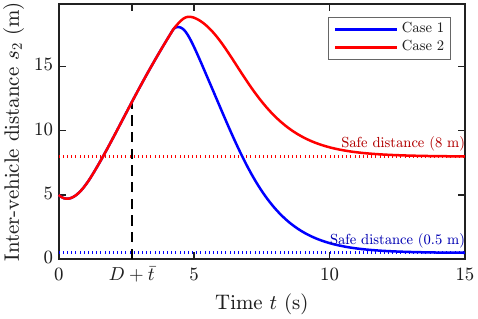}
	\caption{Distance between vehicles $E_2$ and $E_0$.}
	\label{fig2:spacing}
	\end{figure}
\begin{figure}[!t]
  \centering
  \begin{subfigure}[t]{0.49\columnwidth}
    \centering
    \includegraphics[width=\linewidth]{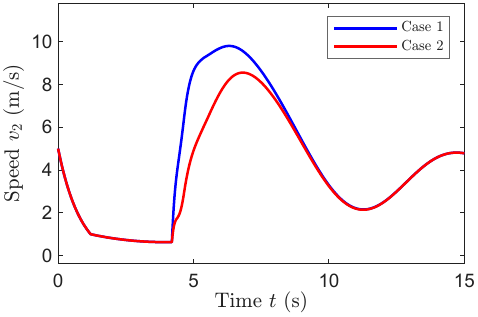}
    \caption{Speed $v_2(t)$}
    \label{fig2:speed}
  \end{subfigure}\hfill
  \begin{subfigure}[t]{0.49\columnwidth}
    \centering
    \includegraphics[width=\linewidth]{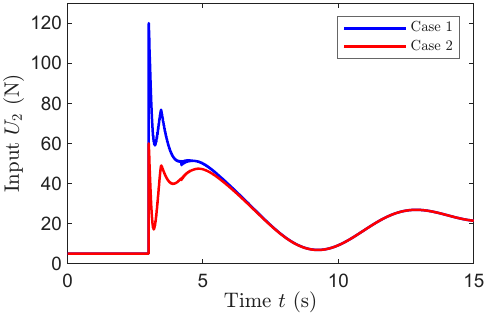}
    \caption{Input $U_2(t)$}
    \label{fig2:input}
  \end{subfigure}
  \caption{Speed and Control force of vehicle $E_2$.}
  \label{fig2:speed-input}
\end{figure}
\begin{figure}[!t]
  \centering
  \begin{subfigure}[t]{0.49\columnwidth}
    \centering
    \includegraphics[width=\linewidth]{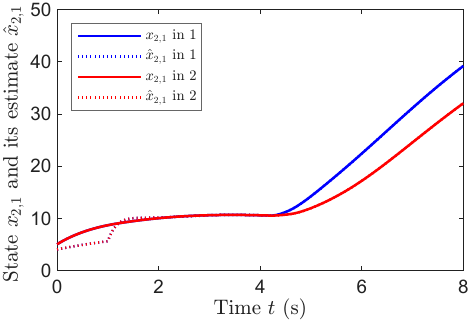}
    \caption{Estimates of $\hat{x}_{2,1}$}
    \label{fig2:obsx1}
  \end{subfigure}\hfill
  \begin{subfigure}[t]{0.49\columnwidth}
    \centering
    \includegraphics[width=\linewidth]{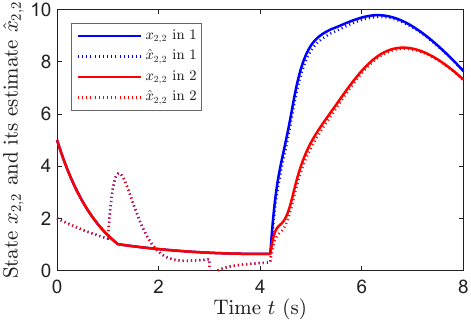}
    \caption{Estimates of $\hat{x}_{2,2}$}
    \label{fig2:obsx2}
  \end{subfigure}\\
 \begin{subfigure}[t]{0.49\columnwidth}
    \centering
    \includegraphics[width=\linewidth]{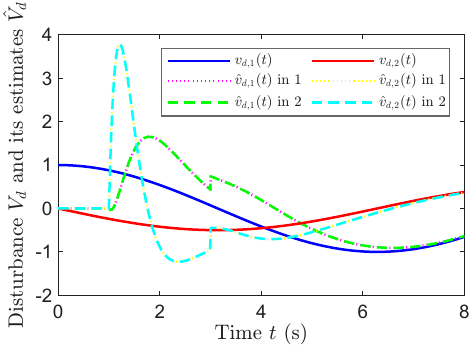}
    \caption{Estimates of $\hat{V}_d$}
    \label{fig2:obsd1}
  \end{subfigure}\hfill
  \begin{subfigure}[t]{0.49\columnwidth}
    \centering
    \includegraphics[width=\linewidth]{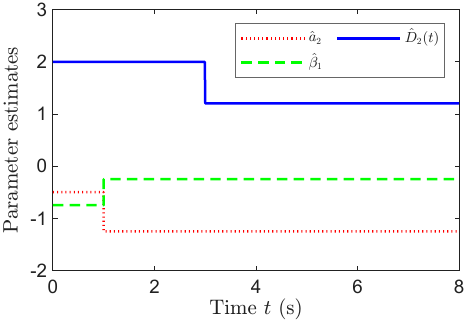}
    \caption{Estimates of the unknown parameters $(a_2,\beta_1,D_2)$.}
    \label{fig2:est2}
  \end{subfigure}
  \caption{Estimates of states and parameters.}
  \label{fig2:obs}
\end{figure}

\section{Conclusion and Future Work}\label{con}
This paper investigated the safe output regulation problem for strict-feedback linear systems, addressing challenges posed by substantial uncertainties, including unknown input delay, uncertain system dynamics, and external disturbances. A novel, data-driven, safe delay-compensated control design was developed by integrating a delay compensation mechanism, a Krylov DMD and BaLSI adaptive identifier, and a safety filter. 
For the more challenging scenario where only the furthest state as output is measurable, an observer was designed to estimate the unmeasured states and disturbances. Based on this, the output-feedback control scheme was constructed. The proposed control design guarantees that the output of the closed-loop system safely tracks the target trajectory, all states are bounded, and all unknown parameters are identified exactly in finite time. An application in safe vehicle platooning was presented to validate the theoretical results.

Future research will focus on extending the current results to systems with more stochastic uncertainties, such as stochastic noise and disturbance, and exploring the experimental validation of the proposed control scheme on practical systems.

\appendices
\numberwithin{equation}{subsection}

\begin{appendix}
\subsection{The calculation of the PDE transformation \eqref{wxt} and its inverse transformation} \label{pdetrans}

Substituting the time and spatial derivatives of the transformation \eqref{wxt} into \eqref{tar2}, and requiring the resulting identity to hold for all $u$, $X(t)$, and $V(t)$, yields the following conditions:
$
-q_x(x,y)=q_y(x,y),\label{cond1}  
q(x,0)=D\gamma(x)B,\label{cond2}
\dot{\gamma}(x)=D\gamma(x)A,\label{cond3}
\dot{\bar{\gamma}}(x)=D\bar{\gamma}(x)S+D\gamma(x)\bar{G},\label{cond4}
D\psi_t=\psi_x\label{cond5}.
$
Comparing \eqref{doth} with \eqref{tar1} yields the boundary condition
$
w(0,t)=u(0,t)+\frac{1}{\vartheta}f(\underline{z}_n(t),t)-\frac{1}{b}(\sigma_n+\bar{P}_rS^{n})V(t)+ KX(t)		
$.
By setting $t=0$ in \eqref{wxt} and comparing the result with the boundary condition above, we obtain the initial conditions
$
\gamma(0)=-K,\label{inicond1}
\bar\gamma(0)=\frac{1}{b}(\sigma_n+\bar{P}_rS^{n}),\label{inicond2}
\psi(0,t)=\frac{-1}{\vartheta(z_{1}(t),t)}f(\underline{z}_{n},t).\label{inicond3} 
$
Solving the ODEs $\gamma,\bar{\gamma}$ with the above initial conditions  yields $\gamma(x)=-Ke^{DAx}$,
\begin{equation}	
	\bar{\gamma}(x)=-\int_{0}^{x}DKe^{DAy}\bar{G}e^{DS(x-y)}\diff y
	+\frac{1}{b}(\sigma_n+\bar{P}_rS^{n})e^{DSx}.\label{bargam}
\end{equation}
Finally, considering the  general solution $Q(x-y)$  to $q(x,y)$, $\bar{\psi}(t-D+Dx)$ to $\psi$, we get 
\begin{align}
	q(x,y)&=-KDBe^{DA(x-y)},\\
	\psi(x,t)&=-\frac{f\left(\underline{z}_n(t+Dx),t+Dx\right)}{\vartheta(z_1(t+Dx),t+Dx)}.\label{solpsi}
\end{align}

The inverse transformation of \eqref{wxt} is given by:
\begin{align}
	u(x,t)=&w(x,t)+\int_0^x\bar{q}(x,y)w(y,t)\mathrm{~d}y+\gamma_1(x)X(t)\notag\\
	&+{\bar{\gamma}_1(x)}V(t)+\psi_1(x,t).\label{inverw}
\end{align}
Through similar calculations, its components are explicitly solved as $\bar{q}(x,y)=-DKBe^{D(A-BK)(x-y)}$, $\gamma_1(x)=-Ke^{D(A-BK)x}$, and $	\bar{\gamma}_1(x)=-G_0e^{DSx}+\int_{0}^{x}D\gamma_1(y)(\bar{G}-BG_0)e^{DS(x-y)}\diff y $. The remaining term, $\psi_1(x,t)$, is calculated based on $\gamma_1(x)$ and $\psi(x,t)$.

\subsection{The proof of Lemma \ref{lemmasafe2}}\label{appcbf}
Setting $t=D$  in \eqref{hi}, then one gets
\begin{align}
	h_{i+1}\left(\underline{z}_{i+1}\left(D\right), D\right)=&\sum_{j=1}^i \frac{\partial h_i}{\partial z_j} z_{j+1}\left(D\right)+\frac{\partial h_i}{\partial t}\notag\\
	&+k_i	h_{i}\left(\underline{z}_{i}\left(D\right), D\right).	\label{zd}
\end{align}
The above recursive formula \eqref{zd} is used to establish the inductive step : if $h_{i}\left(\underline{z}_{i}\left(D\right), D\right)$, then $h_{i+1}\left(\underline{z}_{i+1}\left(D\right), D\right)$ under the selection of parameters $k_i$ in \eqref{k}, \eqref{hk}. The base case, $h_1(z_1(D),D)>0$, is guaranteed by \eqref{h1}, \eqref{sigma}. It thus follows by induction that $h_{i}\left(\underline{z}_{i}\left(D\right), D\right)\ge0$ for all $i=1,\cdots,n$. Given these safe initial conditions and $w[t]\equiv0$ for $t\ge D$, the system structure described by \eqref{tar1}--\eqref{tar3} guarantees that non-negativity is maintained for all subsequent times, i.e., $h_{i}\left(\underline{z}_{i}\left(t\right), t\right)\ge0$ for all $t\in[D,\infty]$ and $i=1,\cdots,n$.

\subsection{The proof of Lemma \ref{lembound}} \label{appbound}
Considering the boundedness of  $\vartheta$ \eqref{var} concluded from Assumption \ref{assumh}, we know that $w[t]$, $H(t)$ are bounded all the time from the target system \eqref{tar1}--\eqref{tar3} utilizing the method of characteristics. Especially, the matrix $A_h$ is Hurwitz above the selection of $k_i$ \eqref{k} and $w[t]\equiv0$ for $t\ge D$, which further indicates that $|H(t)|$ is exponentially convergent to zero. From \eqref{h1}, \eqref{sigma}, Assumption \ref{assumh} and the boundedness of $h_1$, we obtain that $z_1(t)=e(t)$ is bounded. Differentiating  \eqref{hi} with respect to $z_{i-1}$, we have 
$\frac{\partial h_{i}}{\partial z_{i-1}}=\frac{\partial h_{i-1}}{\partial z_{i-2}}+\frac{\partial^2 h_{i-1}}{\partial z_{i-1}\partial t}+k_{i-1}\frac{\partial h_{i-1}}{\partial z_{i-1}}
=\frac{\partial h_{i-1}}{\partial z_{i-2}}+\frac{\partial^2 h_{1}}{\partial z_{1}\partial t}+k_{i-1}\frac{\partial h_{1}}{\partial z_{1}}$. Beginning from $\frac{\partial h_{2}}{\partial z_{1}} =\frac{\partial^2 h_{1}}{\partial z_{1}\partial t}+k_{1}\frac{\partial h_{1}}{\partial z_{1}}$, we obtain that
$\frac{\partial h_{i}}{\partial z_{i-1}}=\frac{\partial h_1}{\partial z_1}\sum_{j=1}^{i-1}k_j+(i-1)\frac{\partial h_1}{\partial z_1\partial t},\quad\forall i\geq2$ from the induction step. Recursively, we obtain $\sum_{i=1}^n\sum_{j=1}^i|\frac{\partial h_i}{\partial z_j}|\leq\mathcal{M}(|\frac{\partial h_1}{\partial z_1}|+\sum_{j=1}^{n-1}|\frac{\partial h_1}{\partial z_1\partial t^{n-1}}|)$ for some positive constant $\mathcal{M}$. Since $\varsigma$ has continuous derivatives of all orders recalling \eqref{h1}, \eqref{sigma} and Assumption \ref{assumh}, it further indicates that all $\frac{\partial h_i}{\partial z_j},1\leq j\leq i\leq n,i,j\in\mathbb{N}$ are bounded. The boundedness of $Z(t)$ is obtained from the boundedness of $\frac{\partial h_i}{\partial z_j}$,$H(t)$. Next, we have the boundedness of $|X(t)|$ from \eqref{Z}, together with the boundedness of $V(t)$ from \eqref{modv}. Applying the Cauchy-Schwarz inequality to the inverse transformation \eqref{inverw} given in Appendix \ref{pdetrans}, we have that $u[t]$ is bounded. Especially, as defined in \eqref{sigma}, $\varsigma\equiv0$ is ensured after a finite time $t=D+\bar{t}$, regardless of which case applies. Thus, we know that $e(t)$ is exponentially convergent to zero from the exponential convergence to zero of $h_1(z_1(t),t)$, recalling \eqref{h1} and Assumption \ref{assumh}.The proof  is complete.

\subsection{The proof of Lemma \ref{lemfull1}}\label{appfull1}
It readily follows from \eqref{qo} that $\operatorname{rank} Q_o=\tilde{n}$. Since $(\tilde{A},I)$ is controllable and $\tilde{A}$ is cyclic from Assumption \ref{assumeigen}, we know that $(\tilde{A},\tilde{X}(0))$ is controllable for almost all $\tilde{X}(0)\in\mathbb{R}^{\tilde{n}}$. It indicates that $\operatorname{rank} Q_c=\tilde{n}$ given in \eqref{qc} with Assumption \ref{assumtd}. Applying Sylvester's rank inequality and Rank of product inequality to \eqref{qoc},  we know that $	\operatorname{rank}	H_{\tilde{n}}(\tilde{X}) =\tilde{n} $. The proof is complete.

\subsection{The proof of Lemma \ref{lemeigen}}\label{appf}
Applying \eqref{qoc} with \eqref{kh}, \eqref{fcol} yields that $Q_o Q_c f=-Q_o \tilde{A}_d^{\tilde{n}} \tilde{X}(0)$, which is equivalent to
\begin{equation}
	Q_o( Q_c f+\tilde{A}_d^{\tilde{n}} \tilde{X}(0))=0. \label{qoe}
\end{equation} 
Given that $\operatorname{rank} H_{\tilde{n}}(\tilde{X})=\tilde{n}$, the observability matrix $Q_o$ has full column rank, i.e., $\operatorname{rank}(Q_o) = \tilde{n}$ and $\operatorname{rank}Q_c ={\tilde{n}}$ as well. Therefore, $Q_o$ has a trivial null space,
such that $Q_c f+\tilde{A}_d^{\tilde{n}} \tilde{X}(0)=0$ only holds for \eqref{qoe} and has a unique solution because $\det Q_c\neq0$. Combining with \eqref{qc}, rearranging gives 
\begin{equation}
	P_F(\tilde{A}_d) \tilde{X}(0)=0,\label{pf}
\end{equation}
where $P_F(\tilde{A}_d)=(\tilde{A}_d^{\tilde{n}}+f_{\tilde{n}-1} \tilde{A}_d^{\tilde{n}-1}+\ldots+f_1 \tilde{A}_d+f_0 I)$ and $P_F(\lambda)=\det(\lambda I-F)$ denotes the characteristic polynomial of the companion matrix $F$ given in \eqref{kh}. 
According to Assumption \ref{assumeigen}, the eigenvectors $\tilde{\nu}_i$ are linearly independent, so the matrix $\tilde{V}$ is invertible.
Using the eigendecomposition of the system, $\tilde{A}_d = \tilde{V} \hat{\Lambda} \tilde{V}^{-1}$ and $\tilde{X}(0)=\tilde{V}\underline{\tilde{\varphi}}(\tilde{X}(0))$ from \eqref{tilx}, we can rewrite \eqref{pf} as
\begin{equation}
	\tilde{V} P_F(\hat{\Lambda}) \tilde{V}^{-1} \tilde{V} \underline{\tilde{\varphi}}(\tilde{X}(0)) = \tilde{V} P_F(\hat{\Lambda}) \underline{\tilde{\varphi}}(\tilde{X}(0)) = 0, \label{pf2}
\end{equation}
where $\hat{\Lambda}=\operatorname{diag}(\hat{\lambda}_1,\cdots,\hat{\lambda}_n)$.
Because $\det Q_c\neq0$ and $ \det \tilde{V} \neq0$, we know that $	\tilde\varphi_i(\tilde{X}(0))\neq0, i =1,\cdots, \tilde{n}$ by using 
the PBH eigenvector tests in \eqref{eigenfun} , such that  only
$P_F(\hat{\Lambda}) = 0$ can hold in \eqref{pf2}. Therefore, the eigenvalues of $\tilde{A}_d$ are identical to the eigenvalues of $F$, i.e., 
$\hat{\lambda}_{i}=\mathrm{e}^{\tilde{\lambda}_{i}T_d}$, which gives $\tilde{\lambda}_i=\frac{\ln\hat{\lambda}_i}{T_d}$ for $i=1,\ldots,\tilde{n}$.
The Koopman eigenvalues are solved as $\tilde{\lambda}_i=\frac{\ln\hat{\lambda}_i}{T_d},\quad i=1,\ldots,\tilde{n} $.

Back to the Hankel matrix $H_{\tilde{n}}(\tilde{X})$ \eqref{qoc}, 
substituting  the state decomposition  \eqref{tilx} into \eqref{qo} obtains
$
H_{\tilde{n}}(\tilde{X})=Q_{o}\tilde{V}\left[\underline{\tilde{\varphi}}(\tilde{X}(0))\quad\hat{\Lambda}\underline{\tilde{\varphi}}(\tilde{X}(0))\quad\ldots\quad\hat{\Lambda}^{\tilde{n}-1}\underline{\tilde{\varphi}}(\tilde{X}(0))\right]
=Q_{o}\tilde{V}\mathrm{diag}(\tilde{\varphi}_{i}(\tilde{X}(0)))V_{F},
$
where $\mathrm{diag}(\tilde{\varphi}_{i}(\tilde{X}(0)))$ is the diagonal matrix of initial eigenfunction values, and $V_F=\begin{pmatrix}1_{\tilde{n}}&\operatorname{col}(\hat{\lambda}_i)&\ldots&\operatorname{col}(\hat{\lambda}_i^{\tilde{n}-1})\end{pmatrix}$. Especially, the Vandermonde matrix $V_F$ is the inverse modal matrix of the companion matrix $F$, which implies that $V_F\hat{\nu}_i=e_i$, where $e_i$ is the $i$-th standard basis vector.
Consequently, we can write
\begin{equation}
	H_{\tilde{n}}(\tilde{X})\hat{\nu}_{i}=Q_{o}\tilde{V}\mathrm{diag}(\tilde{\varphi}_{i}(\tilde{X}(0))){V_{F}\hat{\nu}_{i}}=\tilde{\varphi}_{i}(\tilde{X}(0))Q_{o}\tilde{v}_{i}.
\end{equation}
Given the structure of the $Q_o$ from \eqref{qo} which he first block-row of $Q_o\tilde{\nu}_i$ is $\tilde{\nu}_i$ itself, and  the constant $\tilde\varphi_i(\tilde{X}(0))\neq0, i =1,\cdots, \tilde{n}$, we  isolate the Koopman modes as
$	\tilde{\nu}_{i}=[I_{\tilde{n}}\quad 0]H_{\tilde{n}}(\tilde{X})\hat{\nu}_i,i=1,\cdots, \tilde{n}$.
The proof is complete.

\subsection{The proof of Lemma \ref{lemdmd}}\label{appdmd}
	Lemma \ref{lemeigen} establishes that we can compute the Koopman operator eigenpairs from the measurement data using DMD. For the linear system \eqref{tilA} and observable $\tilde{X}$ \eqref{tilx}, Koopman function \eqref{eigenfun}, Koopman eigenvalues and modes are also the eigenpairs of the system matrix $\tilde{A}$. Furthermore, Assumption \ref{assumeigen} states that $\tilde{A}$ is diagonalizable. Therefore, we exactly identify the system matrix $\tilde{A}$ from the eigendecomposition obtained in Lemma \ref{lemeigen} via \eqref{eq:tA}.
	 The estimated system parameters $\hat{\Theta}_1(\underline{D})=\{\mathcal{A},\mathcal{S}_d,\bar{\mathcal{G}}\}$ are then obtained by partitioning the reconstructed matrix $\tilde{A}$ according to its block structure defined in \eqref{tilA}, i.e., \eqref{matha}--\eqref{mathg}.

\subsection{The proof of Lemma \ref{lemfull2}}\label{appfull2}
For all $\tilde{\lambda}\in\operatorname{eig}(S_d)$, it is obtained from Assumption \ref{assumeigen} that the matrix $A-\tilde{\lambda} I $ is nonsingular. It allows the following rank calculation
	\begin{align}
		&\operatorname{rank}\begin{pmatrix}A-\tilde{\lambda} I&GP_d\\C&0\end{pmatrix}\notag\\
		&=\operatorname{rank}\begin{pmatrix}I&0\\-C(A-\tilde{\lambda} I)^{-1}&I\end{pmatrix}\begin{pmatrix}A-\tilde{\lambda} I&GP_d\\C&0\end{pmatrix}\notag\\
		&=\operatorname{rank}\begin{pmatrix}A-\tilde{\lambda} I&GP_d\\0&P_C(\tilde{\lambda})\end{pmatrix}
	\end{align}	
	where $P_C(\tilde{\lambda})=-C(A-\tilde{\lambda} I)^{-1}GP_d\in\mathbb{R}^{1\times n_d}$.	It yields with \eqref{rank1} that $\operatorname{rank} P_C(\tilde{\lambda}) =1,\forall\tilde{\lambda}\in\operatorname{eig}(S_d)$. Consider $C=[1,0,\cdots]$, we know that  the first element of the vector $P_C(\tilde{\lambda})$ is not zero for all $\tilde{\lambda}\in\operatorname{eig}(S_d)$. The eigenvectors of the companion matrix $S_d$ \eqref{a2} are of the form 
	$
	v_{d,i}=\begin{pmatrix}1,\tilde{\lambda}_i,	\tilde{\lambda}_i^2,\cdots,	\tilde{\lambda}_i^n\end{pmatrix}^\top   ,i=1,\cdots,n_d.
	$
	It follows that $P_C(\tilde{\lambda}_i)v_{d,i}\neq0$, $i=1,\cdots,n_d $. We further know from \eqref{nu}, \eqref{nux} that $\bar{C}\tilde{\nu}_i\neq0 $, for $i=1,\cdots,n_d $. 
	The observability of $(C,A)$ and the PBH test ensure $C\nu_i \neq 0$, which implies $\bar{C}\tilde{\nu}_i = C\nu_i \neq 0$ for $ i = n_d+1, \dots, \tilde{n}$. Since $\bar{C}\tilde{\nu}_i \neq 0$ for all $i = 1, \dots, \tilde{n}$, the PBH test confirms that  $(\bar{C}, \tilde{A})$ is observable. It follows from Assumption \ref{assumtd} that  $(\bar{C}, \tilde{A}_d)$ is observable as well. Consequently, the observability matrix $Q_o$ in \eqref{qo2} has full rank, i.e., $\operatorname{rank}(Q_o) = \tilde{n}$.

	By the same argument as in the proof of Lemma \ref{lemfull1}, we have that $\operatorname{rank}Q_c=\tilde{n}$.  Since both $Q_o$ and $Q_c$ defined in \eqref{qo2}, \eqref{qc} are nonsingular, their product, the Hankel matrix $	H_{\tilde{n}}( Y)$ \eqref{hankely}, is also nonsingular and thus has full rank, i.e., \eqref{ranhy}.

\subsection{The proof of Lemma \ref{lemdmd2}}\label{appdmd2}

Follow the same process in the proof of Lemma \ref{lemeigen}, the result \eqref{tillam} is directly obtained.

According to Assumption \ref{assumeigen} that all eigenvalues $\tilde{\lambda}_i$ are distinct, we can construct a Vandermonde matrix \eqref{eq:vandermonde_matrices} from them, serving as the eigenvector matrix $V_A$ for the companion matrix ${A}$. Combined with the diagonal matrix of eigenvalues $\Lambda_A = \operatorname{diag}(\tilde{\lambda}_{n_d}, \cdots, \tilde{\lambda}_{\tilde{n}})$, the matrix can be reconstructed using the formula $\mathcal{A} = V_A \Lambda V_A^{-1}$. 
Similarly, we can reconstruct the disturbance matrix $\mathcal{S}_d$ using its eigenvalues $\tilde{\lambda}_i, i =1,\cdots,n_d$ and the corresponding Vandermonde matrix \eqref{eq:vandermonde_matrices} as the eigenvector matrix $V_d$.

\end{appendix}

\section*{References}
   \bibliographystyle{bibsty}                     
   \bibliography{biblio}

\end{document}